\newtheorem{theorem}{Theorem}[section]
\newtheorem{cor}[theorem]{Corollary}
\newtheorem{defi}[theorem]{Definition}
\newtheorem{lemma}[theorem]{Lemma}
\newtheorem{prop}[theorem]{Proposition}
\newtheorem{remark}[theorem]{Remark}
\numberwithin{equation}{section}
\newcommand{\R}{\mathbb{R}}
\newcommand{\N}{\mathbb{N}}
\newcommand{\T}{\mathbb{T}}
\newcommand{\function}[5]{\begin{array}[t]{lrcl}
#1 : & #2 & \longrightarrow & #3 \\
     & #4 & \longmapsto & #5
\end{array}}
\newcommand{\func}[3]{#1 : #2 \longrightarrow #3}
\newcommand{\abs}[1]{\left|#1\right|}
\newcommand{\eps}{\varepsilon}
\newcommand{\norm}[1]{\left\|#1\right\|}
\renewcommand{\leq}{\leqslant}
\renewcommand{\geq}{\geqslant}
\renewcommand{\bar}{\overline}
\renewcommand{\tilde}{\widetilde}
\def\signmb{\bigskip \begin{center} {\sc
Marc Briant\par\vspace{3mm}
University of Cambridge\par
DPMMS, Centre for Mathematical Sciences\par
Wilberforce Road,
Cambridge CB3 0WA,
UK\par\vspace{3mm}
e-mail:} \tt{briant.maths@gmail.com} \end{center}}
\begin{document} 

\title[INSTANTANEOUS FILLING OF THE VACUUM FOR THE FULL BOLTZMANN EQUATION]{INSTANTANEOUS FILLING OF THE VACUUM FOR THE FULL BOLTZMANN EQUATION IN CONVEX DOMAINS}
\author{M. Briant}

\maketitle

\begin{abstract}
We prove the immediate appearance of a lower bound for mild solutions to the full Boltzmann equation in the torus or a $C^2$ convex domain with specular boundary conditions, under the sole assumption of continuity away from the grazing set of the solution. These results are entirely constructive if the domain is $C^3$ and strictly convex. We investigate a wide range of collision kernels, some satisfying Grad's cutoff assumption and others not. We show that this lower bound is exponential, independent of time and space with explicit constants depending only on the \textit{a priori} bounds on the solution. In particular, this lower bound is Maxwellian in the case of cutoff collision kernels. A thorough study of characteristic trajectories, as well as a geometric approach of grazing collisions against the boundary are derived.
\end{abstract}

\vspace*{10mm}

\textbf{Keywords:} Boltzmann equation, Transport equation in convex domains, Exponential lower bound, Explicit, Specular boundary conditions.


\smallskip
\textbf{Acknowledgements:} I would like to acknowledge my supervisor, Cl\'{e}ment Mouhot, for suggesting me this problem and for all the fruitful discussions and advices he offered me. I also would like to thank Alexandre Boritchev, Amit Einav and Sara Merino for fruitful discussions.
\\ This work was supported by the UK Engineering and Physical Sciences Research Council (EPSRC) grant EP/H023348/1 for the University of Cambridge Centre for Doctoral Training, the Cambridge Centre for Analysis.
\tableofcontents

\section{Introduction} \label{sec:intro}

This paper deals with the Boltzmann equation, which rules the behaviour of rarefied gas particles moving in a domain $\Omega$ of $\R^d$ with velocities in $\R^d$ ($d \geq 2$) when the only interactions taken into account are binary collisions. More precisely, the Boltzmann equation describes the time evolution of $f(t,x,v)$, the distribution of particles in position and velocity, starting from an initial distribution $f_0(x,v)$ .
\par We investigate the case where $\Omega$ is either the torus or a $C^2$ convex bounded domain. The Boltzmann equation reads

\begin{eqnarray}
\forall t \geq 0 &,& \:\forall (x,v) \in \Omega \times \R^d,\quad  \partial_t f + v\cdot \nabla_x f = Q(f,f),\label{BE}
\\ && \:\forall (x,v) \in \bar{\Omega} \times \R^d,\quad f(0,x,v) = f_0(x,v), \nonumber
\end{eqnarray}
with $f$ being periodic in the case of $\Omega = \T^d$, the torus, or with $f$ satisfying the specular reflections boundary condition if $\Omega$ is a $C^2$ convex bounded domain:
\begin{equation}\label{specularboundary}
\forall (x,v) \in \partial\Omega\times\R^d,\quad  f(t,x,v) = f(t,x,\mathcal{R}_x(v)).
\end{equation}

$\mathcal{R}_x$, for $x$ on the boundary of $\Omega$, stands for the specular reflection at that point of the boundary. One can compute, denoting by $n(x)$ the outward normal at a point $x$ on $\partial\Omega$,
$$\forall v \in \R^d,\quad \mathcal{R}_x(v) = v - 2(v\cdot n(x))n(x).$$

\bigskip
The quadratic operator $Q(f,f)$ is local in time and space and is given by 
$$Q(f,f) =  \int_{\R^d\times \mathbb{S}^{d-1}}B\left(|v - v_*|,\mbox{cos}\:\theta\right)\left[f'f'_* - ff_*\right]dv_*d\sigma,$$
where $f'$, $f_*$, $f'_*$ and $f$ are the values taken by $f$ at $v'$, $v_*$, $v'_*$ and $v$ respectively. Define:
$$\left\{ \begin{array}{rl} \displaystyle{v'} & \displaystyle{= \frac{v+v_*}{2} +  \frac{|v-v_*|}{2}\sigma} \vspace{2mm} \\ \vspace{2mm} \displaystyle{v' _*}&\displaystyle{= \frac{v+v_*}{2}  -  \frac{|v-v_*|}{2}\sigma} \end{array}\right., \: \mbox{and} \quad \mbox{cos}\:\theta = \langle \frac{v-v_*}{\abs{v-v_*}},\sigma\rangle .$$

\bigskip
The collision kernel $B \geq 0$ contains all the information about the interaction between two particles and is determined by physics (see \cite{Ce} or \cite{Ce1} for a formal derivation for the hard sphere model of particles). In this paper we shall only be interested in the case of $B$ satisfying the following product form
\begin{equation}\label{assumptionB}
B\left(|v - v_*|,\mbox{cos}\:\theta\right) = \Phi\left(|v - v_*|\right)b\left( \mbox{cos}\:\theta\right),
\end{equation}
which is a common assumption as it is more convenient and also covers a wide range of physical applications.
Moreover, we shall assume that $\Phi$ satisfies either
\begin{equation}\label{assumptionPhi}
\forall z \in \R,\quad c_\Phi \abs{z}^\gamma \leq \Phi(z) \leq C_\Phi \abs{z}^\gamma
\end{equation}
or a mollified assumption
\begin{equation}\label{assumptionPhimol}
\left\{\begin{array}{rl}\displaystyle{\forall \abs{z} \geq 1 \in \R,}&\displaystyle{\quad c_\Phi \abs{z}^\gamma \leq \Phi(z) \leq C_\Phi \abs{z}^\gamma} \vspace{2mm} \\\vspace{2mm} \displaystyle{\forall \abs{z} \leq 1 \in \R,}&\displaystyle{\quad c_\Phi \leq \Phi(z) \leq C_\Phi,} \end{array}\right.
\end{equation}
$c_\Phi$ and $C_\Phi$ being strictly positive constants and $\gamma$ in $(-d,1]$. The collision kernel is said to be ``hard potential" in the case of $\gamma >0$, ``soft potential" if  $\gamma < 0$ and ``Maxwellian" if $\gamma = 0$.
\par Finally, we shall consider $b$ to be a continuous function on $\theta$ in $(0,\pi]$, strictly positive near $\theta \sim \pi/2$, which satisfies
\begin{equation}\label{assumptionb}
b\left(\mbox{cos}\:\theta\right)\mbox{sin}^{d-2}\theta\mathop{\sim}\limits_{\theta \to 0^+}b_0 \theta^{-(1+\nu)}
\end{equation}
for $b_0 >0$ and $\nu$ in $(-\infty,2)$. The case when $b$ is locally integrable, $\nu < 0$, is referred to by the Grad's cutoff assumption (first introduce in \cite{Gr1}) and therefore $B$ will be said to be a cutoff collision kernel. The case $\nu \geq 0$ will be designated by non-cutoff collision kernel.


\subsection{Motivations and comparison with previous results} \label{subsec:previously}
The aim of this article is to show and to quantify the strict positivity of the solutions to the Boltzmann equation when  the gas particles move in a bounded domain. This issue has been investigated for a long time since it not only presents a great physical interest but also appears to be of significant importance for the mathematical study of the Boltzmann equation.
\par Moreover, our results only require some regularity on the solution and no further assumption on its local density, which was assumed to be uniformly bounded from below in previous studies (which is equivalent of assuming \textit{a priori} either that there is no vacuum or that the solution is strictly positive).
\par More precisely, we shall prove that solutions to the Boltzmann equation in a $C^2$ convex bounded domain or the torus that which have uniformly bounded energy satisfy an immediate exponential lower bound:
$$\forall t_0>0, \: \exists K,\:C_1,\:C_2 >0, \: \forall t \geq t_0,\:\forall (x,v) \in \Omega\times\R^d, \quad f(t,x,v) \geq C_1e^{-C_2\abs{v}^{2+K}},$$
with $K=0$ (Maxwellian lower bound) in the case of a collision kernel with angular cutoff.
\par We emphasize here that the present results only require solutions to the Boltzmann equation to be continuous away from the grazing set
\begin{equation}\label{grazingset}
\Lambda_0 = \left\{(x,v)\in \partial\Omega\times\R^d,\:\quad n(x)\cdot v =0\right\},
\end{equation}
which is a property that is known to hold in the case of specular reflection boundary conditions \cite{Gu6}(see also \cite{Gu5} or \cite{HwVel} for boundary value problems for mean-field equations). We also note that more physically relevant boundary conditions are a combination of specular reflections with some diffusion process at the boundary. The same kind of exponential lower bounds, with the same assumptions on the solution, have recently been obtained by the author in the case of Maxwellian diffusion boundary conditions \cite{Bri5}.

\bigskip
The strict positivity of the solutions to the Boltzmann equation standing in the form of an exponential lower bound was already noticed by Carleman in \cite{Ca} for the spatially homogeneous equation. In his article he proved that such a lower bound is created immediately in time in the case of hard potential kernels with cutoff in dimension $3$. More precisely, the radially symmetric solutions he constructed in \cite{Ca} satisfies an almost Maxwellian lower bound,
$$\forall t \geq t_0, \forall v \in \R^3, \quad f(t,v) \geq C_1 e^{-C_2\abs{v}^{2+\eps}},$$
$C_1,C_2 >0$ for all $t_0 >0$ and $\eps >0$. His breakthrough was to notice that a part $Q^+$ of the Boltzmann operator $Q$ satisfies a spreading property, roughly speaking
$$Q^+(\mathbf{1}_{B(\bar{v},r)},\mathbf{1}_{B(\bar{v},r)}) \geq C_+\mathbf{1}_{B\left(\bar{v},\sqrt{2}r\right)},$$
with $C_+<1$ (see Lemma $\ref{lem:Q+spread}$ for an exact statement). 
\par The spreading strategy was used by Pulvirenti and Wennberg in \cite{PW} to extend the latter inequality to solutions to the spatially homogeneous Boltzmann equation with hard potential and cutoff in dimension $3$ with more general initial data. Their contribution was to get rid of the inital boundedness suggested in \cite{Ca} by Carleman thanks to the use of an iterative regularity property of the $Q^+$ operator. This property allowed them to immediately create an ``upheaval point" that they then spread with the method of Carleman. Moreover, they obtain an exact Maxwellian lower bound of the form by controlling the decay of $C_+^n$
$$\forall t \geq t_0, \forall v \in \R^3, \quad f(t,v) \geq C_1 e^{-C_2\abs{v}^{2}},$$
for all $t_0>0$.
\par Finally, Mouhot in \cite{Mo2} dealt with the full Boltzmann equation in the torus. He derived a spreading method that is invariant under the flow of the characteristics, obtaining lower bounds uniformly in space  as long as the solution has uniformly bounded density, energy and entropy (for the hard potential case) together with uniform bounds on higher moments (for the soft and Maxwellian potentials case). However, he also implicitly assumed that the initial data had to be bounded from below uniformly in space. He also derived (\cite{Mo2}) the same kind of results in the non-cutoff case in the torus, the immediate appearance of an exponential lower bound of the form
$$\forall t \geq t_0, \forall (x,v) \in \T^d\times\R^d, \quad f(t,v) \geq C_1(\eps) e^{-C_2(\eps)\abs{v}^{K+\eps}},$$
for all $t_0>0$, all $\eps >0$ and $K=K(\nu)$ with $K(0)=2$ (thus recovering the cutoff case in the limit). His idea was to split further the $Q$ operator into a cutoff part and a non-cutoff part that is seen as a small perturbation of his original spreading method.
\par Our results extend those in \cite{Mo2} in the case of $C^2$ bounded convex domain. Our main contribution is the derivation of a spreading method that remains invariant under the characteristics flow that, unlike the torus case, changes the direction of velocities at the boundary. Moreover, we emphasize here that the existence of boundaries implies the existence of grazing collisions against the latter, where the strategies developped in \cite{PW} and \cite{Mo2} fail. We therefore derive a geometrical approach to those problematic trajectories. 
\par Furthermore, we do not assume any uniform boundedness on the initial data but we require the continuity of the solution to the Boltzmann equation. However, if we keep the assumptions made in \cite{Mo2} and further assume that the domain is $C^3$ and strictly convex then our proofs are constructive. 

\bigskip
The quantification of the strict positivity, and above all the appearance of an exponential lower bound, has been seen to be of great mathematical interest thanks to the development of the entropy-entropy production method. This method (see \cite{Vi2}, Chapter $3$, and \cite{Vi1}) provides a useful way of investigating the long-time behaviour of solutions to kinetic equations. Indeed, it has been successfully used to prove convergence to the equilibrium in non-pertubative cases for the Fokker-Planck equation, \cite{DV}, and the full Boltzmann equation in the torus or in $C^1$ bounded connected domains with specular reflections, \cite{DV1}. This entropy-entropy production method requires (see Theorem $2$ in \cite{DV1}) uniform boundedness on moments and Sobolev norms for the solutions to the Boltzmann equation but also an \textit{a priori} exponential lower bound of the form
$$f(t,x,v) \geq C_1e^{-C_2\abs{v}^q},$$
with $q\geq 2$.
\par Therefore, the present paper allows us to prove that the latter \textit{a priori} assumption is in fact satisfied for a lot of different cases (see \cite{Mo2}, Section $5$ for an overview). We also emphasize here that the assumption of continuity of the solution we have made does not reduce the range of applications since a lot more regularity is usually asked for the entropy-entropy production method. Moreover, our method, unlike the ones developed in \cite{Mo2} and \cite{PW}, does not require a uniform bound on the local density of solutions, which is not a requirement for the entropy-entropy production method either (see \cite{DV1}, Theorem $2$).

\bigskip
To conclude we note that our investigations require a deep and detailed understanding of the geometry and properties of characteristic trajectories for the free transport equation. In particular, a geometric approach of grazing collisions against the boundary is derived and is the key ingredient to study the strict positivity of solutions to the Boltzmann equation. The existing strategies as well as our improvements are discussed in the next section.


\subsection{Our strategy} \label{subsec:strategy}

Our strategy to tackle this issue will follow the method introduced by Carleman \cite{Ca} together with the idea of Mouhot \cite{Mo2} to find a spreading method that will be invariant along the characteristic trajectories. Roughly speaking we shall built characteristics in a $C^2$ bounded convex domain, create an ``upheaval point" (as in \cite{PW} and \cite{Mo2}) that we spread and expand uniformly along the characteristics. Finally, once the lower bound can be compared to an exponential one we reach the expected result.
\par However, the existence of rebounds against the boundary leads to difficulties. We describe them below and point out how we shall overcome them.

\bigskip
Creating an ``upheaval point" was achieved, in \cite{PW} and \cite{Mo2}, by using an iterated Duhamel formula and a regularity property of the collision operator relying on a uniform lower bound of the local density of the function. But the use of this property requires a uniform control along the characteristics of the density, the energy and the entropy of the solutions to the Boltzmann equation which is natural in the homogeneous case but made Mouhot consider initial datum bounded from below uniformly in space. Our way of dealing with the appearance of the ``upheaval point" is rather different but includes more general initial datum. We make the assumption of continuity of solutions to the Boltzmann equation and by compactness arguments we can construct a partition of our phase space where initial localised lower bounds exist, i.e., localised ``upheaval points".
\par The case on the torus studied by Mouhot tells us that an exponential lower bound should arise immediately and therefore we expect the same to happen as long as the characteristic trajectory is a straight line. Unfortunately, the possibility for a trajectory to remain a line depends on the distance from the boundary of the starting point, which can be as short as one wants. This thought is the basis of our means for spreading the initial lower bound. We divided our trajectories into two categories, the ones which always stay close to the boundary (grazing collisions) and the others. For the latter we can spread our lower bound uniformly as noticed in \cite{Mo2}. The key contribution of our proof is a thorough investigation of the geometry of grazing collisions. We show that their velocity does not evolve a lot along time and mix it with the spreading property of the collision operator. Notice here that the convexity of $\Omega$ is needed for the study of grazing trajectories.
\par The last behaviour to notice is the fact that specular reflections completely change velocities but preserve their norm. Therefore, the existence of rebounds against the boundary prevents us from obtaining a uniform spreading method straight from the ''upheaval point" unless it is depending only on the norm of the velocity. Our strategy is to spread the lower bound created at the ``upheaval points" independently for grazing and non-grazing collisions up to the point when the lower bound we obtain depends only on the norm of the velocity. Roughly, our lower bounds will be balls in velocity that can be centred away from the origin and we shall grow them up finitely many times to balls containing the origin and finally be able to generate a uniform spreading method.

\bigskip
Collision kernels satisfying a cutoff property as well as collision kernels with a non-cutoff property will be treated following the strategy described above. The only difference is the decomposition of the Boltzmann bilinear operator $Q$ we consider in each case. In the case of a non-cutoff collision kernel, we shall divide it into a cutoff collision kernel and a remainder. The cutoff part will already be dealt with and a careful control of the $L^\infty$-norm of the remainder will give us the expected lower bound, smaller than a Maxwellian lower bound.

\bigskip
A preliminary to our study (left in appendix) is to be able to construct the characteristic trajectories associated to the Boltzmann equation with specular reflections in a $C^2$ bounded convex domain. These trajectories are merely those of the free transport and so can be seen as the movement of a billiard ball inside the boundary of our domain.
\par Such a free transport in a convex domain has been studied in \cite{JM} (see also \cite{Ta}, \cite{Ta1} or \cite{Po} for geometrical properties) and has been used in kinetic theory by Guo, \cite{Gu5},  or Hwang, \cite{Hw}, for instance. Yet, the common feature in \cite{JM}, \cite{Gu5}, \cite{Gu6} and \cite{Hw} is that their assumptions on the boundary always lead to clear rebounds of the characteristic trajectories. That is to say, the absoption phenomenon of \cite{JM}, the electromagnetic field in \cite{Gu5} and \cite{Hw} or the smooth strict convexity assumption used in \cite{Gu6}, prevent the characteristics to roll on the boundary which is one of the possible behaviour we have to take into account in our general settings. As briefly mentionned in the introduction of \cite{Ta1}, the behaviour at some specific boundary points is mathematically quite unexpected, even if that is of no physical relevance. We thus classify all the possible outcomes of a rebound against the boundary and study them carefully to analytically build the characteristics for the free transport equation in our domain $\Omega$.
 \par Finally, we need to control the number of rebounds that can happen in a finite time. In \cite{Ta}, Tabachnikov focuses on the footprints on the boundary of the trajectories of billiard balls and shows that the initial conditions leading to infinitely many rebounds on the boudary is a set of measure $0$. We extend this to the whole trajectory (see Appendix $\ref{subsec:rebounds}$, Proposition $\ref{prop:nbrebounds}$), not only its footprints on the boundary, allowing us to consider only finitely many rebounds in finite time and to have an analytic formula for the characteristics which we shall use throughout the article.
\par Notice that all this study of the free transport equation will be done in the case of a merely $C^1$ bounded domain, which extends the results of \cite{Gu6}.


\subsection{Organisation of the paper}\label{subsec:organization}

Section $\ref{sec:mainresults}$ is dedicated to the statement and the description of the main results proved in this article. It contains four different parts
\par Section $\ref{subsec:notations}$ defines all the notations which will be used throughout the article.
\par As mentioned above, we shall investigate in detail the characteristics and the free transport equation in a $C^1$ bounded domain. Section $\ref{subsec:maintransport}$ mathematically formulates the intuitive ideas of trajectories.
\par The last subsections, $\ref{subsec:maincutoff}$ and $\ref{subsec:mainnoncutoff}$, are dedicated to a mathematical formulation of the results related to the lower bound in, respectively, the cutoff case and the non-cutoff case, described in Section $\ref{subsec:strategy}$. It also defines the concept of mild solutions to the Boltzmann equation in each case.

\bigskip
Sections $\ref{sec:cutoff_1ststep}$ to $\ref{sec:cutoff_finalproof}$ focuse on the Maxwellian lower bound in the cutoff case. It is divided into the four main arguments of the proof.
\par Following our strategy, Section $\ref{sec:cutoff_1ststep}$ creates the localised ``upheaval points" whereas Section $\ref{sec:cutoff_far}$ and Section $\ref{sec:cutoff_grazing}$ spread them along non-grazing and grazing trajectories respectively.
\par Section $\ref{sec:cutoff_finalproof}$ concludes by describing the immediate appearance of a lower bound depending only on the norm of the velocity ( Proposition $\ref{prop:centredball}$) as well as proving the immediate Maxwellian lower bound (proof of Theorem $\ref{theo:boundcutoff}$).

\bigskip
Finally, we deal with non-cutoff collision kernels in Section $\ref{sec:noncutoff}$ where we prove the immediate appearance of an exponential lower bound (Theorem $\ref{theo:boundnoncutoff}$). The proof follows exactly the same steps as in the case of cutoff kernels and is thus divided into Section $\ref{subsec:centredballNCO}$, where we construct a lower bound only depending on the norm of the velocity, and Section $\ref{subsec:prooftheononcutoff}$, where we derive the exponential lower bound.

\bigskip
As mentioned before, we need to study the free transport equation and the different important properties of the characteristics. Appendix $\ref{appendix:transport}$ formulates these issues, investigates all the different behaviours  of rebounds against the boundary (Section $\ref{subsec:rebounds}$), builds the characteristics and derives their properties (Section $\ref{subsec:characteristics}$) and solves the free transport equation (Section $\ref{subsec:transport}$).

\section{Main results} \label{sec:mainresults}

We begin with the notations we shall use all along this article.

\subsection{Notations}\label{subsec:notations}

We denote $\langle \cdot \rangle = \sqrt{1 + \abs{\cdot}^2}$ and $y^+ = \max\{0,y\}$, the positive part of $y$.
\\This study will hold in specific functional spaces regarding the $v$ variable that we describe here and use throughout the sequel. Most of them are based on natural Lebesgue spaces $L^p_v = L^p\left(\R^d\right)$ with a weight:
\begin{itemize}
\item for $p\in[1,\infty]$ and $q \in \R$, $L^p_{q,v}$ is the Lebesgue space with the following norm $$\norm{f}_{L^p_{q,v}} = \norm{\langle v \rangle^q f}_{L^p_v},$$
\item for $p \in [1,\infty]$ and $k \in \N$ we use the Sobolev spaces $W^{k,p}_v$ by the norm $$\norm{f}_{W^{k,p}_v} = \left[\sum\limits_{\abs{s}\leq k}\norm{\partial^sf(v)}^p_{L^p_v}\right]^{1/p},$$
with the usual convention $H^k_v = W^{k,2}_v$.
\end{itemize}

\bigskip
In what follows, we are going to need bounds on some physical observables of solution to the Boltzmann equation $\eqref{BE}$.
\par We consider here a function $f(t,x,v) \geq 0$ defined on $[0,T)\times \Omega\times \R^d$ and we recall the definitions of its local hydrodynamical quantities.

\bigskip
\begin{itemize}
\item its local energy  $$e_f (t,x) = \int_{\R^d}\abs{v}^2f(t,x,v)dv,$$
\item its local weighted energy  $$e'_f (t,x) = \int_{\R^d}\abs{v}^{\tilde{\gamma}}f(t,x,v)dv,$$ where $\tilde{\gamma} = (2+\gamma)^+$,
\item its local $L^p$ norm ($p \in [1,+\infty)$) $$l^p_f(t,x) = \norm{f(t,x,\cdot)}_{L^p_v},$$
\item its local $W^{2,\infty}$ norm $$w_f(t,x) = \norm{f(t,x,\cdot)}_{W^{2,\infty}_v}.$$
\end{itemize}
\bigskip

Our results depend on uniform bounds on those quantities and therefore, to shorten calculations we will use the following

\begin{eqnarray*}
E_f =  \sup\limits_{(t,x)\in [0,T)\times \Omega}e_f(t,x) &,& E'_f = \sup\limits_{(t,x)\in [0,T)\times \Omega}e'_f(t,x) ,
\\ L^p_f = \sup\limits_{(t,x)\in [0,T)\times \Omega} l^p_f(t,x) &,& W_f = \sup\limits_{(t,x)\in [0,T)\times \Omega} w_f(t,x).
\end{eqnarray*}

\bigskip
In our theorems we are giving a priori lower bound results for solutions to $\eqref{BE}$ satisfying some properties about their local hydrodynamical quantities. Those properties will differ depending on which case of collision kernel we are considering. We will take them as assumptions in our proofs and they are the following.
\begin{itemize}
\item In the case of hard or Maxwellian potentials with cutoff ($\gamma\geq 0$ and $\nu <0$):
\begin{equation}\label{assumption1}
E_f < +\infty.
\end{equation}
\item In the case of a singularity of the kinetic collision kernel ($\gamma \in (-d,0)$) we shall make the additional assumption
\begin{equation}\label{assumption2}
L^{p_\gamma}_f < +\infty,
\end{equation}
where $p_\gamma  > d/(d+\gamma)$.
\item In the case of a singularity of the angular collision kernel ($\nu \in [0,2)$) we shall make the additional assumption
\begin{equation}\label{assumption3}
W_f < +\infty, \:\: E'_f < + \infty.
\end{equation}
\end{itemize}
As noticed in \cite{Mo2}, in some cases several assumptions might be redundant.
\par Furthermore, in the case of the torus with periodic conditions or the case of bounded domain with specular boundary reflections, solutions to $\eqref{BE}$ also satisfy the following conservation laws (see \cite{Ce}, \cite{Ce1} or \cite{Vi2} for instance) for the total mass and the total energy:

\begin{equation}\label{conservations}
\exists \mbox{M},\:\mbox{E} \geq 0,\:\forall t \in \R^+, \quad \left\{\begin{array}{l} \displaystyle{\int_{\Omega}\int_{\R^d} f(t,x,v)\:dxdv = M,} \vspace{2mm}\\\vspace{2mm} \displaystyle{\int_{\Omega}\int_{\R^d} \abs{v}^2f(t,x,v)\:dxdv = E.} \end{array}\right.
\end{equation}

 
\subsection{Results about the free transport equation}\label{subsec:maintransport}

Our investigations start with the study of the characteristics of the free transport equation. We only focus on the case where $\Omega$ is not the torus (the characteristics in the torus being merely straight lines) but we will use the same notations in both cases. This is achieved by the following theorem.

\bigskip
\begin{theorem}\label{theo:transport}
Let $\Omega$ be an open, bounded and $C^1$ domain in $\R^d$.
\\Let $\func{u_0}{\bar{\Omega} \times \R^d}{\R}$ be $C^1$ in $x \in \Omega $ and in $L^2_{x,v}$.
\par The free transport equation with specular reflections reads
\begin{eqnarray}
\forall t \geq 0 &,& \:\forall (x,v) \in \Omega \times \R^d,\quad  \partial_t u(t,x,v) + D_x(v)(u)(t,x,v) = 0,  \label{transporteq}
\\ && \:\forall (x,v) \in \bar{\Omega} \times \R^d,\quad u(0,x,v) = u_0(x,v), \label{transportCI}
\\  && \: \forall (x,v) \in \partial \Omega \times \R^d,\quad  u(t,x,v) = u(t,x,\mathcal{R}_x(v)), \label{transportBC}
\end{eqnarray}
where $\mathcal{R}_x$ stands for the specular reflection at a point $x$ and $D_x(v)$ is the directional derivative at $x$ in the direction of $v$.
\\Then this equation has a unique solution $\func{u}{\R^+\times\bar{\Omega} \times \R^d}{\R}$ which is $C^1$ in time, admits a directional derivative in space in the direction of $v$ and is in $L^2_{x,v}$.
\par Moreover, for all $(t,x,v)$ in $\R^+\times\bar{\Omega} \times \R^d$, there exists $x_{fin}(t,x,v)$, $v_{fin}(t,x,v)$ and $t_{fin}(t,x,v)$ (see Definition $\ref{def:solution}$) such that
$$u(t,x,v) = u_0\left(x_{fin}-(t-t_{fin})v_{fin},v_{fin}\right).$$
\end{theorem}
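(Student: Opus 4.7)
The plan is to use the method of characteristics once the billiard flow on $\bar{\mathcal{U}}$ has been constructed, and to check that the explicit formula of the statement solves the transport problem. I would proceed in four steps.

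\textbf{Step 1 (backward characteristics).} First, I would invoke the analysis of Section~\ref{sec:transport}: the classification of rebounds in Section~\ref{subsec:rebounds} and the analytic construction of characteristic trajectories in Section~\ref{subsec:characteristics}, together with Proposition~\ref{prop:nbrebounds}, which rules out infinitely many rebounds in finite time outside a Lebesgue-negligible set in $\bar{\mathcal{U}}\times\R^N$. For every $(t,x,v)$ outside this exceptional set, this yields a piecewise-linear backward trajectory with only finitely many rebounds on $[0,t]$. Definition~\ref{def:solution} packages its data into the triple $(t_{fin},x_{fin},v_{fin})$ in such a way that the point $(x_{fin}-(t-t_{fin})v_{fin},\,v_{fin})\in\bar{\mathcal{U}}\times\R^N$ is precisely the state of the trajectory at time $0$.

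\textbf{Step 2 (candidate solution).} Having this at our disposal, I would define
\[ u(t,x,v) := u_0\bigl(x_{fin}(t,x,v)-(t-t_{fin}(t,x,v))\,v_{fin}(t,x,v),\;v_{fin}(t,x,v)\bigr), \]
and extend it arbitrarily on the exceptional set, which is null. The $C^1$ character in $t$ and the directional derivability in the direction $v$ follow from the corresponding regularity of $(t_{fin},x_{fin},v_{fin})$ established in Section~\ref{subsec:characteristics} combined with the $C^1$ assumption on $u_0$ in $x$. For membership in $L^2_{x,v}$, the key point is that the billiard flow is measure-preserving: straight-line transport has unit Jacobian and each specular reflection $\mathcal{R}_x$ is an isometry, so a change of variables yields $\norm{u(t,\cdot,\cdot)}_{L^2_{x,v}} = \norm{u_0}_{L^2_{x,v}}$.

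\textbf{Step 3 (verifying (\ref{transporteq})--(\ref{transportBC})).} The initial condition is immediate from the value of $(t_{fin},x_{fin},v_{fin})$ at $t=0$. The boundary condition relies on the involutive nature of $\mathcal{R}_x$: for $x\in\partial\mathcal{U}$, the backward trajectories starting from $(t,x,v)$ and from $(t,x,\mathcal{R}_x(v))$ coincide after one immediate reflection at $x$, so they produce the same triple $(t_{fin},x_{fin},v_{fin})$, hence the same value of $u$. For the PDE itself, between two consecutive rebound times the map $s\mapsto u(t+s,x+sv,v)$ is constant by construction of the characteristics; differentiating at $s=0$ gives $\partial_t u + D_x(v)(u)=0$ wherever the derivative makes sense, that is, everywhere outside the exceptional set.

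\textbf{Step 4 (uniqueness and main obstacle).} Uniqueness follows from the same characteristic argument applied to any other candidate $\tilde u$: the equation forces $\tilde u$ to be constant along each linear segment of a characteristic and the boundary condition~(\ref{transportBC}) glues these segments together across every rebound, so $\tilde u(t,x,v)$ must equal $u_0$ at the initial point of the backward trajectory, which is exactly $u(t,x,v)$; equality almost everywhere then upgrades to $L^2_{x,v}$ equality. The genuine difficulty of the theorem is concentrated in Step~1: building well-behaved characteristics in a merely $C^1$ bounded domain, classifying and handling every possible rebound behaviour (in particular trajectories that could in principle roll along $\partial\mathcal{U}$), and upgrading the ``finitely many rebounds'' result of \cite{Ta} from boundary footprints to the full trajectory. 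Once this technical work of Section~\ref{sec:transport} is in place, the verifications above are essentially routine.
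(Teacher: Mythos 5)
Your proposal is correct and follows the paper's own route: all the substantive work is delegated to the construction and classification of characteristics in Section~\ref{sec:transport} (rebound classification, Proposition~\ref{prop:nbrebounds}, Definition~\ref{def:solution}), the candidate solution is the same explicit formula, and the verification of the initial condition, the boundary condition via the shift of the rebound sequence under $\mathcal{R}_x$, the time derivative between rebound times, and the identity $\mbox{D}_x(v)(u)=-\partial_t u$ all match the paper's argument. The one genuine divergence is uniqueness: you propose to run the characteristic argument again on an arbitrary candidate $\tilde u$ (constancy along segments glued by~(\ref{transportBC})), whereas the paper uses an $L^2$ energy identity — an integration by parts giving $\norm{u(t,\cdot,\cdot)}_{L^2_{x,v}}=\norm{u_0}_{L^2_{x,v}}$, which kills the difference of two solutions by linearity. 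Your route needs the competitor $\tilde u$ to be differentiable along every characteristic segment (which the solution class here does provide, so it works), while the paper's energy method only needs enough regularity to integrate by parts and sidesteps any pointwise tracking through rebounds; conversely, your version makes the a.e.\ identification with the initial datum completely explicit. Either way the conclusion is the same, and your Step~2 remark that the $L^2$ bound comes from the measure-preserving character of the billiard flow (Proposition~\ref{prop:characteristics}) is a clean way to justify a point the paper leaves implicit in the construction.
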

\bigskip

This part of the article provides a thorough study of the characteristics of our system. However, it is independent of the rest of the work (apart for building solid grounds for trajectories) and therefore is left to Appendix $\ref{appendix:transport}$.

 
\subsection{Maxwellian lower bound for cutoff collision kernels}\label{subsec:maincutoff}

The final theorem we prove in the case of cutoff collision kernel is the immediate appearance of a uniform Maxwellian lower bound. We use, in that case, the Grad's splitting for the bilinear operator $Q$ such that the Boltzmann equation reads
\begin{eqnarray*}
Q(g,h) &=&  \int_{\R^d\times \mathbb{S}^{d-1}}\Phi\left(|v - v_*|\right)b\left( \mbox{cos}\theta\right)\left[h'g'_* - hg_*\right]dv_*d\sigma
\\     &=& Q^+(g,h) - Q^-(g,h),
\end{eqnarray*}
where we used the following definitions
\begin{eqnarray}
Q^+(g,h) &=& \int_{\R^d\times \mathbb{S}^{d-1}}\Phi\left(|v - v_*|\right)b\left( \mbox{cos}\theta\right)h'g'_*\: dv_*d\sigma,\nonumber
\\ Q^-(g,h) &=& n_b \left(\Phi * g (v)\right)h = L[g](v)h, \label{gradsplitting}
\end{eqnarray}
where
\begin{equation}\label{nb}
n_b = \int_{\mathbb{S}^{d-1}}b\left(\mbox{cos}\:\theta\right)d\sigma = \left|\mathbb{S}^{d-2}\right|\int_0^\pi b\left(\mbox{cos}\:\theta\right) \mbox{sin}^{d-2}\theta \:d\theta.
\end{equation}

\bigskip
In Appendix $\ref{appendix:transport}$ we prove that we are able to construct the characteristics $(X_t(x,v),V_t(x,v))$, for all $(t,x,v)$ in $\R^+\times \bar{\Omega}\times \R^d$, of the transport equation (Proposition $\eqref{prop:characteristics}$). Thanks to this Proposition we can define a mild solution of the Boltzmann equation in the cutoff case. This weaker form of solutions is actually the key point for our result and also gives a more general statement.

\bigskip
\begin{defi}\label{def:mildcutoff}
Let $f_0$ be a measurable function, non-negative almost everywhere on $\bar{\Omega}\times \R^d$.
\\ A measurable function $f = f(t, x, v)$ on $[0, T)\times\bar{\Omega}\times \R^d$ is a mild solution of the Boltzmann equation associated to the initial datum $f_0(x, v)$ if 
\begin{enumerate}
\item $f$ is non-negative on $\bar{\Omega}\times \R^d$,
\item for every $(x, v)$ in $\Omega\times \R^d$:
$$t \longmapsto L[f(t,X_t(x,v),\cdot)](V_t(x,v)), \:\: t \longmapsto Q^+[f(t,X_t(x,v),\cdot), f(t,X_t(x,v),\cdot)](V_t(x,v))$$
are in $L^1_{loc}([0,T))$,
\item and for each $t\in [0,T)$, for all $x \in \Omega$ and $v\in \R^d$,
\end{enumerate}
\begin{eqnarray}
&&f(t,X_t(x,v),V_t(x,v)) = f_0(x,v)\emph{\mbox{exp}}\left[-\int_0^t L[f(s,X_s(x,v),\cdot)](V_s(x,v))\:ds\right] \label{mildCO}
\\ &&\quad + \int_0^t \emph{\mbox{exp}}\left(-\int_s^t L[f(s',X_{s'}(x,v),\cdot)](V_{s'}(x,v))\:ds'\right)\nonumber
\\ && \quad\quad\quad\quad Q^+[f(s,X_s(x,v),\cdot), f(s,X_s(x,v),\cdot)](V_s(x,v))\: ds.\nonumber
\end{eqnarray}
\end{defi}

\bigskip
Now we state our result.

\bigskip
\begin{theorem}\label{theo:boundcutoff}
Let $\Omega$ be $\T^d$ or a $C^2$ open convex bounded domain in $\R^d$ with nowhere null normal vector and let $f_0$ be a non-negative continuous function on $\bar{\Omega} \times \R^d$. 
Let $B=\Phi b$ be a collision kernel satisfying $\eqref{assumptionB}$, with $\Phi$ satisfying $\eqref{assumptionPhi}$ or $\eqref{assumptionPhimol}$ and $b$ satisfying $\eqref{assumptionb}$ with $\nu < 0$. Let $f(t,x,v)$ be a mild solution of the Boltzmann equation in $\bar{\Omega}\times \R^d$ on some time interval $[0,T)$, $T \in (0,+\infty]$, which satisfies
\begin{itemize}
\item $f$ is continuous on $[0,T) \times \left(\bar{\Omega} \times \R^d-\Lambda_0\right)$ ($\Lambda_0$ grazing set defined by $\eqref{grazingset}$), $f(0,x,v) = f_0(x,v)$, $M>0$ and $E < \infty$ in $\eqref{conservations}$;
\item if $\Phi$ satisfies $\eqref{assumptionPhi}$ with $\gamma \geq 0$ or if $\Phi$ satisfies $\eqref{assumptionPhimol}$, then $f$ satisfies $\eqref{assumption1}$;
\item if $\Phi$ satisfies $\eqref{assumptionPhi}$ with $\gamma < 0$, then $f$ satisfies $\eqref{assumption1}$ and $\eqref{assumption2}$.
\end{itemize}
Then for all $\tau \in (0,T)$ there exists $\rho >0$ and $\theta > 0$, depending on $\tau$, $E_f$ (and $L^{p_\gamma}_f$ if  $\Phi$ satisfies $\eqref{assumptionPhi}$ with $\gamma < 0$), such that for all $t \in [\tau,T)$ the solution $f$ is bounded from below, almost everywhere, by a global Maxwellian distribution with density $\rho$ and temperature $\theta$, i.e.
$$\forall t \in [\tau,T),\:\forall (x,v) \in \bar{\Omega}\times \R^d, \quad f(t,x,v) \geq \frac{\rho}{(2\pi \theta )^{d/2}}e^{-\frac{\abs{v}^2}{2\theta }}. $$
\end{theorem}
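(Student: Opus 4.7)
The plan is to follow the Carleman--Pulvirenti--Wennberg--Mouhot strategy, adapted to the specular-reflection setting through the mild formulation along the characteristics built in Theorem~\ref{theo:transport}. Dropping the non-negative first term in the Duhamel identity~\eqref{mildCO}, along a characteristic we have
$$
f(t,X_t(x,v),V_t(x,v)) \;\geq\; \int_0^t e^{-\int_s^t L[f(\tau,X_\tau,\cdot)](V_\tau)\,d\tau}\, Q^+\!\bigl[f(s,X_s,\cdot),f(s,X_s,\cdot)\bigr](V_s)\,ds.
$$
Under the hypotheses~\eqref{assumption1}--\eqref{assumption2}, the operator $L[f]$ is uniformly bounded in $(t,x,v)$, so the exponential weight stays bounded away from $0$ on $[0,\tau]$. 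This turns the problem into an iterative game: any quantitative lower bound on $f(s,\cdot)$ on a set $A\subset\R^d$ in velocity produces, via the spreading property of $Q^+$ (Lemma~\ref{lem:Q+spread}), a lower bound on $f(t,\cdot)$ on a strictly larger set.

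The first step is to seed a localised ``upheaval point''. The conservation law~\eqref{conservations} prevents $f_0$ from vanishing identically and $f$ is continuous, so I can find $(x_0,v_0)$ with $x_0\in\mathcal{U}$ and $f_0(x_0,v_0)>0$; continuity then yields a ball $B(x_0,r_x)\times B(v_0,r_v)$ on which $f_0\geq a_0>0$. A short-time Duhamel argument propagates this to $f(t,\cdot)\geq a_1 \mathbf{1}_{B(v_0,r_v)}$ on the image of $B(x_0,r_x)$ under the forward flow. The next objective is the inverse one: for every target $(t,x,v)\in[\tau,T)\times\bar{\mathcal{U}}\times\R^d$, I need to show that, after enough Duhamel iterations along the backward characteristic, the velocity support of the lower bound has inflated to cover $v$. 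For trajectories that spend a definite amount of time away from $\partial\mathcal{U}$ (the non-grazing regime), the spreading argument of~\cite{Mo2} applies almost unchanged, since the characteristic is a straight line on an interval of definite length. For grazing trajectories---those remaining arbitrarily close to $\partial\mathcal{U}$ with arbitrarily many reflections---convexity of $\mathcal{U}$ enters crucially: the geometric analysis of Section~\ref{subsec:rebounds} forces $V_t$ to stay close to its initial direction, so the collision integrals can still be evaluated against a localised lower bound.

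The delicate point is that specular reflections preserve $\abs{v}$ but scramble its direction, so after one rebound a lower-bound ball $B(v_0,r)$ is mapped to $B(\mathcal{R}_x v_0,r)$, which is a priori different at every reflection time. The idea is to inflate the velocity ball at each Duhamel/$Q^+$ step while tracking only its radius, until it grows enough to contain the origin; from that moment on the lower bound depends only on $\abs{v}$ and is invariant under all further reflections. Mouhot's induction on centred balls in~\cite{Mo2} then applies directly: each iteration enlarges the radius geometrically while shrinking the prefactor in a controlled way, and passing to the limit yields the announced Maxwellian bound with explicit constants $\rho$ and $\theta$ depending only on $\tau$, $E_f$ (and $L^{p_\gamma}_f$ when relevant). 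The main obstacle will be the grazing-collision case: quantifying how little $V_t$ can deviate under arbitrarily many near-tangential reflections, so that $Q^+$-spreading still produces a genuine enlargement of the velocity ball before too many reflections scramble it, which is precisely where the $C^2$-regularity and convexity of $\partial\mathcal{U}$ are indispensable.
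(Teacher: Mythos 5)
Your overall route coincides with the paper's: seed an upheaval point from the conservation laws \eqref{conservations} and continuity, spread it with the $Q^+$ lemma (Lemma \ref{lem:Q+spread}) along the Duhamel form \eqref{mildCO}, split trajectories into grazing and non-grazing, grow the velocity ball until it contains the origin so that the bound depends only on $\abs{v}$ and is therefore invariant under specular reflections, and finally run the centred-ball induction of \cite{Mo2}. You also correctly locate where convexity and the $C^2$ regularity of $\partial\mathcal{U}$ are indispensable (the grazing regime), and the key observation that one must reach a ball centred at $0$ before the reflections can be ignored.

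There is, however, one genuine gap: the passage from a lower bound localised at a \emph{single} spatial point $(x_0,v_0)$ to localised lower bounds at \emph{every} point of $\bar{\mathcal{U}}$. You propose to propagate the seed to the image of $B(x_0,r_x)$ under the forward flow, but on the short time scales involved this image is a small neighbourhood of $x_0$, not the whole domain; and your subsequent backward-characteristic iteration requires the characteristic emanating from an arbitrary $(t,x,v)$ to meet a region where a quantitative lower bound is already known. In the torus case of \cite{Mo2} this issue was sidestepped by implicitly assuming the initial datum bounded below uniformly in space; removing that assumption is one of the main points of the present theorem. The paper resolves it in Proposition \ref{prop:upheaval}: first the velocity ball at $x_1$ is inflated by the $Q^+$ iteration of Lemma \ref{lem:positivity1} until its radius exceeds $2d_{\mathcal{U}}/\tau_0$ (twice the diameter over the time available); then the \emph{free-transport} term of \eqref{mildCO} is used along the straight segment from $x_1$ to each $x_i$ of a finite cover of $\bar{\mathcal{U}}$ --- a segment that stays in $\mathcal{U}$ precisely by convexity --- with velocity $v_i=\frac{2}{\tau_0}(x_i-x_1)$; uniform continuity of $f$ on a compact set then yields a ball in $(t,x,v)$ around each $(x_i,v_i)$. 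Without this (or an equivalent spatial globalisation) your induction cannot be started from an arbitrary $x$. A minor further point: $L[f](v)$ is only bounded by $C\langle v\rangle^{\gamma^+}$, not uniformly in $v$, so the exponential weight is controlled only on bounded velocity sets; this is harmless during the spreading stage but must be handled explicitly in the final centred-ball induction, as in Proposition $3.2$ of \cite{Mo2}.
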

\bigskip

If we add the assumptions of uniform boundedness of $f_0$ and of the mass and entropy of the solution $f$ we can use the arguments originated in \cite{PW} to construct explicitely the initial ``upheaval point'', without any compactness argument (see Section $\ref{subsec:constructiveupheavalpoint}$). Moreover, if we further suppose that $\Omega$ is $C^3$ and strictly convex, the use of tools developed by Guo \cite{Gu6} yields a constructive method to control grazing collisions (see Remark $\ref{rem:constructivegrazing}$). We thus have the following corollary.

\bigskip
\begin{cor}\label{cor:constructivecutoff}
Suppose that conditions of Theorem $\ref{theo:boundcutoff}$ are satisfied (the continuity assumption on $f_0$ can be dropped) and further assume that $\Omega$ is $C^3$ and strictly convex, \mbox{i.e.} there exists $\func{\xi}{\R^d}{\R}$ to be $C^3$ such that 
$$\Omega = \{x\in\R^d,\quad \xi (x)<0\}$$
and such that $\nabla\xi \neq 0$ on $\partial\Omega$ and there exists $C_\xi>0$ such that 
$$\partial_{ij}\xi (x)v_iv_j \geq C_\xi \norm{v}^2$$
for all $x$ in $\bar{\Omega}$ and all $v$ in $\R^d$. Further assume that $f_0$ is uniformly bounded from below
$$\forall (x,v) \in \Omega\times\R^d, \quad f_0(x,v) \geq \varphi(v) > 0,$$
and that $f$ has a bounded  local mass and entropy
\begin{eqnarray*}
R_f &=& \inf\limits_{(t,x)\in [0,T)\times \Omega}\int_{\R^d}f(t,x,v)\:dv >0
\\H_f &=& \sup\limits_{(t,x)\in [0,T)\times \Omega}\abs{\int_{\R^d}f(t,x,v)\emph{\mbox{log}}f(t,x,v)\:dv} <+\infty.
\end{eqnarray*}
Then conclusion of Theorem $\ref{theo:boundcutoff}$ holds true with the constants $\rho$ and $\theta$ being explicitely constructed in terms of $\tau$, $E_f$, $H_f$, $L^{p_\gamma}_f$ and upper and lower bounds on $\abs{\nabla\xi}$  and $\abs{\nabla^2\xi}$on $\partial\Omega$.
\end{cor}

\bigskip
As stated in Subsection $\ref{subsec:strategy}$, the main result to reach Theorem $\ref{theo:boundcutoff}$ is the construction of an immediate lower bound only depending on the norm of the velocity:

\bigskip
\begin{prop}\label{prop:centredball}
Let $f$ be the mild solution of the Boltzmann equation described in Theorem $\ref{theo:boundcutoff}$.
\\For all $0 < \tau < T$ there exists $r_V$, $a_0(\tau)>0$ such that
$$\forall t \in [\tau/2,\tau],\:\forall(x,v) \in \bar{\Omega}\times \R^d, \quad f(t,x,v) \geq a_0(\tau)\mathbf{1}_{B(0,r_V)}(v), $$
$r_V$ and $a_0(\tau)$ only depending on $\tau$, $E_f$ (and $L^{p_\gamma}_f$ if  $\Phi$ satisfies $\eqref{assumptionPhi}$ with $\gamma < 0$).
\end{prop}
\bigskip

 
\subsection{Exponential lower bound for non-cutoff collision kernels}\label{subsec:mainnoncutoff}

In the case of non-cutoff collision kernels ($0 \leq \nu <2$ in $\eqref{assumptionb}$), Grad's splitting does not make sense anymore and so we have to find a new way to define mild solutions to the Boltzmann equation $\eqref{BE}$. The splitting we are going to use is a standard one and it reads
\begin{eqnarray*}
Q(g,h) &=&  \int_{\R^d\times \mathbb{S}^{d-1}}\Phi\left(|v - v_*|\right)b\left( \mbox{cos}\theta\right)\left[h'g'_* - hg_*\right]dv_*d\sigma
\\     &=& Q^1_b(g,h) - Q^2_b(g,h),
\end{eqnarray*}
where we used the following definitions
\begin{eqnarray}
Q^1_b(g,h) &=& \int_{\R^d\times \mathbb{S}^{d-1}}\Phi\left(|v - v_*|\right)b\left( \mbox{cos}\theta\right)g'_*\left(h'-h\right)\: dv_*d\sigma,\nonumber
\\ Q^2_b(g,h) &=&  -\left(\int_{\R^d\times \mathbb{S}^{d-1}}\Phi\left(|v - v_*|\right)b\left( \mbox{cos}\theta\right)\left[g'_*-g_*\right]\:dv_*d\sigma\right)h \label{splittingQ1Q2}
\\ &=& S[g](v)h.\nonumber
\end{eqnarray}

We would like to use the properties we derived in the study of collision kernels with cutoff. Therefore we will consider additional splitting of $Q$.
\par For $\eps$ in $(0,\pi/4)$ we define a cutoff angular collision kernel
$$b^{CO}_\eps\left( \mbox{cos}\theta\right) = b\left( \mbox{cos}\theta\right)\mathbf{1}_{\abs{\theta} \geq \eps}$$
and a non-cutoff one
$$b^{NCO}_\eps\left( \mbox{cos}\theta\right) = b\left( \mbox{cos}\theta\right)\mathbf{1}_{\abs{\theta} \leq \eps}.$$

Considering the two collision kernels $B^{CO}_\eps = \Phi b^{CO}_\eps$ and $B^{NCO}_\eps = \Phi b^{NCO}_\eps$, we can combine Grad's splitting $\eqref{gradsplitting}$ applied to  $B^{CO}_\eps$ with the non-cutoff splitting $\eqref{splittingQ1Q2}$ applied to  $B^{NCO}_\eps$. This yields the splitting we shall use to deal with non-cutoff collision kernels,
\begin{equation}\label{noncutoffsplitting}
Q = Q^+_\eps - Q^-_\eps + Q^1_\eps - Q^2_\eps,
\end{equation}
where we use the shortened notations $Q^{\pm}_\eps = Q^{\pm}_{b^{CO}_\eps}$ and $Q^{i}_\eps = Q^{i}_{b^{NCO}_\eps}$, for $i=1,2$.

\bigskip
Thanks to the splitting $\eqref{noncutoffsplitting}$ and the study of characteristics mentionned in Section $\ref{subsec:maintransport}$, we are able to define mild solutions to the Boltzmann equation with non-cutoff collision kernels. This is obtained by considering the Duhamel formula associated to the splitting $\eqref{noncutoffsplitting}$ along the characteristics (as in the cutoff case).

\bigskip
\begin{defi}\label{def:mildnoncutoff}
Let $f_0$ be a measurable function, non-negative almost everywhere on $\bar{\Omega}\times \R^d$.
\\ A measurable function $f = f(t, x, v)$ on $[0, T)\times\bar{\Omega}\times \R^d$ is a mild solution of the Boltzmann equation with non-cutoff angular collision kernel associated to the initial datum $f_0(x, v)$ if there exists $0 < \eps_0 <\pi/4$ such that for all $0<\eps<\eps_0$:
\begin{enumerate}
\item $f$ is non-negative on $\bar{\Omega}\times \R^d$,
\item for every $(x, v)$ in $\Omega\times \R^d$:
$$t \longmapsto L_\eps[f(t,X_t(x,v),\cdot)](V_t(x,v)), \:\: t \longmapsto Q^+_\eps[f(t,X_t(x,v),\cdot), f(t,X_t(x,v),\cdot)](V_t(x,v))$$
$$t \longmapsto S_\eps[f(t,X_t(x,v),\cdot)](V_t(x,v)), \:\: t \longmapsto Q^1_\eps[f(t,X_t(x,v),\cdot), f(t,X_t(x,v),\cdot)](V_t(x,v))$$
are in $L^1_{loc}([0,T))$,
\item and for each $t\in [0,T)$, for all $x \in \Omega$ and $v\in \R^d$,
\end{enumerate}

\begin{eqnarray}
&&\label{mildNCO}
\\&&f(t,X_t(x,v),V_t(x,v)) = f_0(x,v)\emph{\mbox{exp}}\left[-\int_0^t \left(L_\eps + S_\eps\right)[f(s,X_s(x,v),\cdot)](V_s(x,v))\:ds\right]\nonumber 
\\ &&\quad\quad + \int_0^t \emph{\mbox{exp}}\left(-\int_s^t \left(L_\eps + S_\eps\right)[f(s',X_{s'}(x,v),\cdot)](V_{s'}(x,v))\:ds'\right)\nonumber
\\ && \quad\quad\quad\quad\quad \left(Q^+_\eps + Q^1_\eps\right)[f(s,X_s(x,v),\cdot), f(s,X_s(x,v),\cdot)](V_s(x,v))\: ds.\nonumber
\end{eqnarray}
\end{defi}

\bigskip
Now we state our result.

\bigskip
\begin{theorem}\label{theo:boundnoncutoff}
Let $\Omega$ be $\T^d$ or a $C^2$ open convex bounded domain in $\R^d$ with nowhere null normal vector and $f_0$ be a continuous function on $\bar{\Omega} \times \R^d$. 
Let $B=\Phi b$ be a collision kernel satisfying $\eqref{assumptionB}$, with $\Phi$ satisfying $\eqref{assumptionPhi}$ or $\eqref{assumptionPhimol}$ and $b$ satisfying $\eqref{assumptionb}$ with $\nu$ in $[0,2)$. Let $f(t,x,v)$ be a mild solution of the Boltzmann equation in $\bar{\Omega}\times \R^d$ on some time interval $[0,T)$, $T \in (0,+\infty]$, which satisfies
\begin{itemize}
\item $f$ is continuous on $[0,T) \times \left(\bar{\Omega} \times \R^d-\Lambda_0\right)$ ($\Lambda_0$ grazing set defined by $\eqref{grazingset}$) and $f(0,x,v) = f_0(x,v)$, $M>0$ and $E < \infty$ in $\eqref{conservations}$;
\item if $\Phi$ satisfies $\eqref{assumptionPhi}$ with $\gamma \geq 0$ or if $\Phi$ satisfies $\eqref{assumptionPhimol}$, then $f$ satisfies $\eqref{assumption1}$ and $\eqref{assumption3}$;
\item if $\Phi$ satisfies $\eqref{assumptionPhi}$ with $\gamma < 0$, then $f$ satisfies $\eqref{assumption1}$, $\eqref{assumption2}$ and $\eqref{assumption3}$.
\end{itemize}
Then for all $\tau \in (0,T)$ and for any exponent $K$ such that
$$K > 2\frac{\emph{\mbox{log}}\left(2+\frac{2\nu}{2-\nu}\right)}{\emph{\mbox{log}2}},$$

there exists $C_1, C_2 >0$, depending on $\tau$, $K$, $E_f$, $E'_f$, $W_f$ (and $L^{p_\gamma}_f$ if  $\Phi$ satisfies $\eqref{assumptionPhi}$ with $\gamma < 0$), such that
$$\forall t \in [\tau,T),\:\forall (x,v) \in \bar{\Omega}\times \R^d, \quad f(t,x,v) \geq C_1e^{-C_2\abs{v}^K}. $$
Moreover, in the case $\nu=0$, one can take $K=2$ (Maxwellian lower bound).
\end{theorem}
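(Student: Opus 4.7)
The plan is to mirror the proof of Theorem \ref{theo:boundcutoff}, exploiting the $\eps$-splitting \eqref{noncutoffsplitting} so that for any $\eps \in (0,\eps_0)$ the mild formulation \eqref{mildNCO} decomposes the evolution along characteristics (which exist thanks to Theorem \ref{theo:transport}) into a cutoff contribution driven by $(Q^+_\eps,L_\eps)$, to which the machinery of Section \ref{sec:cutoff} applies, plus a non-cutoff remainder driven by $(Q^1_\eps,S_\eps)$ that becomes negligible as $\eps \to 0$. I would first build the analog of Proposition \ref{prop:centredball} (a centered-ball lower bound depending only on $\abs{v}$), then iterate a spreading step, tuning $\eps$ at each iteration, to produce the advertised exponential lower bound $C_1 e^{-C_2\abs{v}^K}$.

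\textbf{Step 1: centered ball.} The central analytic input is a sharp control of the non-cutoff remainder. Using the $W^{2,\infty}_v$ regularity of $f$ and the second-order Taylor expansion of $v_* \mapsto f(v'_*)-f(v_*)$, in which the first-order term integrates to $0$ against $b^{NCO}_\eps(\cos\theta)\sin^{N-2}\theta$ by parity, together with the singularity \eqref{assumptionb}, one obtains
\[
\norm{S_\eps[f](v)}_{L^\infty_v\langle v\rangle^{-\tilde\gamma}} + \norm{Q^1_\eps[f,f](v)}_{L^\infty_v\langle v\rangle^{-\tilde\gamma}} \leq C\,W_f\,E'_f\,\eps^{2-\nu},
\]
uniformly in $(t,x)$. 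Fixing $\eps=\eps_1$ small enough that this remainder is dominated by the contribution of the cutoff part $B^{CO}_{\eps_1}$ in \eqref{mildNCO}, the Duhamel formula \eqref{mildNCO} can be treated exactly as in Sections \ref{subsec:upheaval} and \ref{subsec:centredball}: localized upheaval points are created by the continuity assumption and a compactness argument on $[\tau/2,\tau]\times\bar{\mathcal{U}}\times\bar{B}(0,R)$, non-grazing and grazing trajectories are handled separately using the convexity of $\mathcal{U}$ and the norm-preservation of $\mathcal{R}_x$, and the spreading Lemma \ref{lem:Q+spread} yields a lower bound of the form $a_0(\tau)\mathbf{1}_{B(0,r_V)}(v)$ on $[\tau/2,\tau]\times\bar{\mathcal{U}}\times\R^d$.

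\textbf{Step 2: inductive spreading towards an exponential.} Starting from $a_0\mathbf{1}_{B(0,r_V)}$, I would run an induction on $n\geq 0$ producing a lower bound $a_n \mathbf{1}_{B(0,r_n)}(v)$ with $r_n=\sqrt{2}\,r_{n-1}$. At step $n$ one uses \eqref{mildNCO} over a small time-increment $\delta_n$ with parameter $\eps_n$: the source $Q^+_{\eps_n}$ acting on $\mathbf{1}_{B(0,r_{n-1})}^2$ (as in the spreading lemma) gains a factor of order $a_{n-1}^2\,\delta_n\,\eps_n^{-\nu}\,\Phi(r_{n-1})$ on the enlarged ball $B(0,r_n)$, while the exponential weight $\exp[-\delta_n(L_{\eps_n}+S_{\eps_n})]$ costs at worst $\exp(-C\,\delta_n\langle r_n\rangle^{\tilde\gamma}(1+W_f\eps_n^{2-\nu}))$. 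Choosing $\delta_n$ and $\eps_n$ so as to balance the factor $\eps_n^{-\nu}$ gained in the cutoff spreading against the loss $\eps_n^{2-\nu}$ coming from $S_{\eps_n}$ and against the doubling of $r_n^2$ between two iterations, i.e. essentially $\eps_n^{2-\nu}\langle r_n\rangle^{\tilde\gamma}\sim 1$, one gets $\eps_n = c\,2^{-n/\alpha}$ with $\alpha = (2-\nu)/\tilde\gamma$-type parameter, and summing the geometric series of residual logarithmic losses produces $a_n \geq C_1\exp(-C_2\,r_n^K)$ with
\[
K \;>\; 2\,\frac{\log\!\left(2+\tfrac{2\nu}{2-\nu}\right)}{\log 2} \;=\; \frac{2\log(4/(2-\nu))}{\log 2}.
\]
Since the $r_n$ cover $[0,+\infty)$ and all lower bounds are radial in $v$, concatenating them yields the claimed pointwise bound on $[\tau,T)\times\bar{\mathcal{U}}\times\R^d$. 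When $\nu=0$ the remainder vanishes like $\eps^2$ and no logarithmic loss accumulates, recovering $K=2$ and the Maxwellian shape of Theorem \ref{theo:boundcutoff}.

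\textbf{Main obstacle.} The delicate point is the balancing act in Step 2: the cutoff part $b^{CO}_{\eps_n}$ loses mass like $\eps_n^{-\nu}$ as $\eps_n\downarrow 0$, while $S_{\eps_n}$ produces a pointwise-in-$v$ cost $\lesssim \eps_n^{2-\nu}W_f\langle v\rangle^{\tilde\gamma}$ in the exponential, and the radius we must reach grows geometrically. Any weakening of the $\eps^{2-\nu}$ estimate — which genuinely needs both $W_f<\infty$ (to use the second-order Taylor expansion) and $E'_f<\infty$ (to absorb $\Phi$ at large relative velocities) — would either make $\sum_n \eps_n^{2-\nu}\langle r_n\rangle^{\tilde\gamma}\delta_n$ diverge, destroying the exponential lower bound, or force a strictly larger exponent $K$. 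The other subtle ingredient, already needed in the cutoff case, is that the geometric analysis of characteristics (Section \ref{subsec:rebounds}) guarantees finitely many reflections in finite time so that all these estimates along trajectories remain meaningful after rebounds against $\partial\mathcal{U}$.
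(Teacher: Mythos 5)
Your overall architecture is the paper's: the same splitting \eqref{noncutoffsplitting} into a cutoff part $(Q^+_\eps,L_\eps)$ and a remainder $(Q^1_\eps,S_\eps)$, the same $L^\infty$ controls $m_{b^{NCO}_\eps}\sim\eps^{2-\nu}$ on the remainder (the paper simply cites these as Lemmas \ref{lem:S} and \ref{lem:Q1} from \cite{Mo2} rather than re-deriving them via Taylor expansion), the same two-stage structure (first the analogue of Proposition \ref{prop:centredball}, then the inductive spreading of \cite{Mo2}), and the same final exponent. One remark on Step 1: the paper does not fix a single $\eps_1$ for the whole centred-ball construction but re-chooses $\eps=\eps_n$ at each step of the three inductions via \eqref{choiceepsn}, because the domination required is relative to $a_n^2 r_n^{N+\gamma}\xi_n^{N/2-1}$, which shrinks with $n$; since there are finitely many steps this is a presentational difference, but your single-$\eps_1$ phrasing hides a circularity (the $a_n$ depend on $\eps$) that the adaptive choice avoids.

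There is, however, a genuine misattribution in Step 2 that would break your bookkeeping. The spreading gain of $Q^+_{\eps_n}$ does \emph{not} grow like $\eps_n^{-\nu}$: the paper bounds it below using $l_{b^{CO}_{\eps_n}}\geq l_b$, a fixed constant. What blows up like $\eps_n^{-\nu}$ (resp. $\abs{\mbox{log}\,\eps_n}$ for $\nu=0$) is the \emph{loss} coefficient $n_{b^{CO}_{\eps_n}}$ sitting inside the exponential weight $\exp[-C_f(n_{b^{CO}_{\eps_n}}+m_{b^{NCO}_{\eps_n}})\Delta_n\langle v\rangle^{\gamma^+}]$. Moreover your balancing rule $\eps_n^{2-\nu}\langle r_n\rangle^{\tilde\gamma}\sim 1$ cannot be right: to keep the remainder $Q^1_{\eps_n}$ below half the spreading gain you need $\eps_n^{2-\nu}\lesssim a_n^2 r_n^{N+\gamma-\tilde\gamma}\xi_n^{N/2-1}$ (this is exactly \eqref{choiceepsn}), so $\eps_n$ must track the doubly-exponentially decaying $a_n$, not merely $r_n$. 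Consequently $n_{b^{CO}_{\eps_n}}\sim[\tilde C_f a_n^2 r_n^{N+\gamma-\tilde\gamma}\xi_n^{N/2-1}]^{-\nu/(2-\nu)}$ enters the recursion for $a_{n+1}$ through the exponential, and it is precisely this $a_n^{-2\nu/(2-\nu)}$ factor — tamed by shrinking the time steps $\Delta_n$ with $\sum\Delta_n=1$ — that turns the cutoff recursion $a_{n+1}\sim a_n^2$ into $a_{n+1}\sim a_n^{2+2\nu/(2-\nu)}$ and hence produces $K>2\,\mbox{log}(2+\tfrac{2\nu}{2-\nu})/\mbox{log}\,2$. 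With your choice $\eps_n=c\,2^{-n/\alpha}$ the remainder would eventually dominate $a_n^2\times(\mbox{gain})$ and the lower bound would collapse; alternatively, if it did not, your accounting would wrongly yield $K=2$ for all $\nu$. You state the correct $K$, but the mechanism you describe does not produce it.
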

\bigskip

We emphasize here that, in the same spirit as in the cutoff case, the main part of the proof will rely on the establishment of an equivalent to Proposition $\ref{prop:centredball}$ for non-cutoff collision kernels.

\bigskip
\begin{cor}\label{cor:constructivenoncutoff}
As for Corollary $\ref{cor:constructivecutoff}$, if $f_0$ is bounded uniformly from below as well as the local mass of $f$, the local entropy of $f$ is uniformly bounded from above and $\Omega$ is $C^3$ and strictly convex then the conclusion of Theorem $\ref{theo:boundnoncutoff}$ holds true with constants being explicitely constructed in terms of $\tau$, $K$, $E_f$, $E'_f$, $W_f$, $H_f$, $L^{p_\gamma}_f$ and upper and lower bounds on $\abs{\nabla\xi}$  and $\abs{\nabla^2\xi}$on $\partial\Omega$.
\end{cor}

\bigskip
\begin{remark}
Throughout the paper, we are going to deal with the case where $\Omega$ is a $C^2$ convex bounded domain since it is the case where the most important difficulties arise. However, if $\Omega = \T^d$, we can follow the same proofs by letting the first time of collision with the boundary to be $+\infty$ (see Appendix $\ref{appendix:transport}$) and by making the definition that the distance to the boundary (which does not exist) is $+\infty$ (which rules out the case of grazing trajectories).
\end{remark}
\bigskip

\section{The cutoff case: localized ``upheaval points''}\label{sec:cutoff_1ststep}

In this section and the next three we are going to prove a Maxwellian lower bound for a solution to the Boltzmann equation $\eqref{BE}$ in the case where the collision kernel satisfies a cutoff property.
\par The strategy to tackle this result follows the main idea used in \cite{Mo2} and \cite{PW} which relies on finding an ``upheaval point'' (a first minoration uniform in time and space but localised in velocity) and spreading this bound, thanks to the spreading property of the $Q^+$ operator, in order to include larger and larger velocities.

\bigskip
We gather here two lemmas, proven in \cite{Mo2}, that we will frequently use in this section. We remind the reader that we are using Grad's splitting $\eqref{gradsplitting}$. Let us first give an $L^\infty$ bound on the loss term (Corollary $2.2$ in \cite{Mo2}).

\bigskip
\begin{lemma}\label{lem:L}
Let $g$ be a measurable function on $\R^d$. Then
$$\forall v \in \R^d,\quad \abs{L[g](v)} \leq C_g^L\langle v \rangle^{\gamma^+},$$
where $C_g^L$ is defined by:
\begin{enumerate}
\item If $\Phi$ satisfies $\eqref{assumptionPhi}$ with $\gamma \geq 0$ or if $\Phi$ satisfies $\eqref{assumptionPhimol}$, then
$$C^L_g = \emph{\mbox{cst}}\: n_b C_\Phi e_g.$$
\item If $\Phi$ satisfies $\eqref{assumptionPhi}$ with $\gamma \in (-d,0)$, then
$$C^L_g = \emph{\mbox{cst}}\: n_b C_\Phi \left[e_g+ l^p_g\right],\quad p > d/(d+\gamma).$$
\end{enumerate}
\end{lemma}
\bigskip

The spreading property of $Q^+$ is given by the following lemma (Lemma $2.4$ in \cite{Mo2}), where we define
\begin{equation}\label{lb}
l_b = \inf_{\pi/4 \leq \theta \leq 3\pi/4}b\left(\mbox{cos}\:\theta\right).
\end{equation}

\bigskip
\begin{lemma}\label{lem:Q+spread}
Let $B=\Phi b$ be a collision kernel satisfying $\eqref{assumptionB}$, with $\Phi$ satisfying $\eqref{assumptionPhi}$ or $\eqref{assumptionPhimol}$ and $b$ satisfying $\eqref{assumptionb}$ with $\nu \leq 0$. Then for any $\bar{v} \in \R^d$, $0<r \leq R$, $\xi \in (0,1)$, we have

$$Q^+(\mathbf{1}_{B(\bar{v},R)},\mathbf{1}_{B(\bar{v},r)}) \geq \emph{\mbox{cst}}\: l_b c_\Phi r^{d-3} R^{3+\gamma} \xi^{\frac{d}{2}-1}\mathbf{1}_{B\left(\bar{v},\sqrt{r^2 + R^2}(1-\xi)\right)}.$$

As a consequence in the particular quadratic case $\delta = r = R$, we obtain

$$Q^+(\mathbf{1}_{B(\bar{v},\delta)},\mathbf{1}_{B(\bar{v},\delta)}) \geq \emph{\mbox{cst}}\: l_b c_\Phi \delta^{d+\gamma} \xi^{\frac{d}{2}-1}\mathbf{1}_{B\left(\bar{v},\delta\sqrt{2}(1-\xi)\right)},$$
for any $\bar{v} \in \R^d$ and $\xi \in (0,1)$.
\end{lemma}
\bigskip

The case of the torus, studied in \cite{Mo2}, indicates that without rebounding the expected minoration is created after time $t=0$ as quickly as one wants. Therefore we expect  the same kind of bound to arise on each characteristic trajectory before its first rebound. However, in the case of a bounded domain, rebounds against the boundary can occur very close to the time $t=0$ and a rebound preserves only the norm of the velocity. Therefore, we will fail finding a uniformly (in space) small time where a uniform bound arises. Nevertheless, the convexity  and the smoothness of the domain implies that grazing collisions against the boundary do not change the velocity very much. 
\par Thus our study will be split in three parts, which are the next three sections. The first step will be to partition the position and velocity spaces so that we have an immediate appearance of an ``upheaval point" in each of those partitions. The second one is to obtain a uniform lower bound which will depend only on the norm of the velocity. Then the final part will use the standard spreading method used in \cite{Mo2} and \cite{PW} which will allow us to deal with large velocities and derive the exponential lower bound uniformly.


\subsection{Partition of the phase space and first localised lower bounds}\label{subsec:upheaval}

In this section we use the continuity of $f$ together with the conservation laws $\eqref{conservations}$ to obtain a point in the phase space where $f$ is strictly positive. Then, thanks to the continuity of $f$, its Duhamel representation $\eqref{mildCO}$ and the spreading property of the $Q^+$ operator (Lemma $\ref{lem:Q+spread}$) we extend this positivity to high velocities at that particular point (Lemma $\ref{lem:positivity1}$). Finally, the free transport part of the solution $f$ will imply the immediate appearance of the localised lower bounds (Proposition $\ref{prop:upheaval}$).
\par Moreover we define constants that we will use in the next two subsections in order to have a uniform lower bound. 

\bigskip
We define some shorthand notations. For $x$ in $\bar{\Omega}$, $v$ in $\R^d$ and $s,t \geq 0$ we denote the point at time $s$ of the forward characteristic passing through $(x,v)$ at time $t$ by
$$\begin{array}{rl} \displaystyle{X_{s,t}(x,v) }&\displaystyle{= X_s(X_t(x,-v),-V_t(x,-v))} \vspace{2mm} \\ \vspace{2mm} \displaystyle{V_{s,t}(x,v) }&\displaystyle{= V_s(X_t(x,-v),-V_t(x,-v)),}\end{array}$$
which has been derived from $\eqref{bijectioncharacteristics}$.

\bigskip
We start by the strict positivity of our function at one point for all velocities:

\bigskip
\begin{lemma}\label{lem:positivity1}
Let $f$ be the mild solution of the Boltzmann equation described in Theorem $\ref{theo:boundcutoff}$.
\\ Then there exists $(x_1,v_1)$ in $\Omega \times \R^d$ and $\Delta >0$ such that for all $n \in \N$ and all $t$ in $[0,\Delta]$, there exists $r_n >0$, depending only on $n$, and $\alpha_n(t)>0$ such that
$$\forall x \in B\left(x_1,\frac{\Delta}{2^n}\right), \:\forall v \in \R^d, \quad f(t,x,v) \geq \alpha_n(t)\mathbf{1}_{B(v_1,r_n)}(v),$$
with $\alpha_0 >0$ independent of $t$ and the induction formula
$$\alpha_{n+1}(t) = C_Q\frac{r_n^{d+\gamma}}{4^{d/2-1}}\int_0^{\min\left(t,\Delta/(2^{n+1}(2r_n+\norm{v_1})\right)}e^{-s C_L  \langle 2 r_n+\norm{v_1}\rangle^{\gamma^+}}\alpha^2_n(s)\:ds$$
where $C_Q=cst\: l_b c_\Phi$ is defined in Lemma $\ref{lem:Q+spread}$ and $C_L = cst\: n_b C_\Phi E_f$ (or $C_L = cst\: n_b C_\Phi (E_f+L^p_f)$) is defined in Lemma $\ref{lem:L}$, and
$$r_0 = \Delta, \quad r_{n+1} = \frac{3\sqrt{2}}{4}r_n.$$
\end{lemma}
\bigskip

\bigskip
\begin{proof}[Proof of Lemma $\ref{lem:positivity1}$]
The proof is an induction on $n$.

\bigskip
\textbf{Step $1$: Initialization}. We recall the conservation laws satisfied by a solution to the Boltzmann equation, $\eqref{conservations}$,
$$\forall t \in \R^+, \quad \int_{\Omega}\int_{\R^d} f(t,x,v)\:dxdv = M, \quad \int_{\Omega}\int_{\R^d} \abs{v}^2f(t,x,v)\:dxdv = E,$$
with $M>0$ and $E<\infty$.
\\ Since $\Omega$ is bounded, and so is included in, say, $B(0,R_X)$, we also have that
$$\forall t \in \R^+,\quad \int_{\Omega}\int_{\R^d} \left(\abs{x}^2+ \abs{v}^2\right) f(t,x,v)\:dxdv \leq  \alpha = M R_X^2 + E < +\infty.$$
Therefore if we take $t =0$ and $R_{min} = \sqrt{2\alpha/M}$, we have the following

$$\int_{B(0,R_{min})}\int_{B(0,R_{min})}f_0(x,v)\:dxdv \geq \frac{M}{2} > 0.$$

Therefore we have that there exists $x_1$ in $\Omega$ and $v_1$ in $B(0,R_{min})$ such that
$$f_0(x_1,v_1) \geq \frac{\mbox{M}}{4\mbox{Vol}(B(0,R_{min}))^2} >0.$$

The first step of the induction is then due to the continuity of $f$ at $(0,x_1,v_1)$. Indeed, there exists $\delta_T,\delta_X,\delta_V >0$ such that

$$\forall t \in [0,\delta_T], \:\forall x \in B(x_1,\delta_X), \: \forall v \in B(v_1,\delta_V), \quad f(t,x,v) \geq \frac{\mbox{M}}{8\mbox{Vol}(B(0,R_{min}))^2}.$$
and we define $\Delta = \min(\delta_T,\delta_X,\delta_V)$.

\bigskip
\textbf{Step $2$: Proof of the induction}. We assume the conjecture is valid for $n$.
\\ Let $x$ be in $B(x_1,\Delta/2^{n+1})$, $v$ in $B(0,\norm{v_1} + 2r_n)$ and $t$ in $[0,\Delta]$.

\bigskip
We use the fact that $f$ is a mild solution to write $f(t,x,v)$ under its Duhamel form $\eqref{mildCO}$. The control we have on the $L$ operator, Lemma $\ref{lem:L}$, allows us to bound from above the second integral term (the first term is positive).  Moreover, this bound on $L$ is independent on $t$, $x$ and $v$ since it only depends on an upper bound on the energy $e_{f(t,x,\cdot)}$ (and its local $L^p$ norm $l^p_{f(t,x,\cdot)}$) which is uniformly bounded by $E_f$ (and by $L^p_f$). This yields, for $\tau_n(t) = \min\left(t,\Delta/(2^{n+1}(2r_n+\norm{v_1}))\right)$

\begin{equation}\label{inductionpositivity}
f(t,x,v) \geq  \int_{0}^{\tau_n(t)}e^{-s C_L \langle \norm{v_1}+ 2r_n \rangle^{\gamma^+}}  Q^+ \left[f(s,X_{s,t}(x,v),\cdot),f(s,X_{s,t}(x,v),\cdot)\right]\left(V_{s,t}(x,v)\right)\:ds,
\end{equation}
where $C_L = cst\: n_b C_\Phi E_f$ (or $C_L = cst\: n_b C_\Phi (E_f+L^p_f)$), see Lemma $\ref{lem:L}$, and we used $\norm{V_{s,t}(x,v)} = \norm{v} \leq 2r_n + \norm{v_1}$.

\bigskip
Besides, we have that $B(x_1,\Delta) \subset \Omega$ and also
$$\forall s \in \left[0,\frac{\Delta}{2^{n+1}(2r_n+\norm{v_1})}\right], \:\forall v_* \in B(0,\norm{v_1} + 2r_n), \quad \norm{x_1- (x + sv_*)} \leq \frac{\Delta}{2^{n}}$$
which, by definition of the characteristics (see Appendix $\ref{subsec:characteristics}$), yields
$$\forall s \in \left[0,\tau_n(t)\right],\:\forall v_* \in B(0,\norm{v_1}+2r_n), \quad \left\{ \begin{array}{rl} \displaystyle{X_{s,t}(x,v_*) }&\displaystyle{= x + sv_* \in B\left(x_1,\frac{\Delta}{2^{n}}\right)} \vspace{2mm} \\\vspace{2mm} \displaystyle{V_{s,t}(x,v_*)}&\displaystyle{= v_*.}\end{array}\right.$$

\bigskip
Therefore, by calling $v_*$ the integration parametre in the operator $Q^+$ we can apply the induction property to $f(s,X_{s,t}(x,v),v_*)$ which implies, in $\eqref{inductionpositivity}$,

$$f(t,x,v) \geq  \int_{0}^{\tau_n(t)}e^{-s C_L \langle \norm{v_1}+2r_n \rangle^{\gamma^+}}\alpha_n^2(s)  Q^+\left[\mathbf{1}_{B(v_1,r_n)},\mathbf{1}_{B(v_1,r_n)}\right]\:ds (v).$$

\bigskip
Applying the spreading property of $Q^+$, Lemma $\ref{lem:Q+spread}$, with $\xi = 1/4$ gives us the expected result for the step $n+1$ since $B(v_1,r_{n+1}) \subset B(0,\norm{v_1}+2r_n)$.
\end{proof}
\bigskip

We now have all the tools to prove the next proposition which is the immediate appearance of localised ``upheaval points".

\bigskip
\begin{prop}\label{prop:upheaval}
Let $f$ be the mild solution of the Boltzmann equation described in Theorem $\ref{theo:boundcutoff}$.
\\Then there exists $\Delta >0$ such that for all $0<\tau_0 \leq \Delta$, there exists $\delta_T(\tau_0)$, $\delta_X(\tau_0)$, $\delta_V(\tau_0)$, $R_{min}(\tau_0)$, $a_0(\tau_0) > 0$ such that for all $N$ in $\N$ there exists $N_X$ in $\N^*$ and $x_1,\dots,x_{N_X}$ in $\Omega$ and $v_1,\dots,v_{N_X}$ in $B(0,R_{min}(\tau_0))$ and
\begin{itemize}
\item $\bar{\Omega} \subset \bigcup\limits_{1 \leq i \leq N_X}B\left(x_i,\delta_X(\tau_0)/2^{N}\right)$;
\item $\forall t \in [\tau_0,\delta_T(\tau_0)], \:\forall x \in B(x_i,\delta_X(\tau_0)),\forall v \in \R^d,$ 
$$f(t,x,v) \geq a_0(\tau_0)\mathbf{1}_{B\left(v_i,\delta_V(\tau_0)\right)}(v),$$
with $B\left(v_i,\delta_V(\tau_0)\right) \subset B(0,R_{min}(\tau_0))$.
\end{itemize}
\end{prop}
\bigskip

\bigskip
\begin{proof}[Proof of Proposition $\ref{prop:upheaval}$]
We are going to use the free transport part of the Duhamel form of $f$ $\eqref{mildCO}$, to create localised lower bounds out of Lemma $\ref{lem:positivity1}$.

\bigskip
We take $0< \tau_0 \leq \Delta$, where $\Delta$ is defined in Lemma $\ref{lem:positivity1}$.
\\$\Omega$ is bounded so let us denote its diameter by $d_{\Omega}$. Let $n$ be big enough such that $r_n \geq 2 d_{\Omega}/\tau_0+\norm{v_1}$ and define $R_{min}(\tau_0) = 2 d_{\Omega}/\tau_0$.
\par Thanks to Lemma $\ref{lem:positivity1}$ applied to this particular $n$ we have that
\begin{equation}\label{initialbound}
\forall t \in \left[\frac{\tau_0}{2},\Delta\right],\:\forall x \in B(x_1,\Delta/2^n), \quad f(t,x,v) \geq \alpha_n\left(\frac{\tau_0}{2}\right)\mathbf{1}_{B(v_1,r_n)}(v),
\end{equation}
where we used the fact that $\alpha_n(t)$ is an increasing function.

\bigskip
Define
$$a_0(\tau_0) = \frac{1}{2}\alpha_n\left(\frac{\tau_0}{2}\right) e^{-\frac{\tau_0}{2} C_L \langle \frac{2 d_{\Omega}}{\tau_0} \rangle^{\gamma^+}}.$$

\bigskip
\textbf{Definition of the constants.}
We notice that for all $x$ in $\partial\Omega$ we have that $n(x)\cdot (x-x_1) >0$, because $\Omega$ has nowhere null normal vector by hypothesis. But the function 
$$x \longmapsto n(x)\cdot \frac{x-x_1}{\norm{x-x_1}}$$
 is continuous (since $\Omega$ is $C^2$) on the compact $\partial{\Omega}$ and therefore has a minimum that is atteined at a certain $X(x_1)$ on $\partial\Omega$.
\par Hence, 
\begin{equation}\label{lambdax1}
\forall x \in \partial\Omega, \quad n(x)\cdot \frac{x-x_1}{\norm{x-x_1}} \geq n(X(x_1))\cdot \frac{X(x_1)-x_1}{\norm{X(x_1)-x_1}}= 2 \lambda(x_1) >0.\end{equation}

\bigskip
To shorten following notations, we define on $\bar{\Omega} \times \left(\R^d-\{0\}\right)$ the function
\begin{equation}\label{Phi}
\Phi(x,v) = n\left(x+t\left(x,\frac{v}{\norm{v}}\right)\frac{v}{\norm{v}}\right),
\end{equation}
where we defined $t(x,v) = \min\{t\geq 0: x+tv \in \partial\Omega\}$, the first time of contact against the boundary of the forward characteristic $(x+sv)_{s\geq 0}$ defined for $v\neq 0$ and continuous on $\bar{\Omega}\times\left(\R^d-\{0\}\right)$ (see Lemma $5.2$).

\bigskip
We denote $d_1$ to be half of the distance from $x_1$ to $\partial\Omega$. We define two sets included in $[0,\Delta]\times\bar{\Omega}\times\R^d$:
$$\Lambda^{(1)} = [0,\Delta]\times B(x_1,d_1) \times \R^d$$
and
$$\Lambda^{(2)} = \left\{(t,x,v)\notin \Lambda^{(1)},\quad \norm{v}\geq \frac{d_1}{\tau_0} \quad\mbox{and} \quad \Phi(x,v)\cdot \frac{v}{\norm{v}} \geq \lambda(x_1)\right\}$$

\bigskip
By continuity of $t(x,v)$ and of $n$ (on $\partial\Omega$), we have that
$$\Lambda = \Lambda^{(1)} \cap \Lambda^{(2)}$$
is compact and does not intersect the grazing set $[0,\Delta]\times \Lambda_0$ defined by $\eqref{grazingset}$. Therefore, $f$ is continuous in $\Lambda$ and thus is uniformly continuous on $\Lambda$. Hence, there exist $\delta_T'(\tau_0)$, $\delta'_X(\tau_0)$, $\delta'_V(\tau_0) >0$ such that
$$\forall (t,x,v),\:(t',x',v')\in\Lambda,\quad |t-t'|\leq \delta'_T(\tau_0),\: \norm{x-x'}\leq \delta'_X(\tau_0),\: \norm{v-v'}\leq \delta'_V(\tau_0),$$
\begin{equation}\label{uniformcontinuity}
\abs{f(t,x,v)- f(t',x',v')} \leq a_0(\tau_0).
\end{equation}

\bigskip
The map $\Phi$ (defined by $\eqref{Phi}$) is uniformly continuous on the compact $[0,\Delta]\times\bar{\Omega}\times \mathbb{S}^{d-1}$ and therefore there exist $\delta_T''(\tau_0)$, $\delta''_X(\tau_0)$, $\delta''_V(\tau_0) >0$ such that
$$\forall (t,x,v),\:(t',x',v')\in \Lambda^{(2)},\quad |t-t'|\leq \delta''_T(\tau_0),\: \norm{x-x'}\leq \delta''_X(\tau_0),\: \norm{v-v'}\leq \delta''_V(\tau_0),$$
\begin{equation}\label{uniformlambda2}
\abs{ \Phi(x,v)-  \Phi(x',v')} \leq \frac{\lambda(x_1)}{2}.
\end{equation}

\bigskip
We conclude our definitions by taking 
\begin{eqnarray*}
\delta_T(\tau_0)&=& \min\left(\Delta,\: \tau_0 +\delta_T'(\tau_0),\: \tau_0+\delta_T''(\tau_0)\right),
\\ \delta_X(\tau_0)&=& \min\left(\frac{\Delta}{2^n},\:\delta_X'(\tau_0),\:\delta_X''(\tau_0),\:d_1/2\right),
\\ \delta_V(\tau_0) &=&\min\left(r_n,\:\delta_V'(\tau_0),\:\frac{d_1}{2\tau_0}\delta_V''(\tau_0),\:\frac{\lambda(x_1)}{2}\right).
\end{eqnarray*}

\bigskip
\textbf{Proof of the lower bounds.}
We take $N \in \N$ and notice that $\bar{\Omega}$ is compact and therefore there exists $x_{1},\dots,x_{N_{X}}$ in $\Omega$ such that $\bar{\Omega} \subset \bigcup\limits_{1 \leq i \leq N_{X}}B\left(x_i,\delta_X(\tau_0)/2^{N}\right)$. Moreover, we construct them such that $x_{1}$ is the one defined in Lemma $\ref{lem:positivity1}$ and we then take $v_1$ to be the one defined in Lemma $\ref{lem:positivity1}$. We define 
$$\forall i \in \{2,\dots,N_X\}, \quad v_i =\frac{2}{\tau_0}(x_i - x_1).$$
Because $\Omega$ is convex we have that 
\begin{eqnarray*}
X_{\tau_0/2,\tau_0}(x_i,v_i) &=& x_1,
\\ V_{\tau_0/2,\tau_0}(x_i,v_i) &=& v_i.
\end{eqnarray*}

\bigskip
Using the fact that $f$ is a mild solution of the Boltzmann equation, we write it under its Duhamel form $\eqref{mildCO}$ and we drop the last term which is positive. As in the proof of Lemma $\ref{lem:positivity1}$ we can control the $L$ operator appearing in the first term in the right-hand side of $\eqref{mildCO}$ (corresponding to the free transport). Thus, we use the Duhamel form $\eqref{mildCO}$ between $\tau_0$ and $\tau_0/2$. This yields

\begin{eqnarray*}
f(\tau_0,x_i,v_i) &\geq& f\left(\frac{\tau_0}{2},x_1,v_i\right)e^{-\frac{\tau_0}{2}C_L \langle \frac{2}{\tau_0}(x_i-x_1)\rangle^{\gamma^+}}
\\ &\geq& \alpha_n\left(\frac{\tau_0}{2}\right)e^{-\frac{\tau_0}{2}C_L \langle \frac{2d_{\Omega}}{\tau_0}\rangle^{\gamma^+}}\mathbf{1}_{B(v_1,r_n)}(v_i)
\\ &\geq& 2 a_0(\tau_0)\mathbf{1}_{B(v_1,r_n)}(v_i),
\end{eqnarray*}
where we used $\eqref{initialbound}$ for the second inequality. We see here that $v_i$ belongs to $B(0,R_{min}(\tau_0))$ and that $B(0,R_{min}(\tau_0)) \subset B(v_1,r_n)$ and therefore

\begin{equation}\label{finpositivity}
f(\tau_0,x_i,v_i) \geq 2a_0(\tau_0).
\end{equation}

\bigskip
We first notice that $(\tau_0,x_i,v_i)$ belongs to $\Lambda$ since either $x_i$ belongs to $B(x_1,d_1)$ or $\norm{x_1-x_i} \geq d_1$ but by definition of $v_i$ and $\lambda(x_1)$ (see $\eqref{lambdax1}$),
$$n\left(x_i+t\left(x_i,\frac{v_i}{\norm{v_i}}\right)\frac{v_i}{\norm{v_i}}\right)\cdot \frac{v_i}{\norm{v_i}} \geq 2\lambda(x_1)$$
and
$$\norm{v_i} = \frac{2}{\tau_0}\norm{x_i-x_1} \geq \frac{2}{\tau_0}d_1.$$

\bigskip
We take $t$ in $[\tau_0,\delta_T(\tau_0)]$, $x$ in $B(x_i,\delta_X(\tau_0))$ and $v$ in $B(v_i,\delta_V(\tau_0))$ and we will prove that $(t,x,v)$ also belongs to $\Lambda$.
\par If $x_i$ belongs to $B(x_1,d_1/2)$ then since $\delta_X(\tau_0)\leq d_1/2$,
$$\norm{x-x_1} \leq \frac{d_1}{2} + \norm{x-x_i} \leq d_1$$
and $(t,x,v)$ thus belongs to $\Lambda^{(1)} \subset \Lambda$.
\par In the other case where $\norm{x_1-x_i}\geq d_1/2$ we first have that 
$$\norm{v_i} = \frac{2}{\tau_0}\norm{x_i-x_1} \geq \frac{d_1}{\tau_0}.$$
And also
$$\norm{\frac{v}{\norm{v}}-\frac{v_i}{\norm{v_i}}}\leq \frac{2}{\norm{v_i}}\norm{v-v_i}= \frac{\tau_0}{\norm{x_i-x_1}}\delta_V(\tau_0)\leq \frac{2\tau_0}{d_1}\delta_V(\tau_0)\leq \delta_V''(\tau_0).$$
The latter inequality combined with $\eqref{uniformlambda2}$ and that $\abs{t-\tau_0}\leq \delta_T''(\tau_0)$ and $\norm{x-x_i}\leq \delta''_X(\tau_0)$ yields
$$\abs{\Phi(x,v)- \Phi(x_i,v_i)} \leq \frac{\lambda(x_1)}{2},$$
which in turn implies
\begin{eqnarray*}
\Phi(x,v) \cdot \frac{v}{\norm{v}} &\geq& \Phi(x_i,v_i)\cdot\frac{v_i}{\norm{v_i}} + \Phi(n,v)\cdot(v-v_i) + \left(\Phi(x,v)-\Phi(x_i,v_i)\right)\cdot v_i
\\&\geq& 2\lambda(x_1)-\norm{v-v_i} - \abs{\Phi(x,v)-\Phi(x_i,v_i)}
\\&\geq& \lambda(x_1),
\end{eqnarray*}
so that $(t,x,v)$ belongs to $\Lambda^{(2)}$.

\bigskip
We can now conclude the proof.
\par We proved that $(\tau_0,x_i,v_i)$ belongs to $\Lambda$ and that for all  $t$ in $[\tau_0,\delta_T(\tau_0)]$, $x$ in $B(x_i,\delta_X(\tau_0))$ and $v$ in $B(v_i,\delta_V(\tau_0))$, $(t,x,v)$ belongs to $\Lambda$. By definition of the constants, $(t-\tau_0,x-x_i,v-v_i)$ satisfies the inequality of the uniform continuity of $f$ on $\Lambda$ $\eqref{uniformcontinuity}$. Combining this inequality with $\eqref{finpositivity}$, the lower bound at $(\tau_0,x_i,v_i)$, we have that 
$$f(t,x,v) \geq a_0(\tau_0).$$
\end{proof}
\bigskip

\begin{remark}
This last  proposition tells us that localised lower bounds appear immediately, that is to say after any time $\tau_0 >0$. The exponential lower bound we expect will appear immediately after those initial localised lower bounds, i.e. for all $\tau_1>\tau_0$. Therefore, to shorten notation and lighten our  presentation, we are going to study the case of solution to the Boltzmann equation which satisfies Proposition $\ref{prop:upheaval}$ at $\tau_0=0$. Then we will immediatly create the exponential lower bound after $0$ and apply this result to $F(t,x,v) = f(t+\tau_0,x,v)$.
\end{remark}


\subsection{A constructive approach to the initial lower bound, Corollary $\ref{cor:constructivecutoff}$}\label{subsec:constructiveupheavalpoint}

The initial lower bounds we just derived relies on compactness arguments and their construction is therefore not explicit. However, as mentioned in Section $\ref{subsec:maincutoff}$, a few more assumptions on $f_0$ and $f$ suffice to obatin a completely constructive approach for the ``upheaval point". This method is based on a property of the iterated $Q^+$ operator discovered by Pulvirenty and Wennberg \cite{PW} and reformulated by Mouhot (\cite{Mo2} Lemma $2.3$) as follows.

\bigskip
\begin{lemma}\label{lem:Q+Q+}
Let $B=\Phi b$ be a collision kernel satisfying $\eqref{assumptionB}$, with $\Phi$ satisfying $\eqref{assumptionPhi}$ or $\eqref{assumptionPhimol}$ and $b$ satisfying $\eqref{assumptionb}$ with $\nu \leq 0$. Let $g(v)$ be a nonnegative function on $\R^d$ with bounded energy $e_g$ and entropy $h_g$ and a mass $\rho_g$ such that $0 < \rho_g < +\infty$. Then there exist $R_0 \:, \delta_0\:, \eta_0 >0$ and $\bar{v} \in B(0,R_0)$ such that
$$Q^+\left(Q^+\left(g\mathbf{1}_{B(0,R_0)},g\mathbf{1}_{B(0,R_0)}\right),g\mathbf{1}_{B(0,R_0)}\right) \geq \eta_0 \mathbf{1}_{B(\bar{v},\delta_0)},$$
with $R_0 \:, \delta_0\:, \eta_0$ being constructive in terms on $\rho_g$, $e_g$ and $h_g$.
\end{lemma}
\bigskip

We now suppose that $0<\rho_{f_0}<+\infty$, $h_{f_0}<+\infty$ and that
$$\forall (x,v) \in \Omega\times\R^d, \quad f_0(x,v) \geq \varphi(v) > 0$$
and we follow the argument used in\cite{Mo2}.

\bigskip
By the Duhamel definition $\eqref{mildCO}$ of $f$ being a mild solution and Lemma $\ref{lem:L}$ we have
\begin{equation}\label{construct1}
f(t,X_t(x,v),V_t(x,v)) \geq f_0(x,v)e^{-tC_L\langle v \rangle^{\gamma^+}}
\end{equation}
and
$$f(t,x,v) \geq  \int_{0}^{t}e^{-(t-s) C_L \langle v \rangle^{\gamma^+}}  Q^+ \left[f(s,X_{s,t}(x,v),\cdot),f(s,X_{s,t}(x,v),\cdot)\right]\left(V_{s,t}(x,v)\right)\:ds.$$
Define $t(x,v)>0$ the time of first contact with $\partial\Omega$ of  the trajectory $x+sv$ (see rigorous definition in Proposition $\ref{prop:tmin}$). For all $t$ in $[0,t(x,v)]$ we have 
\begin{eqnarray*}
X_{0,t}(x,v) &=& x+ tv,
\\V_{0,t}(x,v) &=& v.
\end{eqnarray*}
Thus, for all $0\leq t \leq t(x,v)$,
$$f(t,x,v) \geq  \int_{0}^{t}e^{-(t-s) C_L \langle v \rangle^{\gamma^+}}  Q^+ \left[f(s,x+sv,\cdot),f(s,x+sv,\cdot)\right]\left(v\right)\:ds,$$
and we can iterate the latter inequality
\begin{equation}\label{construct2}
\begin{split}
&f(t,x,v)\geq \int_{0}^{t}e^{-(t-s) C_L \langle v \rangle^{\gamma^+}}
\\& Q^+\left[ \int_{0}^se^{-(s-s') C_L \langle v \rangle^{\gamma^+}}Q^+\left(f(s,x+s'v,\cdot),f(s,x+s'v,\cdot)\right)(\cdot)ds',f(t,x+sv,\cdot)\right]\left(v\right)ds.
\end{split}
\end{equation}

$\eqref{construct1}$ and $\eqref{construct2}$ are exactly the same bounds than the ones obtained in \cite{Mo2}, Step $1$ of proof of Proposition $3.2$, and we can therefore conclude the same way with Lemma $\ref{lem:Q+Q+}$
$$f(t,x,v) \geq a_0(\tau_0)\mathbf{1}_{B(\bar{v},\delta_0),}$$
as long as $v$ is in $B(0,R_0)$ and $0\leq t \leq \tau_0$. 
\par The only difference with \cite{Mo2} is the fact that we need $\tau_0$ to be in $[0,t(x,v)]$, giving local lower bounds instead of a global one.


\subsection{A lower bound depending only on the norm of the velocity: strategy of the proof of Proposition $\ref{prop:centredball}$}\label{subsec:centredball}

As stated in the introduction, the spreading property of the bilinear operator $Q^+$ cannot be used (at least uniformly in time and space) when we are really close to the boundary due to the lack of control over the rebounds. However, if we have a lower bound depending only on the norm of the velocity then the latter bound will not take into account rebounds as they preserve the norm, allowing us to spread this minoration up to an exponential one.

\bigskip
The next two sections are dedicated to the creation of such a uniform lower bound depending solely on the norm of the velocity. In order to do so we restrain the problem without taking into account large velocities and divide the study to two cases:  if the trajectory stays close to the boundary or if it does not. In both cases we will start from the localised ``upheaval points" constructed in Section $\ref{subsec:upheaval}$ and spread them to the point where one gets a lower bound depending only on the norm of the velocity. 
\par The next sections tackle each of these points. We first study the case when a characteristic reaches a point far from the boundary and finally we focus on the case of grazing characteristics. We fix $\delta_T$, $\delta_X$, $\delta_V$, $R_{min}$ and $a_0$ to be the ones described in Proposition $\ref{prop:upheaval}$ at time $\tau_0 =0$.
\par The result we will derive out of those studies is Proposition $\ref{prop:centredball}$ and from now on, dependencies on physical observables of $f$ ($E_f$ and $L^{p_\gamma}_f$) will be mentionned but will not be explicitly written everytime.

\section{The cutoff case: characteristics passing by a point far from the boundary}\label{sec:cutoff_far}

In this section we manage to spread the lower bounds created in Proposition $\ref{prop:upheaval}$ up to a ball in velocity centred at zero as long as the trajectory we look at reaches a point far enough from the boundary.

\bigskip
First, we pick $N$ in $\N^*$ and cover $\bar{\Omega}$ with $\bigcup_{1 \leq i \leq N_X}B(x_i,\delta_X/2^{N})$ as in Proposition $\ref{prop:upheaval}$.
\\Then for $l\geq 0$ we define

\begin{equation}\label{Ul}
\Omega_l = \left\{x \in \Omega:\:d(x,\partial\Omega)\geq l\right\},
\end{equation}
where $d(x,\partial\Omega)$ is the distance from $x$ to the boundary of $\Omega$.
\par For any $R>0$ we define two sequences in $\R^+$ by induction, for all $\tau \geq 0$ and $l \geq 0$,

\begin{eqnarray}
&&\left\{\begin{array}{rl} \displaystyle{r_0} &\displaystyle{= \delta_V} \vspace{2mm}\\\vspace{2mm} \displaystyle{r_{n+1} }&\displaystyle{= \frac{3\sqrt{2}}{4}r_n} \end{array}\right. \label{rn}
\\\mbox{and} \nonumber
\\&&\left\{\begin{array}{rl} \displaystyle{a_0(l,\tau) }&\displaystyle{= a_0} \vspace{2mm}\\\vspace{2mm} \displaystyle{a_{n+1}(l,\tau) }&\displaystyle{= C_Q\frac{r_n^{d+\gamma}}{4^{d/2-1}}\frac{l}{2^{n+3}R}e^{-\tau C_L  \langle R\rangle^{\gamma^+}}a^2_n\left(\frac{l}{8},\tau\right),}\end{array}\right.\label{an}
\end{eqnarray}
where $C_Q$ and $C_L$ were defined in Lemma $\ref{lem:positivity1}$.
\par We express the spreading of the lower bound in the following proposition.

\bigskip
\begin{prop}\label{prop:spreadfar}
Let $f$ be the mild solution of the Boltzmann equation described in Theorem $\ref{theo:boundcutoff}$ and suppose that $f$ satisfies Proposition $\ref{prop:upheaval}$ with $\tau_0=0$.
\\Consider $0<\tau\leq \delta_T$ and $N$ in $\N$. Let $(x_i)_{i \in \{1,\dots,N_X\}}$ and $(v_i)_{i \in \{1,\dots,N_X\}}$ be given as in Proposition $\ref{prop:upheaval}$ with $\tau_0=0$.
\\Then for all $n$ in $\{0,\dots,N\}$ we have that the following holds: for all $0<l\leq \delta_X$, and $R>0$ such that $l/R < \tau$, for all $t$ in $[l/(2^n R),\tau]$, and for all $x \in \bar{\Omega}$ and $v \in B(0,R)$, if there exists $t_1 \in [0,t-l/(2^n R)]$ such that $X_{t_1,t}(x,v)$ belongs to $\Omega_l \cap B(x_i,\delta_X/2^n)$ then

$$f(t,x,v) \geq a_n(l,\tau) \mathbf{1}_{B(v_i,r_n)}(V_{t_1,t}(x,v)),$$
where $(r_n)$ and $(a_n)$ are defined by $\eqref{rn}$-$\eqref{an}$.
\end{prop}
\bigskip

\bigskip
\begin{proof}[Proof of Proposition $\ref{prop:spreadfar}$]
This Proposition will be proved by induction on $n$.

\bigskip
\textbf{Step $1$: Initialization}. The initialisation is simply Proposition $\ref{prop:upheaval}$.

\bigskip
Indeed, we use the definition of $f$ being a mild solution to write $f(t,x,v)$ under its Duhamel form $\eqref{mildCO}$ starting at $t_1$ where both parts are positive. The control we have on the $L$ operator, Lemma $\ref{lem:L}$, allows us to bound from above the first term.  Moreover, this bound on $L$ is independent on $x$ and $v$ (see proof of Lemma $\ref{lem:positivity1}$). This gives

\begin{equation}\label{1ststepCO}
f(t,x,v) \geq e^{-(t-t_1)C_L\langle R \rangle^{\gamma^+}}f\left(t_1,X_{t_1,t}(x,v),V_{t_1,t}(x,v)\right). 
\end{equation}

\bigskip
Finally, Proposition $\ref{prop:upheaval}$ applied to $f(t_1,X_{t_1,t}(x,v),V_{t_1,t}(x,v))$ gives us the property for $n=0$.

\bigskip
\textbf{Step $2$: Proof of the induction}. We consider the case where the proposition is true for $n$.
\\Given $l \in (0,\delta_X]$, $t \in [l/(2^{n+1}R),\tau]$, $x \in \bar{\Omega}$ and $v\in B(0,R)$.
\par We suppose now that there exists $t_1 \in [0,t-l/(2^{n+1}R)]$ such that $X_{t_1,t}(x,v) \in \Omega_l \cap B(x_i,\delta_X/2^{n+1})$.

\bigskip
Similar to what we did in the first step of the induction, but concentrating on the second part of the Duhamel formula $\eqref{mildCO}$ we conclude that

\begin{eqnarray}
&& \quad\quad f(t,x,v) \geq \label{inductionCO}
\\ &&  e^{-C_L \tau \langle R \rangle^{\gamma^+}} \left(\int_{t_1+\frac{l}{2^{n+3}R}}^{t_1 + \frac{l}{2^{n+2}R}}  Q^+ \left[f(s,X_{s,t}(x,v),\cdot),f(s,X_{s,t}(x,v),\cdot)\right]\:ds\right)\left(V_{t_1,t}(x,v)\right). \nonumber
\end{eqnarray}
\bigskip

The goal is now to apply the induction to the triplet $(s,X_{s,t}(x,v),v_*)$, where $v_*$ is the integration parametre inside the $Q^+$ operator, with $\norm{v_*} \leq R$.
\par One easily shows that $X_{s,t}(x,v) = X_{t_1,t}(x,v) + (s-t_1)V_{t_1,t}(x,v)$, for $s$ in $[t_1+\frac{l}{2^{n+3}R},t_1 + \frac{l}{2^{n+2}R}]$, and therefore we have that

\begin{equation}\label{geo1}
\norm{X_{t_1,t}(x,v)-X_{s,t}(x,v)} \leq \frac{l}{2^{n+2}},
\end{equation}
and so that $X_{s,t}(x,v)$ belongs to $\Omega_{l-l/2^{n+2}}$.

\bigskip
Finally, we have to find a point on the characteristic trajectory of $(s,X_{s,t}(x,v),v_*)$ that is in $\Omega_{l'}$ for some $l'$. This is achieved at the time $t_1$ (see Fig.$\ref{fig:spreadfar}$).

\bigskip
\begin{figure}[!h]
\begin{center}
\includegraphics[scale=0.68]{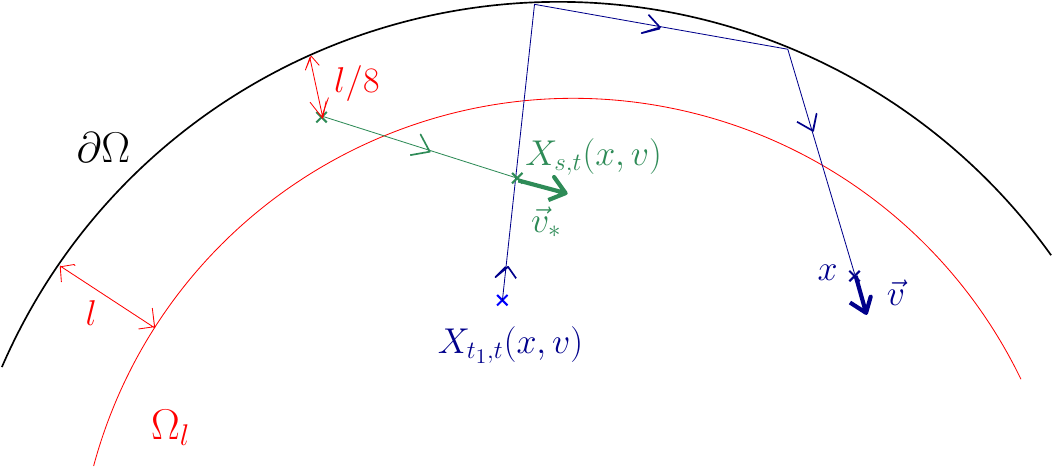}
\end{center}
\caption{\footnotesize Study of  $(s,X_{s,t}(x,v),v_*)$ far from the boundary}
\label{fig:spreadfar}
\end{figure}
\bigskip

Indeed, we have $s$ in $[t_1+l/(2^{n+3}R),t_1+l/(2^{n+2}R)]$ so, for $\norm{v_*} \leq R$

\begin{equation}\label{geo2}
\forall s' \in [t_1,s], \quad\norm{X_{s,t}(x,v) - \left(X_{s,t}(x,v) - (s-s')v_*\right)} \leq \frac{l}{2^{n+2}}.
\end{equation}

\bigskip 
 This gives us the characteristics trajectory backward starting from $s$, since $X_{s,t}(x,v) - (s-s')v_*$ remains in $\Omega$, and therefore
 
$$\forall s' \in [t_1,s], \quad \left\{\begin{array}{rl} \displaystyle{X_{s',s}\left(X_{s,t}(x,v),v_*\right) }&\displaystyle{= X_{s,t}(x,v) - (s-s')v_*} \vspace{2mm} \\ \vspace{2mm} \displaystyle{V_{s',s}\left(X_{s,t}(x,v),v_*\right) }&\displaystyle{= v_*.} \end{array} \right.$$

\bigskip
To conclude we just need to gather the upper bounds we found about the trajectories reaching $(X_{s,t}(x,v),v_*)$ in  a time $s$ in $[t_1+l/(2^{n+3}R),t_1+l/(2^{n+2}R)]$, equations $\eqref{geo1}$ and $\eqref{geo2}$

\begin{eqnarray*}
\norm{X_{t_1,t}(x,v)- X_{t_1,s}\left(X_{s,t}(x,v),v_*\right)} \leq \frac{l}{2^{n+1}}.
\end{eqnarray*}

\bigskip
We have that $X_{t_1,t}(x,v)$ belongs to $\Omega_l \cap B(x_i, \delta_X/(2^{n+1}))$ and therefore we have that for all $s$ in $[t_1+l/(2^{n+3}R),t_1+l/(2^{n+2}R)]$, $X_{t_1,s}\left(X_{s,t}(x,v),v_*\right)$ belongs to $\Omega_{l/2} \cap B(x_i,\delta_X/2^n)$.
\par Finally, if $s$ belongs to $[t_1+l/(2^{n+3}R),t_1+l/(2^{n+2}R)]$ we have that $(l/8)/(2^n R)\leq s \leq \tau$ and $t_1$ is in $[0,s-(l/8)/(2^n R)]$.
\par We can therefore apply the induction assumption for $l'=l/8$ inside the $Q^+$ operator in $\eqref{inductionCO}$, recalling that $V_{t_1,s}(X_{s,t}(x,v),v_*)=v_*$.

$$f(t,x,v) \geq a_n\left(\frac{l}{8},\tau\right)^2 e^{-C_L \tau \langle R \rangle^{\gamma^+}} \left(\int_{t_1+\frac{l}{2^{n+3}R}}^{t_1 + \frac{l}{2^{n+2}R}}  Q^+ \left[\mathbf{1}_{B(v_i,r_n)},\mathbf{1}_{B(v_i,r_n)}\right]\:ds\right)\left(V_{t_1,t}(x,v)\right).$$

\bigskip
Applying the spreading property of $Q^+$, Lemma $\ref{lem:Q+spread}$, with $\xi = 1/4$ gives us the expected result for the step $n+1$.
\end{proof}
\bigskip

One easily notices that $(r_n)_{n\in\N}$ is a strictly increasing sequence. Moreover, for all $N$ in $\N$ we have that for all $1\leq i \leq N_X$, $v_i$ belongs to $B(0,R_{min})$. Therefore, by taking $N$ big enough (bigger than $N_1$ say) we have that 

$$\forall i \in \{1,\dots,N_X\}, \quad B(0,2R_{min}) \subset B(v_i,r_{N}).$$

\bigskip
This remark leads directly to the following corollary which stands for Proposition $\ref{prop:centredball}$ in the case when a point on the trajectory is far from the boundary of $\Omega$.

\bigskip
\begin{cor}\label{cor:centredballfar}
Let $f$ be the mild solution of the Boltzmann equation described in Theorem $\ref{theo:boundcutoff}$ and suppose that $f$ satisfies Proposition $\ref{prop:upheaval}$ with $\tau_0=0$.
\\Let $\Delta_T$ be in $(0,\delta_T]$ and take $\tau_1$ in $(0,\Delta_T]$.
\\ Then for all $0<l\leq \delta_X$, there exists $a(l,\tau_1,\Delta_T)>0$ and $0<\tilde{t}(l,\tau_1,\Delta_T)<\tau_1$ such that for all $t$ in $[\tau_1,\Delta_T]$, and every $(x,v)$ in $\bar{\Omega} \times \R^d$: if there exists $t_1 \in [0,t-\tilde{t}(l,\tau_1,\Delta_T)]$ such that $X_{t_1,t}(x,v)$ belongs to $\Omega_l$ then

$$f(t,x,v) \geq a(l,\tau_1,\Delta_T) \mathbf{1}_{B(0,2R_{min})}(v).$$
\end{cor}
\bigskip

\bigskip
\begin{proof}[Proop of Corollary $\ref{cor:centredballfar}$]
This is a direct consequence of Proposition $\ref{prop:spreadfar}$.
\par Indeed, take $0< l \leq \delta_X$, $0<\tau_1 \leq \Delta_T$ and $R=R(\Delta_T)>0$ such that $R\geq 3R_{min}$ and $l/R\leq \Delta_T$. Then  take $N_2 \geq N_1$ big enough such that $l/(2^{N_2}R) < \tau_1$. We emphasize here that $N_2$ depends on to $\tau_1$ so we write $N_2(\tau_1)$.
\par Now apply Proposition $\ref{prop:spreadfar}$ with $N = N_2(\tau_1)$ and for $t$ in $[\tau_1,\Delta_T]$. We obtain exactly Corollary $\ref{cor:centredballfar}$ (since $B(0,2R_{min}) \subset B(v_i,r_{N})$ for all $i$ and $R\geq 3R_{min}$) with

$$a(l,\tau_1,\Delta_T) = a_{N_2(\tau_1)}(l,\Delta_T) \quad \mbox{and} \quad \tilde{t}(l,\tau_1,\Delta_T) =\frac{l}{2^{N_2}R(\Delta_T)},$$
and the fact that $\bigcup\limits_{1 \leq i \leq N_X}B\left(x_i,\delta_x/2^{N}\right)$ covers $\bar{\Omega}$.
\end{proof}
\bigskip

\section{The cutoff case: geometry and grazing trajectories}\label{sec:cutoff_grazing}

We now turn to the case when the characteristic trajectory never escapes a small distance from the boundary of our convex domain $\Omega$.
\par Intuitively, by considering the case where $\Omega$ is a circle, one can see that such a behaviour is possible only when the angles of collisions with the boundary remain small (which corresponds in high dimension to the scalar product of the velocity with the outside normal being close to zero), or the angle is important but the norm of the velocity or the time of motion is small. Thus, by using the spreading property of the $Q^+$ operator we may be able to create larger balls in between two rebounds against the boundary because the latters should not change the velocity too much.
\par The study of grazing collisions will follow this intuition. First of all Section $\ref{subsec:geometry}$ proves a geometric lemma dealing with the fact that if the velocities are bounded from below and above, then for short times, the possibility for a trajectory to stay very close to the boundary implies that the velocity do not change a lot over time. Then Section $\ref{subsec:spreadgrazing}$ spreads a lower bound, in the same spirit as the last subsection, up to the point when this lower bound covers a centred ball in velocity. Notice that the geometric property forces us to work with velocities whose norm is bounded from below and so we shall have to take into account the speed of the spreading.


\subsection{Geometric study of grazing trajectories}\label{subsec:geometry}
The key point of the study of grazing collisions is the following geometric lemma. We emphasize here that this is the only part of the article where we need the fact that $\Omega$ is $C^2$.

\bigskip
\begin{prop}\label{prop:grazinggeo}
Let $\Omega$ be an open convex bounded $C^2$ domain in $\R^d$ and let $0<v_m<v_M$.
\\Then, for all $\eps >0$ there exists $t_\eps(v_M)$ such that for all $0<\tau_2\leq t_\eps(v_M)$ there exists $l_\eps(v_m,\tau_2) > 0$ such that for all $x$ in $\bar{\Omega}$ and all $v$ in $\R^d$ with $v_m \leq \norm{v} \leq v_M$,

$$\left(\forall s \in [0,\tau_2],\: X_{s}(x,v)\notin \Omega_{l_\eps(v_m,\tau_2)}\right) \Longrightarrow \left(\forall s \in [0,t_\eps(v_M)],\: \norm{V_s(x,v)-v}\leq \eps\right).$$

Furthermore, $l_\eps(v_m,\cdot)$ is an increasing function.
\end{prop}
\bigskip

The following is dedicated to the proof of Proposition $\ref{prop:grazinggeo}$.

\bigskip
We recall that for $x$ in $\bar{\Omega}$ and $v$ in $\R^d$ we define, see Appendix $\ref{appendix:transport}$, $t_{min}(x,v)$ to be the time of the first proper rebound when we start from $x$ with a velocity $-v$. This means that $t_{min}(x,v)$ does not take into account the case where a ball rolls on the boundary. This implies that one cannot hope to get continuity of the function $t_{min}$ because changing the velocity slightly may lead to a proper rebound instead of a rolling movement.
\par This being said, we define a time of collision against the boundary which will not take into account the possibility of rolling along the boundary of $\Omega$. This will not be too restrictive as we are considering a $C^2$ convex domain and therefore a trajectory that stays on the boundary will only reach a stopping point which happens only on a set of measure zero in the phase space (see Appendix $\ref{appendix:transport}$). Therefore we define for $x$ in $\bar{\Omega}$ and $v$ in $\R^d$, the first  forward  contact with the boundary, $t(x,v)$. It exists by the same arguments as for $t_{min}$. Notice that if $x$ is on $\partial\Omega$ then for all $v \neq 0$ we have that $t(x,v) =0$ if and only if $n(x)\cdot v \geq 0$, with $n(x)$ being the outward normal to $\partial\Omega$ at the point $x$.
\par We have the following Lemma dealing with the continuity of the outward normal to $\partial\Omega$ at the first forward contact point which will be of great interest for proving the crucial Proposition $\ref{prop:grazinggeo}$.

\bigskip
\begin{lemma}\label{lem:grazinggeo}
Let $\Omega$ be an open convex bounded $C^1$ domain in $\R^d$.
\\Then $\func{t}{(x,v)}{t(x,v)}$ is continuous from $\bar{\Omega}\times \left(\R^d-\{0\}\right)$ to $\R^+$.
\end{lemma}
\bigskip

\bigskip
\begin{proof}[Proof of Lemma $\ref{lem:grazinggeo}$]
Let suppose that $t$ is not continuous at $(x_0,v_0)$ in $\Omega\times \left(\R^d-\{0\}\right)$. Then

$$\exists \eps >0,\: \forall N\geq 1,\:\exists (x_N,v_N), \quad \left\{\begin{array}{l}\displaystyle{\norm{x_0-x_N}\leq 1/N} \vspace{2mm} \\\vspace{2mm} \displaystyle{\norm{v_0-v_N}\leq 1/N} \end{array}\right. \mbox{and}\quad \abs{t(x_0,v_0)-t(x_N,v_N)}>\eps.$$

\bigskip
If we still denote by $d_{\Omega}$ the diameter of $\Omega$, we obviously have that for all $N$, $0\leq t(x_N,v_N) \leq d_{\Omega}/\norm{v_N}$. Thus, $\left(t(x_N,v_N)\right)_{N\in\N}$ is a bounded sequence of $\R$ and we can extract a converging subsequence $\left(t(x_{\phi(N)},v_{\phi(N)})\right)$ such that $T = \lim\limits_{N\to+\infty}t(x_{\phi(N)},v_{\phi(N)})$.
\par By construction (see Appendix $\ref{appendix:transport}$) we have that for all $N$ in $\N$, $x_{\phi(N)} + t(x_{\phi(N)},v_{\phi(N)})v_{\phi(N)}$ belongs to $\partial\Omega$ which is closed. Moreover, this sequence converges to $x_0+T v_0$ which therefore is on $\partial\Omega$.
\par Finally we have that $\abs{t(x_0,v_0) - T}\geq \eps$. Since $\bar{\Omega}$ is convex,  the segment $[x_0,x_0 + \max(t(x_0,v_0),T)v_0]$ stays in $\bar{\Omega}$ and intersect the boundary at least at two distinct points. By convexity of the domain, this implies that the extreme points of the latter segment have to be on the boundary
which means that $x_0$ belongs to $\partial\Omega$ which is a contradiction.

\bigskip
Therefore, $t$ is continuous in $\Omega\times \left(\R^d-\{0\}\right)$. By the definition of $t(x,v)$ we have its continuity at the boundary. Indeed, $n(x)\cdot v \geq 0$ means we came from inside the domain to reach that point and we have
$$\abs{t(x',v)-t(x,v)} \leq \frac{\norm{x-x'}}{\norm{v}}.$$

\end{proof}
\bigskip

We are now ready to prove the geometric Proposition $\ref{prop:grazinggeo}$.

\bigskip
\begin{proof}[Proof of Proposition $\ref{prop:grazinggeo}$]
Consider $\eps >0$ and $0<v_m<v_M$.

\bigskip
\textbf{Step $1$: the case of segments}. The first step is to understand that if a whole trajectory stays close to the boundary, then the angle made by the velocity with respect to the normal at the point of collision is close to $\pi/2$ for dimension $d=2$. The same behaviour in higher dimensions is described by the scalar product of the direction of the trajectory and the normal being close to zero. One has to remember that controlling $\norm{V_s(x,v)-v}$ is the same as controlling the scalar products of the trajectory and the normal on the boundary at each collision point (see definition of $V_s(x,v)$ in Appendix $\ref{appendix:transport}$).
\par Let $x$ be on $\partial\Omega$ and $p$ in $\N^*$. We define
$$\Gamma_p(x) = \left\{\abs{n(x)\cdot v} : \: v\in\mathbb{S}^{d-1}\:\mbox{s.t.}\: n(x)\cdot v < 0 \:\:\mbox{and}\:\:\forall s \in [0,t(x,v)], \:x+sv \notin \Omega_{1/p}  \right\},$$
with $\Omega_{1/p}$ being defined by $\eqref{Ul}$.

\bigskip
$\Gamma_p(x)$ gives us the values of scalar products between a normal on the boundary and all the directions that create a characteristic trajectory which stays at a distance less than $1/p$ from the boundary in between two distinct rebounds (see Fig.$\ref{fig:anglegrazing}$). This is exactly what we would like to control uniformly on the boundary.
\par We remark that $\Gamma_p(x)$ is not empty because $\Omega$ and, thus, $\Omega_{1/p}$ are convex and by the geometric theorem of Hahn-Banach we can separate $\Omega_{1/p}$ and a disjoint convex ball containing $x$. It is also straightforward, a mere Cauchy-Schwartz inequality, that $\Gamma_p(x)$  is bounded from above by $1$. Therefore we can define, for all $p$ in $\N^*$,

$$\function{h_p}{\partial\Omega}{\R^+}{x}{\sup\Gamma_p(x).}$$

\bigskip
We are going to prove that $(h_p)_{p\in\N^*}$ satisfies the following properties: it is a decreasing sequence of functions, $h_p$ is continuous in $x$ for each $p\geq 1$ and for all $x$ in $\partial\Omega$ $(h_p(x))_{p\in\N^*}$ converges to $0$.
\par The fact that $(h_p)$ is decreasing is obvious.
\par In order to prove the continuity of $h_p$ we take an $x$ on the boundary and $v$ in $\mathbb{S}^{d-1}$ such that $\abs{n(x)\cdot v}$ is in $\Gamma_p(x)$. We have that for all $s$ in $[0,t(x,v)]$

$$ d(x+sv,\partial\Omega) < 1/p.$$

\bigskip
The distance to the boundary is a continuous function and $[0,t(x,v)]$ is compact so there exists $s(x,v)$ in the latter interval such that $d(x+s(x,v)v,\partial\Omega)$ is maximum. Because $\Omega$ is convex we have that $\Omega_{1/p}$ is convex and therefore

$$\forall s \in [0,t(x,v)],\quad B\left(x+sv,\frac{d(x+s(x,v)v,\Omega_{1/p})}{2}\right)\cap \Omega_{1/p} = \emptyset.$$

\bigskip
Then for all $x'$ on the boundary such that $\norm{x-x'}\leq d(x+s(x,v)v,\Omega_{1/p})/2$ we have that for all $s$ in $[0,t(x',v)]$, $x'+sv$ is not in $\Omega_{1/p}$. Lemma $\ref{lem:grazinggeo}$ gives us that if $x'$ is close to $x$ then $t(x',v) >0$ and thus $v$ is not tangential at $x'$ either. Moreover $\Omega$ is $C^2$ so the outward normal to the boundary is continuous and therefore for $x'$ even closer to $x$ we have that $v$ is such that $\abs{n(x')\cdot v}$ is also in $\Gamma_p(x')$. To conclude, we notice that the scalar product is continuous and therefore for all $\eta >0$ we obtain

$$-\eta \leq \Big|\abs{n(x')\cdot v}-\abs{n(x)\cdot v}\Big| \leq \eta,$$
when $x'$ is close enough to $x$.

\bigskip
The same arguments with the same constants (since our continuous functions act on compact sets and therefore are uniformly continuous) if $x'$ is close to $x$ then taking $\abs{n(x')\cdot v}$ in $\Gamma_{1/p}(x')$ we have $\abs{n(x)\cdot v}$ in $\Gamma_{1/p}(x)$ and the same inequality as above. This gives us the continuity of $h_p$ at $x$. Indeed, we showed that for all $x'$ close to $x$ and for all element $u$ in $\Gamma_{1/p}(x)$ we can find an element $u'$ in $\Gamma_{1/p}(x')$ that is close to $u$.
\par Finally, it remains to show that for $x$ on the boundary we have that $h_p(x)$ tends to $0$ as $p$ tends to $+\infty$.
\\One can notice that the vector $-n(x)$ is the maximum possible in $\Gamma_p(x)$ and is exactly the direction of the diametre in $\Omega$ passing by $x$. Hence, simple convexity arguments lead to the fact that if all the segments of the form $[x,x-t(x,-n(x))n(x)]$ intersect $\Omega_{1/p}$ then we have that for all $x$ on the boundary, there exists $v_p(x)$ in $\mathbb{S}^{d-1}$ such that $n(x)\cdot v_p(x) = -h_p(x)$. Moreover, the segment $[x,x+t(x,v_p(x))v_p(x)]$ is tangent to $\Omega_{1/p}$ and we denote by $x_p$ its first contact point (see Fig.$\ref{fig:anglegrazing}$). The convexity of $\Omega$ and $\Omega_{1/p}$ shows that, as $p$ increases, $x_p$ gets closer to $x$ and to the boundary ($\Omega$ is convex). Therefore $v_p(x)$ tends to a tangent vector of the boundary at $x$. This shows that

$$\lim\limits_{p \to +\infty}h_p(x) = 0$$
in the case where all the segments of the form $[x,x-t(x,-n(x))n(x)]$ intersect $\Omega_{1/p}$.

\bigskip
We now come to the case where the segments of the form $[x,x-t(x,-n(x))n(x)]$ do not all intersect $\Omega_{1/p}$. If for all $p$, this segment does not intersect $\Omega_{1/p}$ this implies by convexity of $\Omega$ that $[x,x-t(x,-n(x))n(x)]$ is included in $\partial\Omega$. But then $-n(x)$ is not only a normal vector to the boundary at $x$ but also a tangential one at $x$. Geometrically this means that $x$ is a corner of $\partial\Omega$ and $n(x)$ is ill-defined. This is impossible for $\Omega$ being $C^2$. Hence, for all $x$ on the boundary, it exists $p(x)$ such that the segment at $x$ intersect $\Omega_{p(x)}$. However, $\Omega$ is $C^2$ and we also have Lemma $\ref{lem:grazinggeo}$. Those two facts implies that $p(x)$ is continuous on $\partial\Omega$ which is compact and therefore $p(x)$ reaches a maximum. Let us call this maximum $P$. For all $p\geq P$, all the segments of the form $[x,x-t(x,-n(x))n(x)]$, $x$ in $\partial\Omega$, intersect $\Omega_P$ and we conclude thanks to the previous case.

\bigskip
Thanks to these three properties and the fact that $\partial\Omega$ is compact, we are able to use Dini's theorem. We therefore find that $(h_p)_{p\in\N^*}$ converges uniformly to $0$. By taking $p_\eps$ big enough we have that for a segment of a characteristic trajectory joining two points on the boundary to be outside $\Omega_{p_\eps}$ we must have $\Gamma_{p_\eps} \leq \eps$ for any $x$ on the boundary (see Fig.$\ref{fig:anglegrazing}$).

\bigskip
\textbf{Step $2$: more general trajectories}. We take $x$ in $\partial\Omega$ and $v$ such that $v_m\leq \norm{v} \leq v_M$ and we suppose that for a given $t>0$

$$\forall s \in [0,t], \quad X_s(x,v) \notin \Omega_{1/p_{(\eps/2N_{max})}},$$
$N_{max}$ to be define later.

\bigskip
We are about to find a uniformly small time such that trajectories having at least two collisions against the boundary do not undergo an important evolution of velocity. This will be achieved thanks to the facts that $\norm{v}\leq v_M$ and that the maximum of the scalar product is attained at a critical vector and which is the only one that needs to be controlled.
\par Thanks to Proposition $\ref{prop:nbrebounds}$, $(X_s(x,v))_{s}$ has countably many rebounds against the boundary (almost surely a finite number in fact). We denote by $(t_i)_{(i\in\N)}$ the sequence of times between consecutive collisions and by $(l_i)_{i\in\N}$ the distance travelled during these respective times. We have that
$$\forall i \in \N, \quad l_i = \abs{v}t_i \quad\mbox{and}\quad v_m t\leq \sum\limits_{i\in \N} l_i \leq v_M t.$$

Therefore, for all $\eta >0$, there exists $N_{\eta}(x,v)$ in $\N$ such that
\begin{equation}\label{controlsmalltimes}
\sum\limits_{i>N_{\eta}(x,v)} t_i \leq \eta.
\end{equation}

By continuity of $t(x,v)$, see Lemma $\ref{lem:grazinggeo}$, and the fact that $t(x,v)=0$ if and only if $n(x)\cdot v \geq 0$, we have that for $\eta$ small enough $\eqref{controlsmalltimes}$ yields
\begin{equation}\label{controlsmallangles}
\sum\limits_{i>N_{\eta}(x,v)} \abs{n(x_i)\cdot v_i} \leq \eps/4,
\end{equation}
where $v_i$ is the velocity after the $i^{th}$ rebound and $x_i$ is the $i^{th}$ footprint.
\par $t(x,v)$ is uniformly continuous on the compact $\partial\Omega \times \{\abs{v} = v_M\}$ (see Lemma $\ref{lem:grazinggeo}$) therefore the footprints of $(X_s(x,v))_{s\in[0,t]}$ are uniformly continuous and therefore there exists $\alpha^{(1)}_X > 0$ and $N_{max}$ in $\N$ such that
\begin{equation}\label{Nmax}
\forall x,x' \in \partial\Omega \:\:\:\mbox{s.t.}\:\:\:\norm{x-x'}\leq \alpha_X, \:\forall v_m \leq \abs{v} \leq v_M, \quad N_{\eta}(x,v) \leq N_{max}-1.
\end{equation}
We have now defined $N_{max}$.

\bigskip
The first property to notice is that  if $(X_s(x,v))_{s\in[0,t]}$ has at least two rebounds against the boundary, then at each of them the scalar product between the incoming velocity and the outward normal is less than $\eps/2N_{max}$.
\par Secondly, $\Omega$ is $C^2$ and therefore $n(x)$ is uniformly continuous on the boundary. Thus, the specular reflection operator $\mathcal{R}_x$ is uniformly continuous on $\partial\Omega \times B(0,v_M)$:
\begin{equation}\label{controlreflection}
\exists \alpha^{(2)}_X > 0,\:\forall x,x' \in \partial\Omega \:\:\:\mbox{s.t.}\:\:\:\norm{x-x'}\leq \alpha_X, \quad \norm{\mathcal{R}_x-\mathcal{R}_{x'}}\leq \eps/4N_{max}.
\end{equation}

\bigskip
We want to be sure that straight trajectories stay in our domain of uniformity so we consider
$$t \leq t_\eps(v_M) = \max\left(\frac{\alpha_X}{v_M},\frac{1}{p_{\eps/2N_{max}}v_M}\right),$$
where $\alpha_X = \min(\alpha^{(1)}_X,\alpha^{(2)}_X)$ defined in $\eqref{Nmax}$ and $\eqref{controlreflection}$.
To conclude, thanks to $\eqref{Nmax}$ and $\eqref{controlsmallangles}$, if $(X_s(x,v))_{s\in[0,t]}$ collides at least twice with the boundary then

$$\forall s \in [0,t], \quad \norm{v-V_s(x,v)}\leq 2\sum_{i\in\N}\abs{n(x_i)\cdot v_i} \leq 2 \sum_{i \leq N_{max}-1} \frac{\eps}{4N_{max}} + 2\frac{\eps}{4} = \eps.$$

\bigskip
Roughly speaking we do not allow the velocities near the critical direction to bounce against the wall and for the grazing ones we run them for a short time, preventing them from escaping a small neighbourhood where the collisions behave almost the same everywhere (see Fig.$\ref{fig:anglegrazing}$).

\bigskip
To conclude our proof, it only remains to find $l\leq 1/p_{\eps/2N_{max}}$ that prevents trajectories staying in $\Omega_{l}$ but go through only one rebound with a scalar product greater than $\eps/2$ from happening. This is easily achieved by taking $l$ small enough such that not a single trajectory with a scalar product greater than $\eps/2N_{max}$ can stay inside $\Omega_{l}$ during a time $\tau$. Indeed, one part of these trajectories will overcome a straight line of lenght at least $v_m \tau /2$ and making a scalar product greater than $\eps/2N_{max}$. The distance from the boundary of the extremal point of these straight lines is therefore, by convexity, uniformly bounded from below (e.g. in dimension $2$ it is bounded by $v_m\tau\eps/4N_{max}$. Taking $l_{\eps}(v_m,\tau)$ being the minimum between this lower bound and $1/p_{\eps/2N_{max}}$ gives us the required distance from the boundary.

\bigskip
\begin{figure}[!h]
\begin{center}
\includegraphics[scale=0.68]{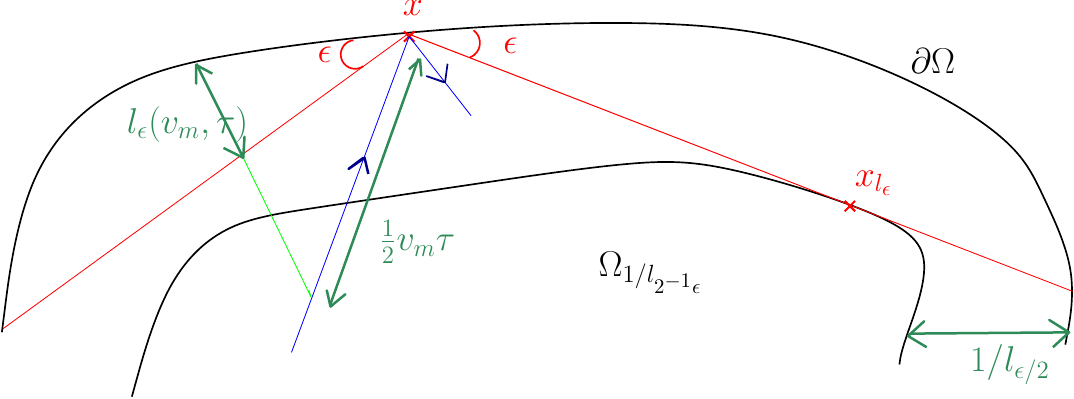}
\end{center}
\caption{\footnotesize Control on grazing trajectories}
\label{fig:anglegrazing}
\end{figure}
\bigskip
\end{proof}
\bigskip

\bigskip
\begin{remark}\label{rem:constructivegrazing}
In the case of $\Omega$ is a strictly convex $C^3$ domain, the proof of Proposition $\ref{prop:grazinggeo}$ can be easily made constructive thanks to the tools developed by Guo \cite{Gu6}.
\par In that case we have the existence of $\func{\xi}{\R^d}{\R}$ to be $C^3$ such that 
$$\Omega = \{x\in\R^d,\quad \xi (x)<0\}$$
and such that $\nabla\xi \neq 0$ on $\partial\Omega$ and there exists $C_\xi>0$ such that 
$$\partial_{ij}\xi (x)v_iv_j \geq C_\xi \norm{v}^2$$
for all $x$ in $\bar{\Omega}$ and all $v$ in $\R^d$. It allows us to define the following bounded functional along a characteristic trajectories $(X_s,V_s)$,
$$\alpha(s) = \xi^2(X_s) + \left[V_s\cdot\nabla\xi (X_s)\right]^2 - 2 \left[V_s\cdot\nabla^2\xi (X_s)\cdot V_s\right]\xi (X_s) \geq 0. $$
The latter functional satisfies that if $X_{s_0}$ is on $\partial\Omega$ then 
$$\alpha(s_0) = \left[V_{s_0}\cdot\nabla\xi (X_{s_0})\right]^2 = \left[V_{s_0}\cdot n(X_{s_0})\right]^2\abs{\nabla\xi (X_{s_0})}^2.$$
$\alpha$ thus encodes the evolution of the scalar product between the velocity of the trajectory and the normal to $\Omega$ at the footprints of the characteristic. If the characteristic trajectory starts with a velocity $v$ such that $v_m \leq \norm{v} \leq v_M$, as in Proposition $\ref{prop:grazinggeo}$, Lemma $1$ and Lemma $2$ of \cite{Gu6} shows that in between two consecutive collision with the boundary at time $s_1$ and $s_2$ we have the existence of $C_\xi >0$ such that
\begin{eqnarray}
\abs{s_1-s_2} &\geq& C_\xi \frac{\sqrt{\alpha(s_1)}}{v_M^2\abs{\nabla\xi(X_{s_0})}}, \label{timeinterval}
\\e^{C_\xi(v_m+1)s_1}\alpha(s_1) &\leq& e^{C_\xi(v_M+1)s_2}\alpha(s_2), \label{ineqalpha1}
\\e^{-C_\xi(v_M+1)s_1}\alpha(s_1) &\geq& e^{-C_\xi(v_m+1)s_2}\alpha(s_2). \label{ineqalpha2}
\end{eqnarray}
With $\eqref{timeinterval}$ we can control the minimum time between two consecutive collisions with the boundary and therefore the minimum lenght of a segment between two consecutive collisions, uniformly in $x$ and $v$ (since $\nabla\xi$ is bounded from below on $\partial\Omega$ and non-vanishing) . We therefore obtain a uniform maximum number of collisions during the given time $T$. Finally, $\eqref{ineqalpha1}$ and $\eqref{ineqalpha2}$ bounds uniformly the evolution of the scalar product between two consecutive collision and therefore the maximum evolution of $V_s(x,v)$ on the whole trajectory for a given time $T$. Plugging those constructive constants into the study we just made gives explicit constants in Proposition $\ref{prop:grazinggeo}$.
\end{remark}
\bigskip

Now that we understand how grazing trajectories behave geometrically we can turn our attention to their effects combined with the spreading property of the Boltzmann $Q^+$ operator.


\subsection{Spreading effect along grazing trajectories}\label{subsec:spreadgrazing}

In order to use the geometrical behaviour of grazing characteristic trajectories, one needs to consider velocities that are bounded from below. However, we would like to spread a lower bound up to ball centred at $0$ where a lower bound on the norm of velocities is impossible. We shall overcome this problem using the flexibility of the spreading property of the $Q^+$ operator, Lemma $\ref{lem:Q+spread}$, which allows us to extend the radius of the ball from $0$ up to $\sqrt{2}$ times the initial radius.
\par The idea is to spread the initial lower bound by induction as long as the origin is strictly outside, where we are allowed to use the geometrical property of grazing characteristics. Finally, a last iteration of the spreading property, not requiring any \textit{a priori} knowledge on characteristics, will include $0$ in the lower bound.
\par In Corollary $\ref{cor:centredballfar}$ we can fix a special time $\tau_1$ of crossing the frontier of some $\Omega_l$ allowing us to derive a lower bound for our function in this special case. The second case of grazing trajectories is dealt with Proposition $\ref{prop:grazinggeo}$ where we can find an $l$ for $\Omega_l$ to control the evolution of the velocity. Our goal now will be to find all the constants that are still free and to finally find a time of collision small enough that it will remain the same during all the iteration scheme.

\bigskip
We now fix all the constants that remain to be fixed in Corollary $\ref{cor:centredballfar}$ thanks to Proposition $\ref{prop:grazinggeo}$.
\par Let

\begin{equation}\label{DeltaT}
\Delta_T = \min\left(\delta_T, t_{\delta_V/4}(3R_{min})\right).
\end{equation}

Next we define, for $\xi$ in $(0,1)$,

\begin{equation} \label{rni}
\left\{\begin{array}{rl} \displaystyle{r_0(\xi) }&\displaystyle{= \delta_V} \vspace{2mm}\\\vspace{2mm} \displaystyle{r_{n+1}(\xi) }&\displaystyle{= \sqrt{2}(1-\xi)r_n(\xi)-\frac{\delta_V}{4}.} \end{array}\right.
\end{equation}

We have that $\left(r_n(1/2-5/(8\sqrt{2}))\right)_{n\in\N}$ is a strictly increasing sequence. Therefore, it exists $N_{max}$ such that

$$r_{N_{max}}\left(\frac{1}{2}-\frac{5}{8\sqrt{2}}\right) \geq 2R_{min}.$$

Now we fix $N$ in $\N^*$ greater than $N_{max}$. With this $N$ and Proposition $\ref{prop:upheaval}$ at $\tau_0=0$, we construct $v_1,\dots,v_{N_X}$.
\par For $i$ in $\{1,\dots,N\}$ we take $\xi^{(i)}$ in $(0,1/4-5/(8\sqrt{2})]$ and we define $N_{max}(i)$ to be such that $0 \notin B\left(v_i,r_n(\xi^{(i)})\right)$ for all $n < N_{max}(i)$ and $0 \in B\left(v_i,r_{N_{max}(i)}(\xi^{(i)})\right)$. We can in fact take $\xi^{(i)}$ such that $0 \in \mbox{Int}\left(B\left(v_i,r_{N_{max}(i)}(\xi^{(i)})\right)\right)$.
\par Therefore we have that for all $i$ in $\{1,\dots,N_X\}$,

$$\delta_i = \norm{v_i} - r_{N_{max}(i)-1}(\xi^{(i)}) \geq 0,$$
which is strictly positive if and only if $N_{max}(i) >0$. We consider 

\begin{equation}\label{v_m}
v_m = \min\limits_{i \in \{1,\dots,N_X\}}\{\delta_i; \: \delta_i >0\}.
\end{equation}

We can now define:

\begin{eqnarray}
\forall \: 0<\tau \leq \Delta_T,&& \quad R(\tau) = \max\left(3R_{min}, \:\frac{2\delta_X}{\tau}+1\right),\label{R}
\\ &&\quad \tau_1(\tau) = \tau - \frac{2\delta_X}{R(\tau)} >0 , \label{tau1}
\\ &&\quad \tilde{t}(\tau) = \tilde{t}(l(\tau),\tau_1(\tau),\Delta_T). \label{alpha}
\end{eqnarray}

Finally, we define  $l(\tau)$ 

\begin{eqnarray}
\forall \: 0<\tau \leq \Delta_T, &&\quad \tau_2(\tau) = \min\left(\Delta_T,\frac{\delta_X}{R(\tau)}\right),\label{tau2}
\\ &&\quad l(\tau) = \min\left(\delta_X, l_{\delta_V/4}\left(v_m,\tau_2(\tau)\right)\right).\label{l}
\end{eqnarray}

\bigskip
We also build up the following sequence, where $R$, $l$ and $\tau_1$ depend on $\tau$,

\begin{equation}\label{bn}
\left\{\begin{array}{rl} \displaystyle{b^{(i)}_0(\tau,\Delta_T) }&\displaystyle{= a_0e^{-(\Delta_T-\tau) C_L  \langle R\rangle^{\gamma^+}} }\vspace{2mm}\\\vspace{2mm} \displaystyle{b^{(i)}_{n+1}(\tau,\Delta_T) }&\displaystyle{= \min\left( C_Q r_n^{d+\gamma}(\xi^{(i)})^{d/2-1}\frac{\delta_X}{2^{n+2}R}e^{-\tau C_L  \langle R\rangle^{\gamma^+}}b^{(i)}_n(\tau,\Delta_T)^2; a(l,\tau_1,\Delta_T)\right)}\end{array}\right.
\end{equation}
 $\xi^{(i)}$ was defined above and $a(l,\tau,\Delta_T)$ was defined in Corollary $\ref{cor:centredballfar}$.

\bigskip
We are now ready to state the next Proposition which is the complement of Proposition $\ref{prop:spreadfar}$ in the case when the trajectory stays close to the boundary. We remind the reader that $0<\tilde{t}(\tau) <\tau_1(\tau)$.

\bigskip
\begin{prop}\label{prop:spreadgrazing}
Let $f$ be the mild solution of the Boltzmann equation described in Theorem $\ref{theo:boundcutoff}$ and suppose that $f$ satisfies Proposition $\ref{prop:upheaval}$ with $\tau_0=0$.
\\Consider $0<\tau \leq \Delta_T$ and take $i$ in $\{1,\dots,N_X\}$ such that $N_{max}(i)>1$.
\\For all $n$ in $\{0,\dots,N_{max}(i) -1\}$ we have that for all $t$ in $[\tau - \delta_X/(2^nR(\tau)),\Delta_T]$, all $x$ in $B(x_i,\delta_X/2^n)$ and all $v$ in $B(0,R(\tau))$, if
$$\forall s \in [0,t-\tilde{t}(\tau)], \quad X_{s,t}(x,v) \notin \Omega_{l(\tau)}$$
then
$$f(t,x,v) \geq b^{(i)}_n(\tau,\Delta_T) \mathbf{1}_{B\left(v_i,r_n(\xi^{(i)})\right)}(v),$$
all the constants being defined in $\eqref{DeltaT}, \eqref{rni}, \eqref{l}, \eqref{R}, \eqref{tau1}, \eqref{alpha}$ and $ \eqref{bn}$.
\end{prop}
\bigskip

\bigskip
\begin{proof}[Proof of Proposition $\ref{prop:spreadgrazing}$]
We are going to use the same kind of induction we used to prove Proposition $\ref{prop:spreadfar}$. So we start by fixing $i$ such that $N_{max}(i) >1$.

\bigskip
\textbf{Step $1$: Initialization}. The initialisation is simply Proposition $\ref{prop:upheaval}$ and the first term in the Duhamel formula $\eqref{mildCO}$ starting at $\tau$, with the control from above on $L$ thanks to Lemma $\ref{lem:L}$.

\bigskip
\textbf{Stef $2$: Proof of the induction}. We consider the case where the Proposition is true at $n \leq N_{max}(i)-2$.
\\We take $t$ in $[\tau - \delta_X/(2^{n+1}R(\tau)),\Delta_T]$, $x$ in $B(x_i,\delta_X/2^{n+1})$ and all $v$ in $B(0,R(\tau))$. 
\par We suppose now that for all $s \in [0,t-\tilde{t}(\tau)]$ we have that $X_{s,t}(x,v)$ does not belongs to $\Omega_{l(\tau)}$.

\bigskip
To shorten notation we will skip the dependence in $\tau$ of the constant.
\par We use the definition of $f$ being a mild solution to write $f(t,x,v)$ under its Duhamel form $\eqref{mildCO}$ where both parts are positive. As in the proof of Proposition $\ref{prop:spreadfar}$, we control, uniformly on $t$, $x$ and $v$, the $L$ operator from above. This yields

\begin{equation}\label{1ststepgrazing}
f(t,x,v) \geq e^{-C_L \tau \langle R \rangle^{\gamma^+}} \int_{t-\frac{\delta_X}{2^{n+1}R}}^{t-\frac{\delta_X}{2^{n+2}R}}  Q^+ \left[f(s,X_{s,t}(x,v),\cdot),f(s,X_{s,t}(x,v),\cdot)\right]\left(V_{s,t}(x,v)\right)\:ds, 
\end{equation}
where we used $\norm{V_{s,t}(x,v)} = \norm{v} \leq R$. We also emphasize here that this inequality holds true thanks to the definition of $\eqref{R}$:

$$t-\frac{\delta_X}{2^{n+1}R} \geq \tau - \frac{\delta_X}{R} > 0.$$

\bigskip
The goal is now to apply the induction to the triplet $(s,X_{s,t}(x,v),v_*)$, where $v^*$ is the integration parameter inside the $Q^+$ operator, with $\norm{v_*} \leq R$.
\par We notice first that for all $s$ in $[t-\delta_X/(2^{n+1}R),t-\delta_X/(2^{n+2}R)]$

\begin{eqnarray*}
\norm{x_i - X_{s,t}(x,v)} &\leq& \frac{\delta_X}{2^{n+1}} + \norm{x-X_{s,t}(x,v)}  
\\ &\leq& \frac{\delta_X}{2^{n+1}} + (t-s)R \leq \frac{\delta_X}{2^{n}},
\end{eqnarray*}
so that for all $s$ in $[t-\delta_X/(2^{n+1}R),t-\delta_X/(2^{n+2}R)]$, $X_{s,t}(x,v)$ belongs to $B(x_i,\delta_X/2^n)$.

\bigskip
We also note that

$$\left[t-\frac{\delta_X}{2^{n+1}R},t-\frac{\delta_X}{2^{n+2}R}\right] \subset \left[\tau-\frac{\delta_X}{2^{n}R},\Delta_T\right].$$

\bigskip
We have two different cases to consider for $(X_{s',s}(X_{s,t}(x,v),v_*))_{s'\in [0,s-\tilde{t}]}$.
\par Either for some $s'$ in $[0,s-\tilde{t}]$, $X_{s',s}(X_{s,t}(x,v),v_*)$ belongs to $\Omega_l$ and then we can apply Corollary $\ref{cor:centredballfar}$:

\bigskip
\begin{eqnarray}
f(s,X_{s,t}(x,v),v_*) &\geq& a(l,\tau_1,\Delta_T)\mathbf{1}_{B(0,2R_{min})}(v_*)\nonumber
\\ &\geq& b^{(i)}_n(\tau,\Delta_T) \mathbf{1}_{B\left(v_i,r_n(\xi^{(i)})\right)}(v) \label{grazing1},
\end{eqnarray}
\bigskip
since $v_i$ is in $B(0,R_{min})$.
\par Or for all $s'$ in $[0,s-\tilde{t}] \subset [0,\tau_2]$, $X_{s',s}(X_{s,t}(x,v),v_*)$ does not belong to $\Omega_l$ and then we can apply our induction property at rank $n$ and we reach the same lower bound $\eqref{grazing1}$.

\bigskip
Plugging $\eqref{grazing1}$ into $\eqref{1ststepgrazing}$ implies, thanks to the spreading property of $Q^+$, Lemma $\ref{lem:Q+spread}$ with $\xi = \xi^{(i)}$,

\begin{eqnarray}
&&\quad\quad f(t,x,v) \geq \label{2ndstepgrazing}
 \\&& C_Q r_n^{d+\gamma}(\xi^{(i)})^{d/2-1}e^{-\tau C_L  \langle R\rangle^{\gamma^+}}(b^{(i)}_n)^2 \int_{t-\frac{\delta_X}{2^{n+1}R}}^{t-\frac{\delta_X}{2^{n+2}R}}  \mathbf{1}_{B(v_i,\sqrt{2}(1-\xi^{(i)})r_n(\xi^{(i)}))}\left(V_{s,t}(x,v)\right)\:ds. \nonumber
\end{eqnarray}

\bigskip
To conclude we use the fact that for all $s$ in $[0,t-\tilde{t}]$ we have that $X_{s,t}(x,v)$ does not belong to $\Omega_{l}$ and that $t-\tilde{t} >\tau_2$. Moreover, $n+1 \leq N_{max}(i)-1$ and so if $v$ belongs to $B\left(v_i,r_n(\xi^{(i)})\right)$ we have that $v_m \leq \norm{v}$. We apply Proposition $\ref{prop:grazinggeo}$, raising

$$\forall s \in \left[t-\frac{\delta_X}{2^{n+1}R},t-\frac{\delta_X}{2^{n+2}R}\right], \quad \norm{v-V_{s,t}(x,v)}\leq \frac{\delta_V}{4}.$$

Therefore, if $v$ belongs to $B\left(v_i,r_{n+1}(\xi^{(i)})\right)$ we have that $V_{s,t}(x,v)$ belongs to $B(v_i,\sqrt{2}(1-\xi^{(i)})r_n(\xi^{(i)}))$ for all $s$ in $[t-\delta_X/(2^{n+1}R),t-\delta_X/(2^{n+2}R)]$.
\par Therefore if $v$ belongs to $B\left(v_i,r_{n+1}(\xi^{(i)})\right)$ we can compute explicitly $\eqref{2ndstepgrazing}$ and obtain the expected induction. 
\end{proof}
\bigskip

Thanks to Proposition $\ref{prop:spreadgrazing}$, we can build, for all $x$ and all $v$, a lower bound that will contain $0$ in its interior after another use of the spreading property of the $Q^+$ operator. The next Corollary is the complement of Corollary $\ref{cor:centredballfar}$.

\bigskip
\begin{cor}\label{cor:centredballgrazing}
Let $f$ be the mild solution of the Boltzmann equation described in Theorem $\ref{theo:boundcutoff}$ and suppose that $f$ satisfies Proposition $\ref{prop:upheaval}$ with $\tau_0=0$.
\\Let $\Delta_T$ be defined by $\eqref{DeltaT}$.
\\ There exists $r_V>0$ such that for all $\tau \in (0,\Delta_T]$ there exists $b(\tau)>0$ such that for all $t$ in $[\tau,\Delta_T]$
\par If, for $\tilde{t}(\tau)$ and $l(\tau)$ being defined by $\eqref{alpha} - \eqref{l}$, 
$$\forall s \in [0,t-\tilde{t}(\tau)], \quad X_{s,t}(x,v) \notin \Omega_{l(\tau)}.$$

Then

$$f(t,x,v) \geq b(\tau) \mathbf{1}_{B(0,r_V)}(v).$$
\end{cor}
\bigskip

\bigskip
\begin{proof}[Proof of Corollary $\eqref{cor:centredballgrazing}$]
We are going to use the spreading property of $Q^+$ one more time.
\bigskip
We recall that we chose $N \geq N_{max} \geq N_{max}(i)$ for all $i$. By definition of $N_{max}(i)$,

$$\forall i \in \{1,\dots,N_X\}, \quad 0 \in \mbox{Int}\left(B\left(v_i,r_{N_{max}(i)}(\xi^{(i)})\right)\right).$$

We define

$$r_V = \min\left\{r_{N_{max}(i)}(\xi^{(i)}) - \norm{v_i}; \: i \in \{1,\dots,N_X\} \right\},$$
which only depends on $\delta_V$ and $(v_{i})_{i \in \{1,\dots,N_X\}}$. By construction we see that

\begin{equation}\label{inclusioncentredball}
\forall i \in \{1,\dots,N_X\}, B(0,r_V) \subset B\left(v_i,r_{N_{max}(i)}(\xi^{(i)})\right).
\end{equation}

\bigskip
Now we take $\tau$ in $(0,\Delta_T]$ and we take $t$ in $[\tau,\Delta_T]$, $x$ in $B(x_i,\delta_X/2^{N})$ and $v$ in $B(0,R(\tau))$ such that

$$\forall s \in [0,t-\tilde{t}(\tau)], \quad X_{s,t}(x,v) \notin \Omega_{l(\tau)},$$

We have that $t$ is in $[\tau - \delta_X/(2^{N_{max}(i)-1}R(\tau)),\Delta_T]$ and $x$ in $B(x_i,\delta_X/2^{N_{max}(i)-1})$ ($N \geq N_{max}(i)$). By the same methods we reached $\eqref{2ndstepgrazing}$, we obtain for $n=N_{max}(i)$

\begin{eqnarray}
&& \quad\quad f(t,x,v) \geq \label{corspreading}
\\&& C_Q r_n^{d+\gamma}(\xi^{(i)})^{d/2-1}e^{-\tau C_L  \langle R\rangle^{\gamma^+}}(b^{(i)}_n)^2 \int_{t-\frac{\delta_X}{2^{n+1}R}}^{t-\frac{\delta_X}{2^{n+2}R}}  \mathbf{1}_{B(v_i,\sqrt{2}(1-\xi^{(i)})r_n(\xi^{(i)}))}\left(V_{s,t}(x,v)\right)\:ds. \nonumber
\end{eqnarray}

\bigskip
This time the conclusion is different because we cannot bound the velocity from below since our lower bound contains $0$. However, $\eqref{inclusioncentredball}$ allows us to bound from below the integrand in $\eqref{corspreading}$ by a function depending only on the norm. Moreover, $\norm{v} = \norm{V_{s,t}(x,v)}$ along characteristic trajectories (see Proposition $\eqref{prop:characteristics}$). Thus we obtain the expected result by taking

$$b(\tau) = \min\left\{b_{N_{max}(i)}^{(i)}; \: i \in \{1,\dots,N_X\}\right\}.$$
\end{proof}
\bigskip

\section{Maxwellian lower bound in the cutoff case: proof of Theorem $\ref{theo:boundcutoff}$}\label{sec:cutoff_finalproof}

This section gathers all the results we proved above and proves the main Theorem in the case of a cut-off collision kernel.


\subsubsection{Proof of Proposition $\eqref{prop:centredball}$}\label{subsubsec:proofpropcentred}

By combining Corollary $\ref{cor:centredballfar}$ and Corollary $\ref{cor:centredballgrazing}$ we can deal with any kind of characteristic trajectory. This is expressed by the following lemma.

\bigskip
\begin{lemma}\label{lem:centredball}
Let $f$ be the mild solution of the Boltzmann equation described in Theorem $\ref{theo:boundcutoff}$ and suppose that $f$ satisfies Proposition $\ref{prop:upheaval}$ with $\tau_0=0$.
\\There exists $\Delta_T > 0$ and $r_V>0$ such that for all $0 <\tau \leq \Delta_T$ there exists $a(\tau)$ and

$$\forall t \in [\tau,\Delta_T],\:\mbox{a.e.}\:(x,v) \in \bar{\Omega}\times \R^d, \quad f(t,x,v) \geq a(\tau)\mathbf{1}_{B(0,r_V)}(v). $$
\end{lemma}
\bigskip

\bigskip
\begin{proof}[Proof of Lemma $\ref{lem:centredball}$]
In Corollary $\ref{cor:centredballgrazing}$ we constructed $\Delta_T$ and $r_V$.
\par We now take $\tau$ in $(0,\Delta_T]$ and consider $t$ in $[\tau,\Delta_T]$, $(x,v)$ in $\bar{\Omega} \times \R^d$ where $f$ is a mild solution of the Boltzmann equation.

\bigskip
We remind the reader that $l(\tau)$ and $\tilde{t}(\tau)$ have been introduced in $\eqref{l}$ and $\eqref{alpha}$.
\par Either $(X_{s,t}(x,v))_{s\in[0,t-\tilde{t}(\tau)]}$ meets $\Omega_{l(\tau)}$ and then we use Corollary $\ref{cor:centredballfar}$ to get

$$f(t,x,v) \geq a(l(\tau),\tau_1(\tau),\Delta_T)\mathbf{1}_{B(0,r_V)}(v).$$

\bigskip
Or $(X_{s,t}(x,v))_{s\in[0,t-\tilde{t}(\tau)]}$ stays out of $\Omega_{l(\tau)}$ and then we use Corollary $\ref{cor:centredballgrazing}$ to get

$$f(t,x,v) \geq b(\tau)\mathbf{1}_{B(0,r_V)}(v).$$

\bigskip
We obtain Lemma $\ref{lem:centredball}$ with $a(\tau) = \min\left(a(l(\tau),\tau_1(\tau),\Delta_T),b(\tau)\right)$.
\end{proof}
\bigskip

We now have all the tools to prove Proposition $\ref{prop:centredball}$.

\bigskip
\begin{proof}[Proof of Proposition $\ref{prop:centredball}$]
Let $\tau$ be strictly positive and consider $t$ in $[\tau/2,\tau]$.

\bigskip
\textbf{First case}. We suppose that $f$ satisfies Proposition $\ref{prop:upheaval}$ with $\tau_0=0$.
\par We can compare $t$ with $\Delta_T$ constructed in Lemma $\ref{lem:centredball}$.
\par If $t \leq \Delta_T$ then we can apply the latter lemma and obtain for almost every $(x,v)$ in $\bar{\Omega}\times\R^d$

\begin{equation}\label{1st}
f(t,x,v) \geq a\left(\frac{\tau}{2}\right)\mathbf{1}_{B(0,r_V)}(v).
\end{equation}

If $t \geq \Delta_T$ then we can use Duhamel formula $\eqref{mildCO}$ and bound $f(t,x,v)$ by its value at time $\Delta_T$ (as we did in the first step of the induction in the proof of Proposition $\ref{prop:spreadfar}$) and use Lemma $\ref{lem:centredball}$ at $\Delta_T$. This gives, for $\norm{v}\leq r_V$,

\begin{eqnarray}
f(t,x,v) &\geq& f(\Delta_T, X_{\Delta_T, t}(x,v),V_{\Delta_T,t}(x,v))e^{-(t-\Delta_T)C_L\langle r_V \rangle^{\gamma^+}} \nonumber
\\ &\geq& a(\Delta_T)e^{-(\tau-\Delta_T)C_L\langle r_V \rangle^{\gamma^+}}\mathbf{1}_{B(0,r_V)}(V_{\Delta_T,t}(x,v)) \nonumber
\\&=& a(\Delta_T)e^{-(\tau-\Delta_T)C_L\langle r_V \rangle^{\gamma^+}}\mathbf{1}_{B(0,r_V)}(v). \label{2nd}
\end{eqnarray}

\bigskip
We just have to take the minimum of the two lower bounds $\eqref{1st}$ and $\eqref{2nd}$ to obtain Proposition $\ref{prop:centredball}$.

\bigskip
\textbf{Second case}. We do not assume anymore that $f$ satisfies Proposition $\ref{prop:upheaval}$ with $\tau_0=0$.
\par Thanks to Proposition $\ref{prop:upheaval}$ with $\tau_0=\tau/4$ we have that
$$\forall t \leq 0,\:\forall x \in \bar{\Omega}, \: v\in \R^d, \quad F(t,x,v)=f(t+\tau_0,x,v)$$
is a mild solution of the Boltzmann equation satisfying exactly the same bounds as $f$ in Theorem $\ref{theo:boundcutoff}$ and such that $F$ has the property of Proposition $\ref{prop:upheaval}$ at $0$ (note that all the constants depend on $\tau_0$).
\par Hence, we can apply the first step for $t'$ in $[\tau/4,3\tau/4]$ and $F(t',x,v)$.  This gives us the expected result for $f(t,x,v)$ for $t = t'+\tau_0$ in $[\tau/2,\tau]$.

\end{proof}
\bigskip


\subsection{Proof of Theorem $\ref{theo:boundcutoff}$}\label{subsec:prooftheocutoff}

As was mentioned in Section $\ref{subsec:strategy}$, the main difficulty in the proof is to create a lower bound depending only on the norm of the velocity. This has been achieved thanks to Proposition $\ref{prop:centredball}$. If we consider this proposition as the start of an induction then it leads to exactly the same process developed by Mouhot in \cite{Mo2}, Section $3$. Therefore we will just explain how to go from Proposition $\ref{prop:centredball}$ to Theorem $\ref{theo:boundcutoff}$, without writing too many details.

\bigskip
First of all, by using the spreading property of the $Q^+$ operator once again we can grow the lower bound derived in Proposition $\ref{prop:centredball}$. 

\bigskip
\begin{prop}
Let $f$ be the mild solution of the Boltzmann equation described in Theorem $\ref{theo:boundcutoff}$.
\\ For all $\tau$ in $(0,T)$, there exists $R_0 >0$ such that
$$\forall n \in \N, \forall t \in \left[\tau-\frac{\tau}{2^{n+1}},\tau\right], \forall (x,v) \in \bar{\Omega}\times\R^d, f(t,x,v) \geq a_n(\tau)\mathbf{1}_{B(0,r_n)}(v),$$
with the induction formulae
$$a_{n+1}(\tau) = \emph{\mbox{cst}}\:C_e\frac{a_n^2(\tau)r_n^{d+\gamma}\xi_n^{d/2+1}}{2^{n+1}} \quad \mbox{and} \quad r_{n+1} = \sqrt{2}(1-\xi_n)r_n,$$
where $(\xi_n)_{n\in\N}$ is any sequence in $(0,1)$ and $r_0 = r_V$, $a_0(\tau)$ and $C_e$ only depend on $\tau$, $E_f$ (and $L^{p_\gamma}_f$ if  $\Phi$ satisfies $\eqref{assumptionPhi}$ with $\gamma < 0$).
\end{prop}
\bigskip

Indeed, we take the result in Proposition $\ref{prop:centredball}$ to be the first step of our induction and then, for $n$ in $\N$ and $0<\tau < T$,  the Duhamel form of $f$ gives

\begin{eqnarray*}
&&f(t,x,v) \geq  
\\&& \quad \int^{\tau - \frac{\tau}{2^{n+2}}}_{\tau - \frac{\tau}{2^{n+1}}}e^{-C_L(t-s)\langle v \rangle^{\gamma^+}}Q^+\left(f(s,X_{s,t}(x,v),\cdot),f(s,X_{s,t}(x,v),\cdot)\right)(V_{s,t}(x,v))ds,
\end{eqnarray*}
for $t$ in $[\tau - \tau/2^{n+2},\tau]$.
\par Using the induction hypothesis together with the spreading property of $Q^+$ (Lemma $\ref{lem:Q+spread}$) leads us, as in the proofs of Propositions $\ref{prop:spreadfar}$ and $\ref{prop:spreadgrazing}$, to a bigger ball in velocity, centred at $0$. The only issue is to avoid the $v$-dependence in $\mbox{exp}\left[-C_L(t-s)\langle v \rangle^{\gamma^+}\right]$ which can easily be achieved as shown at the end of the proof of Proposition $3.2$ in \cite{Mo2}. This is exactly the same result as Proposition $3.2$ in \cite{Mo2}, but with the added uniformity in $x$.

\bigskip
As in Lemma $3.3$ in \cite{Mo2}, we can take an appropriate sequence $(\xi_n)_{n\in\N}$ and look at the asymptotic behaviour of $\left(a_n(\tau)\right)_{n\in\N}$. We obtain the following
$$\forall \tau >0,\:\exists \rho_\tau,\theta_\tau >0,\:\forall (x,v) \in \bar{\Omega}\times\R^d, \quad f(t,x,v) \geq \frac{\rho_\tau}{(2\pi\theta_\tau)^{d/2}}e^{-\frac{\abs{v}^2}{2\theta}}.$$
Notice that, again, the result is uniform in space, since the previous one was, and that the constants $\rho_\tau$ and $\theta_\tau$ only depend on $\tau$ and the physical quantities associated to $f$.

\bigskip
To conclude, it remains to make the result uniform in time. As noticed in \cite{Mo2}, Lemma $3.5$, the results we obtained so far do not depend on an explicit form of $f_0$ but just on uniform bounds and continuity that are satisfied at all times, positions and velocities. Therefore, we can do the same arguments starting at any time and not $t=0$. So if we take $\tau >0$ and consider $\tau \leq t < T$ we just have to make the proof start at  $t - \tau$ to obtain Theorem $\ref{theo:boundcutoff}$.

\section{Exponential lower bound in the non cutoff case: proof of Theorem $\ref{theo:boundnoncutoff}$} \label{sec:noncutoff}

In this section we prove the immediate appearance of an exponential lower bound for solutions to the Boltzmann equation $\eqref{BE}$ in the case of a collision kernel satisfying the non cutoff property.
\par The definition of being a mild solution in the case of a non cutoff collision kernel, Definition $\ref{def:mildnoncutoff}$ and equation $\eqref{noncutoffsplitting}$, shows that we are in fact dealing with an almost cutoff kernel to which we add a non locally integrable remainder.  The strategy will mainly follow what we did in the case of a cutoff collision kernel with the addition of controlling the loss due to the added term.
\par As in the last section, we shall first prove that solutions to the Boltzmann equation can be uniformly bounded from below by a lower bound depending only on the norm of the velocity and then use the proof given for the non cutoff case in \cite{Mo2}. We will do that by proving the immediate appearance of localised ``upheaval points" and spreading them up to the point where we reach a uniform lower bound that includes a ball in velocity centred at the origin. The spreading effect will be done both in the case where the trajectories reach a point far from the boundary and in the case of grazing trajectories. At this point we will spread this lower bound on the norm of the velocity up to the exponential lower bound we expect.

\bigskip
We gather here two lemmas, proved in \cite{Mo2}, which we shall use in this section. They control the $L^\infty$-norm of the linear operator $S_\eps$ and of the bilinear operator $Q^1_\eps$. We first give a property satisfied by the linear operator $S$, $\eqref{noncutoffsplitting}$,  which is Corollary $2.2$ in \cite{Mo2}, where we define
\begin{equation}\label{mb}
m_b = \int_{\mathbb{S}^{d-1}}b\left(\mbox{cos}\:\theta\right)(1-\mbox{cos}\:\theta)d\sigma = \left|\mathbb{S}^{d-2}\right|\int_0^\pi b\left(\mbox{cos}\:\theta\right)(1-\mbox{cos}\:\theta) \mbox{sin}^{d-2}\theta \:d\theta.
\end{equation}

\bigskip
\begin{lemma}\label{lem:S}
Let $g$ be a measurable function on $\R^d$. Then
$$\forall v \in \R^d,\quad \abs{S[g](v)} \leq C_g^S\langle v \rangle^{\gamma^+},$$
where $C_g^S$ is defined by:
\begin{enumerate}
\item If $\Phi$ satisfies $\eqref{assumptionPhi}$ with $\gamma \geq 0$ or if $\Phi$ satisfies $\eqref{assumptionPhimol}$, then
$$C^S_g =\emph{\mbox{cst}}\: m_b C_\Phi e_g.$$
\item If $\Phi$ satisfies $\eqref{assumptionPhi}$ with $\gamma \in (-d,0)$, then
$$C^S_g = \emph{\mbox{cst}}\: m_b C_\Phi \left[e_g+ l^p_g\right],\quad p > d/(d+\gamma).$$
\end{enumerate}
\end{lemma}
\bigskip

We will compare the lower bound created by the cutoff part of our kernel to the remaining part $Q^1_\eps$. To do so we need to control its $L^\infty$-norm. This is achieved thanks to Lemma $2.5$ in \cite{Mo2}, which we recall here.

\bigskip
\begin{lemma}\label{lem:Q1}
Let $B=\Phi b$ be a collision kernel satisfying $\eqref{assumptionB}$, with $\Phi$ satisfying $\eqref{assumptionPhi}$ or $\eqref{assumptionPhimol}$ and $b$ satisfying $\eqref{assumptionb}$ with $\nu \in [0,2)$. Let $f,g$ be measurable functions on $\R^d$.
\\Then
\begin{enumerate}
\item If $\Phi$ satisfies $\eqref{assumptionPhi}$ with $2+\gamma \geq 0$ or if $\Phi$ satisfies $\eqref{assumptionPhimol}$, then
$$\forall v \in \R^d,\quad \abs{Q^1_b(g,f)(v)} \leq \emph{\mbox{cst}}\: m_b C_\Phi \norm{g}_{L^1_{\tilde{\gamma}}}\norm{f}_{W^{2,\infty}}\langle v \rangle ^{\tilde{\gamma}}.$$
\item If $\Phi$ satisfies $\eqref{assumptionPhi}$ with $2+\gamma < 0$, then
$$\forall v \in \R^d,\quad \abs{Q^1_b(g,f)(v)} \leq \emph{\mbox{cst}}\: m_b C_\Phi \left[\norm{g}_{L^1_{\tilde{\gamma}}}+\norm{g}_{L^p}\right]\norm{f}_{W^{2,\infty}}\langle v \rangle ^{\tilde{\gamma}}$$
with $p > d/(d+\gamma+2)$.
\end{enumerate}
\end{lemma}
\bigskip


\subsection{A lower bound only depending on the norm of the velocity}\label{subsec:centredballNCO}

In this section we prove the following proposition, which is exactly Proposition $\ref{prop:centredball}$ in the non-cutoff framework.

\bigskip
\begin{prop}\label{prop:centredballNCO}
Let $f$ be the mild solution of the Boltzmann equation described in Theorem $\ref{theo:boundnoncutoff}$.
\\For all $0 < \tau < T$ there exists $a_0(\tau)>0$ such that
$$\forall t \in [\tau/2,\tau],\:\forall(x,v) \in \bar{\Omega}\times \R^d, \quad f(t,x,v) \geq a_0(\tau)\mathbf{1}_{B(0,r_V)}(v), $$
$r_V$ and $a_0(\tau)$ only depending on $E_f$, $E'_f$, $W_f$ (and $L^{p_\gamma}_f$ if  $\Phi$ satisfies $\eqref{assumptionPhi}$ with $\gamma < 0$).
\end{prop}
\bigskip

\bigskip
\begin{proof}[Proof of Proposition $\ref{prop:centredballNCO}$]
As before, we would like to create localised ``upheaval points" (as the ones created in Proposition $\ref{prop:upheaval}$) and then extend them. Both steps are done, as in the cutoff case, by induction along the characteristics.
\par We have the following inequality
\begin{equation}\label{ineqNCO}
Q^+_\eps(f,f) + Q^1_\eps(f,f) \geq Q^+_\eps(f,f) -\abs{ Q^1_\eps(f,f)}.
\end{equation}
From the definition of being a mild solution in the non-cutoff case (Definition $\ref{def:mildnoncutoff}$), for any $0< \eps < \eps_0$,

\begin{eqnarray}
\label{mildNCO1} &&
\\&&f(t,X_t(x,v),V_t(x,v)) = f_0(x,v)\mbox{exp}\left[-\int_0^t \left(L_{\eps} + S_{\eps}\right)[f(s,X_s(x,v),\cdot)](V_s(x,v))\:ds\right]\nonumber 
\\ &&\quad\quad + \int_0^t \mbox{exp}\left(-\int_s^t \left(L_{\eps} + S_{\eps}\right)[f(s',X_{s'}(x,v),\cdot)](V_{s'}(x,v))\:ds'\right)\nonumber
\\ && \quad\quad\quad\quad\quad \left(Q^+_{\eps} + Q^1_{\eps}\right)[f(s,X_s(x,v),\cdot), f(s,X_s(x,v),\cdot)](V_s(x,v))\: ds.\nonumber
\end{eqnarray}
Due to Lemmas $\ref{lem:L}$, $\ref{lem:S}$ and $\ref{lem:Q1}$ we find that

\begin{equation}\label{lemmas1}
L_{\eps}[f] \leq C_f n_{b^{CO}_{\eps}}\langle v\rangle^{\gamma^+}, \quad S_{\eps}[f] \leq C_f m_{b^{NCO}_{\eps}}\langle v\rangle^{\gamma^+}
\end{equation}
and 

\begin{equation}\label{lemmas2}
\abs{Q^1_{\eps}(f,f)} \leq C_f m_{b^{NCO}_{\eps}}\langle v\rangle^{(2 + \gamma)^+}
\end{equation}
where $C_f >0$ is a constant depending on $E_f$, $E'_f$, $W_f$ (and $L^{p_\gamma}_f$ if  $\Phi$ satisfies $\eqref{assumptionPhi}$ with $\gamma < 0$).

\bigskip
The proof of Proposition $\ref{prop:centredballNCO}$ is divided into three different inductions that are dealt with in the same way as in the proof of Proposition $\ref{prop:centredball}$. Each induction represents a step in the proof: one to create localised initial lower bounds (Lemma $\ref{lem:positivity1}$), another one to deal with non-grazing trajectories (Proposition $\ref{prop:spreadfar}$) and the final one for grazing trajectories (Proposition $\ref{prop:spreadgrazing}$). Therefore, we will just point out below the only changes we need to make those inductions work in the non-cutoff case.

\bigskip
In all the inductions in the cutoff case, the key point of the induction was to control at each step quantities of the form
\begin{eqnarray*}
f(t,x,v) &\geq& \int_{t^{(1)}_n}^{t^{(2)}_{n}}\mbox{exp}\left(-\int_s^t \left(L_{\eps} + S_{\eps}\right)[f(s',X_{s'}(x,v),\cdot)](V_{s'}(x,v))\:ds'\right)
\\&&\quad\quad\left(Q^+_{\eps} + Q^1_{\eps}\right)[f(s,X_s(x,v),\cdot), f(s,X_s(x,v),\cdot)](V_s(x,v))\: ds,
\end{eqnarray*}
where $(t^{(1)}_n)_{n\in \N}$, $(t^{(2)}_n)_{n\in \N}$ are defined differently for grazing and non-grazing trajectories (see proofs of Propositions $\ref{prop:spreadfar}$ and $\ref{prop:spreadgrazing}$).

\bigskip
Much like those previous induction, and using $\eqref{ineqNCO}$, $\eqref{mildNCO1}$ and $\eqref{lemmas1}-\eqref{lemmas2}$, if $f(t,x,v) \geq a_n\mathbf{1}_{B(\bar{v},r_n)}$ then

$$f(t,x,v) \geq \int_{t^{(1)}_n}^{t^{(2)}_{n}}e^{-C_f^\eps(R)}\left(a_n^2 Q^+_{\eps}[\mathbf{1}_{B(\bar{v},r_n)},\mathbf{1}_{B(\bar{v},r_n)}] - C_f m_{b^{NCO}_{\eps}}\langle R\rangle^{(2 + \gamma)^+}\right)(V_s(x,v))\: ds,$$
which leads to

\begin{eqnarray}
&& \quad f(t,x,v) \geq \int_{t^{(1)}_n}^{t^{(2)}_{n}}e^{-C_f^\eps(R)} \label{finalineqNCO} 
\\ &&\quad\quad\left(a_n^2\mbox{cst}\: l_{b^{CO}_{\eps}} c_\Phi r_n^{d+\gamma} \xi_n^{\frac{d}{2}-1}\mathbf{1}_{B\left(\bar{v},r_n\sqrt{2}(1-\xi_n)\right)}- C_f m_{b^{NCO}_{\eps}}\langle R\rangle^{(2 + \gamma)^+}\right)(V_s(x,v))\: ds,\nonumber
\end{eqnarray}
due to the spreading property of $Q^+_{\eps}$ (see Lemma $\ref{lem:Q+spread}$) and using the shorthand notation $C^\eps_f(R) = C_f(n_{b^{CO}_{\eps}}+m_{b^{NCO}_{\eps}})\langle R\rangle^{\gamma^+}$.

\bigskip
To conclude we notice that, thanks to the definitions $\eqref{lb}$, $\eqref{nb}$ and $\eqref{mb}$,
$$l_{b^{CO}_{\eps}} \geq l_b$$
and
\begin{equation}\label{eqv1}
n_{b^{CO}_{\eps}} \mathop{\sim}\limits_{\eps \to 0} \frac{b_0}{\nu}\eps^{-\nu}, \quad m_{b^{NCO}_{\eps}} \mathop{\sim}\limits_{\eps \to 0} \frac{b_0}{2-\nu}\eps^{2-\nu}
\end{equation}
if $\nu$ belongs to $(0,2)$ and
\begin{equation}\label{eqv2}
n_{b^{CO}_{\eps}} \mathop{\sim}\limits_{\eps \to 0} b_0\abs{\mbox{log}\eps}, \quad m_{b^{NCO}_{\eps}} \mathop{\sim}\limits_{\eps \to 0} \frac{b_0}{2}\eps^{2}
\end{equation}
for $\nu=0$.
\par Thus, at each step of the inductions we just have to redo the proofs done in the cutoff case and choose $\eps = \eps_n$ small enough such that
\begin{equation}\label{choiceepsn}
C_f m_{b^{NCO}_{\eps_n}}\langle R\rangle^{(2 + \gamma)^+} \leq \frac{1}{2} a_n^2\mbox{cst}\: l_b c_\Phi r_n^{d+\gamma} \xi_n^{\frac{d}{2}-1}.
\end{equation}

\bigskip
Proposition $\ref{prop:centredballNCO}$ follows directly from these choices plugged into the study of the cutoff case.
\end{proof}
\bigskip


\subsection{Proof of Theorem $\ref{theo:boundnoncutoff}$}\label{subsec:prooftheononcutoff}

Now that we proved the immediate appearance of a lower bound depending only on the norm of the velocity we can spread it up to an exponential lower bound. As in Section $\ref{subsec:prooftheocutoff}$, we thoroughly follow the proof of Theorem $2.1$ of \cite{Mo2}. The proof in our case is exactly the same induction, starting from Proposition $\ref{prop:centredballNCO}$. Therefore we only briefly describe how to construct the expected exponential lower bound. For more details we refer the reader to \cite{Mo2}, Section $4$.

\bigskip
We start by spreading the initial lower bound (Proposition $\ref{prop:centredballNCO}$) by induction where, at each step, we use the spreading property of the $Q^+_{\eps_n}$ operator and fix $\eps_n$ small enough to obtain a strictly positive lower bound (see $\eqref{choiceepsn}$).
\par There is, however, a subtlety in the non-cutoff case that we have to deal with. Indeed, at each step of the induction we choose an $\eps_n$ of decreasing magnitude, but at the same time in each step the action of the operator $-(Q^-_\eps + Q^2_\eps)$ behaves like (see $\eqref{finalineqNCO}$)
$$\mbox{exp}\left[-C_f\left(m_{b^{NCO}_{\eps_n}}+n_{b^{CO}_{\eps_n}}\right)(t^{(1)}_{n}-t^{(2)}_n)\langle v \rangle^{\gamma^+}\right].$$
By $\eqref{eqv1}-\eqref{eqv2}$, as $\eps_n$ tends to $0$ we have that $n_{b^{CO}_{\eps_n}}$ goes to $+\infty$ and so the action of $-(Q^-_\eps + Q^2_\eps)$ seems to decrease the lower bound to $0$ exponentially fast. The idea to overcome this difficulty is to find a time interval $t^{(1)}_{n}-t^{(2)}_n = \Delta_n$ at each step to be sufficiently small to counterbalance the effect of $n_{b^{CO}_{\eps_n}}$.

\bigskip
More precisely, by starting from Proposition $\ref{prop:centredballNCO}$ as the first step of our induction, taking $$t^{(1)}_{n} = \left(\sum_{k=0}^{n+1}\Delta_k\right)\tau, \quad t^{(2)}_{n} = \left(\sum_{k=0}^n\Delta_k\right)\tau$$
in $\eqref{finalineqNCO}$ and  fixing $\eps_n$ by $\eqref{choiceepsn}$ we can prove the following induction property

\bigskip
\begin{prop}
Let $f$ be the mild solution of the Boltzmann equation described in Theorem $\ref{theo:boundnoncutoff}$.
\\ For all $\tau$ in $(0,T)$ and any sequence $\left(\Delta_n\right)_{n\in\N}$ such that $\sum_{n\geq 0}\Delta_n =1$,
$$\forall n \in \N, \forall t \in \left[\left(\sum_{k=0}^n\Delta_k\right)\tau,\tau\right], \forall (x,v) \in \bar{\Omega}\times\R^d, f(t,x,v) \geq a_n(\tau)\mathbf{1}_{B(0,r_n)}(v),$$
with the induction formulae
$$a_{n+1}= \emph{\mbox{cst}}\: \Delta_{n+1} \emph{\mbox{exp}}\left[-[\tilde{C}_f a_n^2 r_n^{d+\gamma-\tilde{\gamma}} \xi_n^{d/2-1}]^{-\frac{\nu}{2-\nu}}\left(\sum_{k\geq n+1}\Delta_k\right)r_n^{\gamma^+}\right]a_n^2 r_n^{\gamma+d}\xi_n^{d/2+1}$$
if $\nu$ is in $(0,2)$,
$$a_{n+1}= \emph{\mbox{cst}}\: \Delta_{n+1} \emph{\mbox{exp}}\left[-\emph{\mbox{cst}}\:\emph{\mbox{log}}[\tilde{C}_f a_n^2 r_n^{d+\gamma-\tilde{\gamma}} \xi_n^{d/2-1}]\left(\sum_{k\geq n+1}\Delta_k\right)r_n^{\gamma^+}\right]a_n^2 r_n^{\gamma+d}\xi_n^{d/2+1}$$
if $\nu =0$ and
$$r_{n+1} = \sqrt{2} r_n(1-\xi_n),$$
where $(\xi_n)_{n\in\N}$ is any sequence in $(0,1)$ and $r_0 = r_V$, $a_0(\tau)$ and $\tilde{C}_f$ depend only on $\tau$, $E_f$, $E'_f$, $W_f$ (and $L^{p_\gamma}_f$ if  $\Phi$ satisfies $\eqref{assumptionPhi}$ with $\gamma < 0$).
\end{prop}
\bigskip

We emphasize here that the induction formulae are obtained thanks to the use of equivalences $\eqref{eqv1}$ and $\eqref{eqv2}$ inside the exponential term
$$e^{-C_f\left(m_{b^{NCO}_{\eps_n}}+n_{b^{CO}_{\eps_n}}\right)(t^{(1)}_{n}-t^{(2)}_n)\langle R \rangle^{\gamma^+}} \geq e^{-C_f\left(m_{b^{NCO}_{\eps_n}}+n_{b^{CO}_{\eps_n}}\right)\left(\sum_{k\geq n+1}\Delta_k\right)\langle R \rangle^{\gamma^+}}$$
(see step $2$ of proof of Proposition $4.2$, Section $4$ in \cite{Mo2}).

\bigskip
As we obtain exactly the same induction formulae as in \cite{Mo2}, the asymptotic behaviour of the coefficients $a_n$ is the same. Thus, by choosing an appropriate sequence $(\Delta_n)_{n \in \N}$, as done in \cite{Mo2}, we can construct the expected exponential lower bound independently of time.


\appendix

\section{The free transport equation: proof of Theorem $\ref{theo:transport}$}\label{appendix:transport}

In this section, we study the transport equation with a given initial data and boundary condition in a bounded domain $\Omega$. We will only consider the case of purely specular reflections on the boundary $\partial \Omega$. Those kind of interaction cannot occur for all velocities at the boundary. Indeed, for a particle to bounce back at the boundary, we need its velocity to come from inside the domain $\Omega$. To express this fact mathematically, we define
$$\Lambda^+ = \left\{(x,v) \in \partial\Omega \times \R^d: \: v\cdot n(x) \geq 0\right\},$$
where we denote by $n(x)$ the exterior normal to $\partial\Omega$ at $x$.

\bigskip
Consider $\func{u_0}{\bar{\Omega} \times \R^d}{\R}$ which is $C^1$ in $x \in \Omega $ and $L^2(\bar{\Omega}\times\R^d) = L^2_{x,v}$. We are interested in the problem stated in Theorem $\ref{theo:transport}$, $\eqref{transporteq}-\eqref{transportBC}$.
\par If $D_x(v)(u)$ denotes the directional derivate of $u$ in $x$ in the direction of $v$ we have, in the case of functions that are $C^1$ in $x$,
$$D_x(v)(u) = v\cdot \nabla_x u.$$
Therefore, instead of imposing that the solution to the transport equation should be $C^1$ in $x$, we reformulate the problem with directional derivatives.

\bigskip
Physically, the free transport equation means that a particle evolves freely in $\Omega$ at a velocity $v$ until it reaches the boundary. Then it bounces back and moves straight until it reaches the boundary for the second time and so on so forth up to time $t$. The method of characteristics is therefore the best way to link $u(t,x,v)$ to $u_0$ by just following the path used by the particle, backwards from $t$ to $0$ (see Fig.$\ref{fig:bouncing}$).

\bigskip
\begin{figure}[h!]
\begin{center}
\includegraphics[scale=0.68]{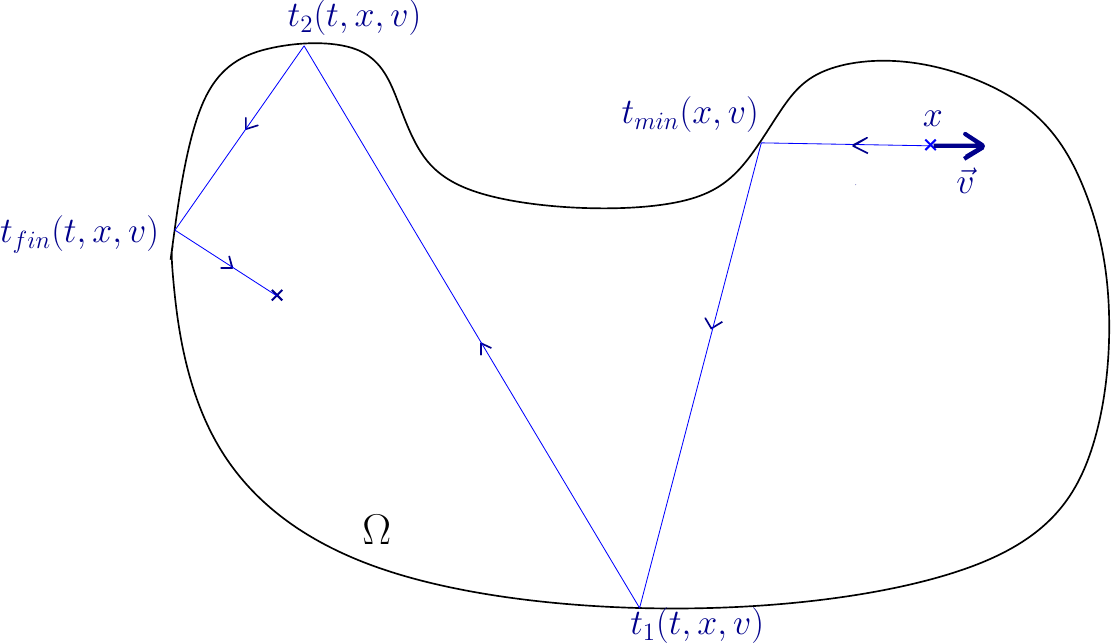}
\end{center}
\caption{\footnotesize Backward trajectory with standard rebounds}
\label{fig:bouncing}
\end{figure}
\bigskip

This method has been used in \cite{Gu5} on the half-line and in \cite{JM}, \cite{Hw}, for instance, in the case of convex media. However, in both articles they only deal with finite, or countably many,  numbers of rebounds in finite time. Indeed, the electrical field in \cite{Gu5} and \cite{Hw} makes the particles always reach the boundary with $v \cdot n(x) >0$ and \cite{JM} has a specular boundary problem with an absorption coefficient $\alpha \in [0,1)$: $u(t,x,v) = \alpha u(t,x,\mathcal{R}_x(v))$. Therefore, in the case the particle arrives tangentially to the boundary, i.e. $v\cdot n(x) = 0$, we have $\mathcal{R}_x(v) =v$ and so $u(t,x,v)=0$. This vanishing property allowed the authors to not care about the special cases where the particle starts to roll on the boundary.
\par Another way of looking at the characteristics method is to study the footprints of the trajectories on the boundary. This problem, as well as the possibility of having infinitely many rebounds in a finite time, has been tackled by Tabachnikov in \cite{Ta}. Tabachnikov only focused on boundary points since the description of the trajectories by only considering their collisions with the boundary holds a symplectic property and a volume-preserving transformation. Such properties allowed him to show that the set of points on the boundary that lead to infinitely many rebounds in finite time is of measure $0$ (\cite{Ta}, Lemma $1.7,1$). Unfortunately, in our case we would like to follow the characteristics and the study of trajectories only via their footprints on the boundary is no longer a volume-preserving transformation.

\bigskip
In our case we need to follow the path of a particle along the characteristics of the equation to know the value of our function at each step. If the particle starts to roll on the boundary (see Fig.$\ref{fig:rolling}$) we require to know for how long it will do so. The major issue is the fact that $v \cdot n(x) =0$ does not tell us much about the geometry of $\partial\Omega$ at $x$ and the possibility, or lack of, for the particle to keep moving tangentially to the boundary. Moreover, some cases lead to non physical behaviour since the sole specular collision condition implies that some pairs $(x,v) \in \partial\Omega\times \R^d$ can only be starting points, they cannot be generated by any trajectories (see Fig.$\ref{fig:stop}$). This case is mentioned quickly in the first chapter of  \cite{Ta1} but not dealt with.

\bigskip
\begin{figure}[!h]
\begin{center}
\includegraphics[scale=0.68]{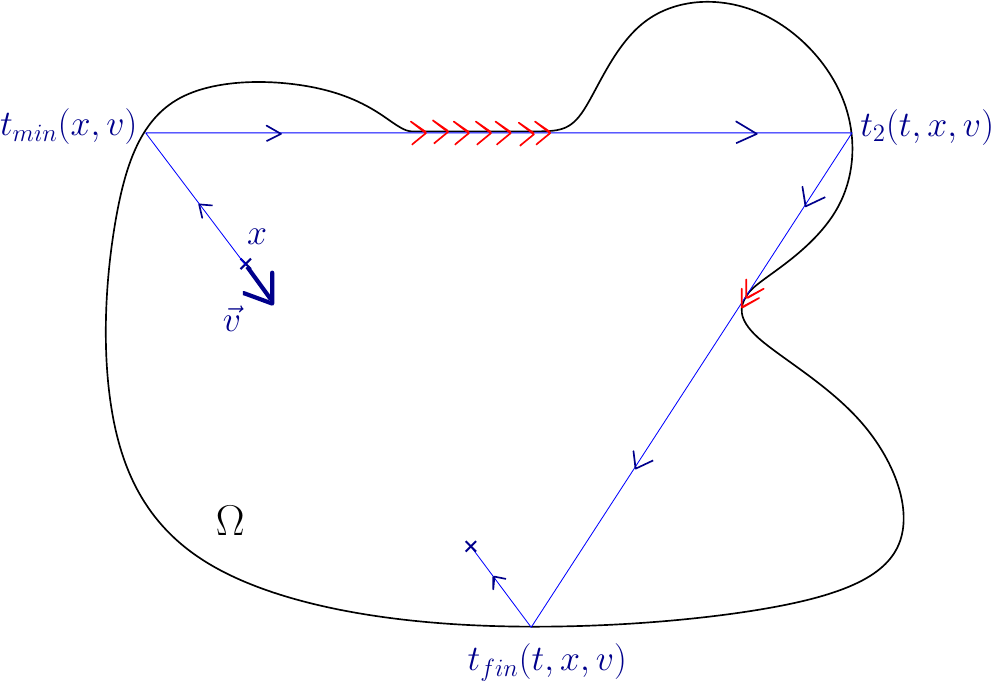}
\end{center}
\caption{\footnotesize Backward trajectory rolling on the boundary}
\label{fig:rolling}
\end{figure}
\begin{figure}[!h]
\begin{center}
\includegraphics[scale=0.68]{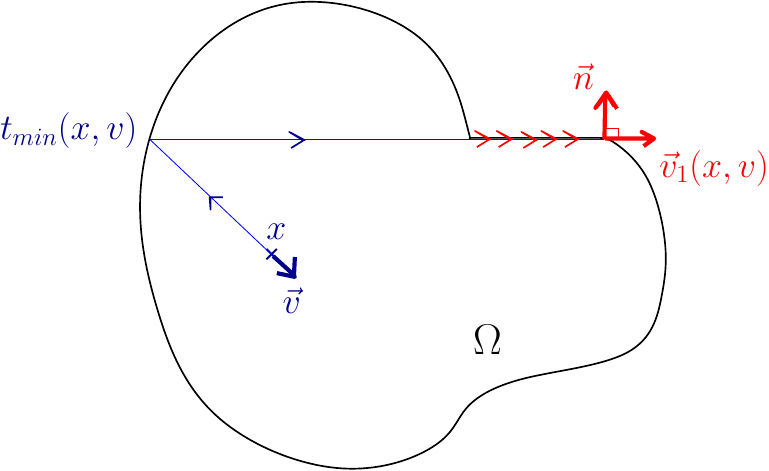}
\end{center}
\caption{\footnotesize Backward trajectory that reaches an end}
\label{fig:stop}
\end{figure}
\bigskip

Therefore, in order to prove the well-posedness of the transport equation $\eqref{transporteq}-\eqref{transportBC}$, we follow the ideas developed in \cite{Gu5} and \cite{Hw}, which consist of studying the backward trajectories that can lead to a point $(t,x,v)$, combined with the idea of countably many collisions in finite time used in \cite{JM}. However, we have to deal with the issues described above and to do so we introduce a new classification of possible interactions with the boundary (see Definition $\ref{def:rebounds}$). We also extend the result of \cite{Ta}, in terms of pair $(x,v)$ leading to infinitely many rebounds in finite time, to the whole domain $\Omega$ (Proposition $\ref{prop:nbrebounds}$). To do so we link up the study on the boundary made in \cite{Ta} with the Lebesgue measure on $\Omega$ by artificially creating volume on $\partial\Omega$ thanks to time and a foliation of the domain by parallel trajectories.

\bigskip
The section is divided as follows. First of all we shall describe and classify the collisions with the boundary in order to describe very accurately the backward trajectories of a point $(x,v)$ in $\partial\Omega\times \R^d$. We will name trajectory or characteristic any solution $(X(t,x,v),V(t,x,v))$ satisfying the initial condition $(X(0,x,v),V(0,x,v)) = (x,v)$, the boundary condition $\eqref{transportBC}$ and satisfying, in $\Omega$,
$$\left\{\begin{array}{rl} \displaystyle{\frac{dX}{dt} }&\displaystyle{= V} \vspace{2mm}\\\vspace{2mm} \displaystyle{\frac{dV}{dt} }&=\displaystyle{0.} \end{array}\right.$$
This will give us an explicit form for the characteristics and allow us to link $u(t,x,v)$ with $u_0(x^*,v^*)$, for some $x^*$ and $v^*$. Finally, we will show that the function we constructed is, indeed, a solution to the transport equation with initial data $u_0$ and specular boundary condition and that such a solution is unique.


\subsection{Study of rebounds on the boundary}\label{subsec:rebounds}

As mentionned in the introduction of this section, when a particle reaches a point at the boundary with a velocity $v$ it can bounce back (Fig.$\ref{fig:bouncing}$), keep moving straight (Fig.$\ref{fig:rolling}$) or stop moving because the specular reflection does not allow it to do anything else (Fig.$\ref{fig:stop}$), which is physically unexpected. The next definition gives a partition of the points at the boundary which takes into account those properties.

\bigskip
\begin{defi}\label{def:rebounds}
We define here a partition of $\partial\Omega\times\R^d$ that focuses on the outcome of a collision in each of the sets.
\begin{itemize}
\item The set coming from a rebound without rolling
$$\Omega_{rebounds} = \left\{(x,v) \in \partial\Omega\times\R^d: \:v\cdot n(x) <0 \right\}.$$
\item The set coming from rolling on the boundary
$$\Omega_{rolling} = \left\{(x,v) \in \partial\Omega\times\R^d: \:v\cdot n(x) = 0 \:\mbox{and}\:\: \exists \delta>0, \forall t\in [0,\delta], \:x-vt \in \bar{\Omega} \right\}.$$
\item The set of only starting points
$$\Omega_{stop} = \left\{(x,v) \in \partial\Omega\times\R^d: \:v\cdot n(x) = 0 \:\mbox{and}\:\: \forall \delta>0, \exists t\in [0,\delta], \:x-vt \notin \bar{\Omega} \right\}.$$
\item The set coming from straight line
$$\Omega_{line} = \left\{(x,v) \in \partial\Omega\times\R^d: \:v\cdot n(x) > 0 \right\}.$$
\end{itemize}
\end{defi}

\bigskip
One has to notice that any point of $\Omega_{line}$ indeed comes from a straight line arriving at $x$ with direction $v$ since $\Omega$ is open and is $C^1$ (so there is no cusp).
\\ In order to understand the behaviour expected at $\Omega_{stop}$ we have the following proposition. The proof of it gives insight into the nature of specular reflections.

\bigskip
\begin{prop}\label{prop:stop}
If we have $(x,v)$ in $\Omega_{stop}$ then there is no trajectory with specular boundary reflections that leads to $(x,v)$.
\end{prop}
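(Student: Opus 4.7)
The plan is to argue by contradiction: suppose some trajectory $(X(s), V(s))$ of the free transport with specular reflections satisfies $X(t_0) = x$ and $V(t_0) = v$ for some $t_0 > 0$. Since such a characteristic is piecewise rectilinear on $\bar{\mathcal{U}}$, with the velocity only jumping through the map $\mathcal{R}_x$ at collision times, I would examine the state of the trajectory immediately to the left of $t_0$ and split into two cases.

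In the first case, there is some $\delta > 0$ such that no boundary collision occurs in $(t_0 - \delta, t_0)$. Then on this interval the velocity is constantly $v$ and $X(s) = x - v(t_0 - s)$. Since a valid trajectory must remain in $\bar{\mathcal{U}}$, I would conclude $x - v\epsilon \in \bar{\mathcal{U}}$ for every $\epsilon \in (0,\delta)$, which directly contradicts the defining property of $\Omega_{stop}$.

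In the second case, a specular reflection occurs at time $t_0$ itself; then the left-limit of the velocity is $\mathcal{R}_x(v)$. Plugging $v \cdot n(x) = 0$ (which holds since $(x,v) \in \Omega_{stop}$) into the explicit formula $\mathcal{R}_x(v) = v - 2(v\cdot n(x))n(x)$ yields $\mathcal{R}_x(v) = v$, so the reflection is the identity here. The incoming velocity at $t_0$ is therefore again $v$, and the same straight-line backward analysis as in the first case applies, giving the same contradiction. The conceptual core of the argument is precisely this identity: at a tangential impact the specular rule does nothing, so any trajectory arriving at $(x,v)$ must have a preimage of the form $x - v\epsilon$ for small $\epsilon > 0$, and the definition of $\Omega_{stop}$ forbids such preimages from lying in $\bar{\mathcal{U}}$.

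The one delicate point I foresee is the possibility that infinitely many boundary collisions accumulate at $t_0$ from the left, so that neither of the two clean cases applies. I would handle this by observing that the intermediate collision points would then all lie on $\partial\mathcal{U}$ arbitrarily close to $x$, and the straight segments joining them must lie in $\bar{\mathcal{U}}$; passing to the limit and using continuity of $X$, these segments would still converge to the half-line $s \mapsto x - vs$ for small $s > 0$, again forcing $x - v\epsilon \in \bar{\mathcal{U}}$ for arbitrarily small $\epsilon > 0$, and closing the contradiction in this pathological case as well.
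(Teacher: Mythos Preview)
Your argument is correct and follows essentially the same route as the paper: contradiction, with the two cases ``straight line arrives at $(x,v)$'' versus ``specular rebound occurs at $x$'', the second reducing to the first because $v\cdot n(x)=0$ forces $\mathcal{R}_x(v)=v$. Your extra paragraph on an accumulation of collisions at $t_0$ goes slightly beyond what the paper writes (which handles the dichotomy more informally), but it is harmless and the core reasoning is identical.
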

\bigskip

\bigskip
\begin{proof}[Proof of Proposition $\ref{prop:stop}$]
Let us assume the contrary, that is to say $(x,v)$ is in $\Omega_{stop}$ comes from a trajectory with specular boundary reflection.
\par We have that $(x,v)$ belongs to $\partial\Omega\times\R^d$ and so if $(x,v)$ comes from a straight line it can only be (by definition of trajectories) a line containing $x$ with direction $v$ which means that $(x,v)$ comes from $\{(x-vt,v), \: t\in [0,T] \}$, for some $T>0$. But the trajectory is necessarily in $\bar{\Omega}$ and this is in contradiction with the definition of $\Omega_{stop}$.
\par Therefore, $(x,v)$ must come from a rebound after a straight line trajectory. But again we obtain a contradiction because the velocity before the rebound is $\mathcal{R}_x(v) = v$ and the backward trajectory is the one studied above.
\end{proof}
\bigskip

Now we have our partition of points on the boundary of $\Omega$, we are able to generate the backward trajectory associated to a starting point $(x,v)$ in $\bar{\Omega}\times\R^d$. The first step towards its resolution is to find the first point of real collision (if it exists) that generates $(x,v)$ (see Fig.$\ref{fig:bouncing}$). The next proposition-definition proves mathematically what the figure shows.

\bigskip
\begin{prop}\label{prop:tmin}
Let $\Omega$ be an open, bounded and $C^1$ domain in $\R^d$.
Let $(x,v)$ be in $\bar{\Omega}\times \R^d$, then we can define
$$t_{min}(x,v) = \max\left\{t\geq 0: \: x-vs \in \bar{\Omega}, \: \forall \:0\leq s \leq t\right\}.$$
Moreover we have the following properties:
\begin{enumerate}
\item if there exists $t$ in $(0,t_{min}(x,v))$ such that $x-vt$ hits $\partial\Omega$ then $(x-vt,v)$ belongs to $\Omega_{rolling}$.
\item $t_{min}(x,v) = 0$ if and only if $(x,v)$ belongs to $\Omega_{stop} \cup \Omega_{rebounds}$.
\item $\left(x-vt_{min}(x,v), v \right)$ belongs to $\Omega_{stop} \cup \Omega_{rebounds}$.
\end{enumerate}
\end{prop}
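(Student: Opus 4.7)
My plan is to derive every assertion from the following first-order observation at a $C^1$ boundary: for $z\in\partial\mathcal{U}$ and $w\in\R^N$, if $w\cdot n(z)>0$ then $z+\tau w\notin\bar{\mathcal{U}}$ for all sufficiently small $\tau>0$ (write $\partial\mathcal{U}$ locally as a $C^1$ graph tangent to $\{w\cdot n(z)=0\}$ at $z$ and expand to first order). Equivalently, by applying this to $-w$: if $w\cdot n(z)<0$ then $z-\tau w\notin\bar{\mathcal{U}}$ for small $\tau>0$. I first establish that $t_{min}(x,v)$ is well defined by the usual sup-of-a-closed-set argument: the set $A(x,v)=\{t\geq 0:\,x-sv\in\bar{\mathcal{U}}\text{ for every }s\in[0,t]\}$ contains $0$, is bounded above (since $\mathcal{U}$ is bounded and $v\neq 0$; if $v=0$ one simply sets $t_{min}=+\infty$), and closedness of $\bar{\mathcal{U}}$ combined with continuity of $s\mapsto x-sv$ yields $\sup A\in A$, so the maximum is attained.

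For (1), I would fix $t\in(0,t_{min}(x,v))$ and assume $y:=x-tv\in\partial\mathcal{U}$. The definition of $t_{min}$ immediately gives $y-\tau v=x-(t+\tau)v\in\bar{\mathcal{U}}$ for every $\tau\in[0,t_{min}(x,v)-t]$, which is exactly the $\delta>0$ requested in the definition of $\Omega_{rolling}$. The equality $v\cdot n(y)=0$ then follows by elimination from the first-order observation: $v\cdot n(y)>0$ would force $y+\tau v=x-(t-\tau)v\notin\bar{\mathcal{U}}$ for small $\tau>0$, contradicting $t-\tau\in[0,t_{min}(x,v)]$; and $v\cdot n(y)<0$ would force $y-\tau v\notin\bar{\mathcal{U}}$, contradicting $t+\tau\in[0,t_{min}(x,v)]$. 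For (2), I would note that $x\in\mathcal{U}$ already implies $t_{min}>0$ (a neighbourhood of $x$ lies in $\mathcal{U}$), so $t_{min}=0$ forces $x\in\partial\mathcal{U}$; the same first-order observation then rules out $v\cdot n(x)>0$ and the definition of $\Omega_{rolling}$ rules out the rolling case, so necessarily $(x,v)\in\Omega_{rebounds}\cup\Omega_{stop}$. The converses are immediate: $v\cdot n(x)<0$ gives $x-sv\notin\bar{\mathcal{U}}$ for small $s>0$ by the observation, while the defining condition of $\Omega_{stop}$ is literally the negation of $t_{min}(x,v)>0$.

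For (3), assuming $t_{min}(x,v)<+\infty$ I would set $y=x-t_{min}(x,v)\,v$. Maximality produces a sequence $s_n\downarrow t_{min}(x,v)$ with $x-s_nv\notin\bar{\mathcal{U}}$, and openness of $\mathcal{U}$ rules out $y\in\mathcal{U}$, hence $y\in\partial\mathcal{U}$. A direct concatenation argument then yields $t_{min}(y,v)=0$: any positive value would enlarge $A(x,v)$ beyond $t_{min}(x,v)$, contradicting maximality. Property (2) applied to $(y,v)$ thus gives $(y,v)\in\Omega_{rebounds}\cup\Omega_{stop}$. The only non-routine ingredient throughout is the first-order observation stated at the start, which rests solely on the $C^1$ regularity of $\partial\mathcal{U}$; everything else amounts to careful bookkeeping on $A(x,v)$ together with the degenerate cases $v=0$ and $t_{min}=+\infty$, so I do not expect any serious obstacle beyond keeping the boundary geometry straight.
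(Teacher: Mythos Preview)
Your argument is correct and follows essentially the same route as the paper's: both establish well-definedness of $t_{min}$ by the standard bounded/closed-set argument on $A(x,v)$ and then read off the sign of $v\cdot n$ at boundary points from the local $C^1$ geometry (the paper phrases this as ``by the definition of an exterior normal,'' you phrase it as a first-order graph expansion). One minor imprecision worth tightening: in step~(2), ruling out $v\cdot n(x)>0$ actually requires the \emph{entering} companion to your observation, namely $w\cdot n(z)<0\Rightarrow z+\tau w\in\mathcal{U}$ for small $\tau>0$, which you did not state but which follows immediately from the same local-graph picture; your reduction of (3) to (2) via $t_{min}(y,v)=0$ is, if anything, slightly cleaner than the paper's direct computation.
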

\bigskip

Property $(1)$ emphasises the fact that if, on the straight line between $x$ and $x-vt_{min}(x,v)$, the particle hits the boundary it will not be reflected and so just rolls on. Then property $(2)$ tells us than $t_{min(x,v)}$ is always strictly positive except if $(x,v)$ does not come from any trajectory of a particle or if it is the outcome of a rebound without rolling. Finally, property $(3)$ finishes the study since at $x-vt_{min}(x,v)$ the particles either come from a reflection (case $\Omega_{rebounds}$), and we can keep tracking backwards, or started its trajectory at $x-vt_{min}(x,v)$ (case $\Omega_{stop}$).

\bigskip
\begin{proof}[Proop of Proposition $\ref{prop:tmin}$]
First of all we have that $\Omega$ is bounded and so there exists $R$ such that $\bar{\Omega} \subset B(0,R)$, the ball of radius $R$ in $\R^d$.
\par Then we notice that $0$ belongs to 
$$A(x,v) = \left\{t\geq 0: \: x-vs \in \bar{\Omega}, \: \forall \:0\leq s \leq t\right\}.$$
Therefore $A(x,v)$ is not empty. Moreover, this set is bounded above by $2R/\norm{v}$ since for all $t$ in $A(x,v)$
$$R > \norm{x-vt} \geq t\norm{v} - \norm{x}.$$
Therefore we can talk about the supremum $t_{min}(x,v)$ of $A(x,v)$. Let $(t_n)_{n\in \N}$ be increasing  sequence in $A(x,v)$ that tends to $t_{min}(x,v)$. As $\bar{\Omega}$ is closed we have that $x-vt_{min}(x,v)$ belongs to $\bar{\Omega}$. Then, if $0\leq s < t_{min}(x,v)$ there exists $n$ such that $0 \leq s \leq t_n$ and so, by the property of $t_n$, $x-vs$ is in $\bar{\Omega}$. This conclude the fact that $t_{min}(x,v)$ belongs to $A(x,v)$ and so is a maximum.

\bigskip
We now turn to the proof of properties.
\par Let $(x,v)$ be in $\bar{\Omega}$ and $0 < t < t_{min}(x,v)$ such that $x-vt$ belongs to $\partial\Omega$. Then for all $0<t_1 <t < t_2 < t_{min}(x,v)$, $x-v t_1$ and $x-vt_2$ are in $\bar{\Omega}$ and so, by the definition of an exterior normal to a surface we have
$$\left[(x-vt)-(x-vt_1)\right]\cdot n(x-vt) \geq 0 \:\:\mbox{and}\: \left[(x-vt)-(x-vt_2)\right]\cdot n(x-vt) \geq 0,$$
which gives $v\cdot n(x-vt) =0$.
\\Moreover, since $t_2$ belongs to $A(x,v)$, for all $s$ in $[0,t_2-t]$, $(x-vt) -vs$ is in $\bar{\Omega}$, which means that $(x-vt,v)$ belongs to $\Omega_{rolling}$.
\par Property $(2)$ is direct since if $t_{min}(x,v)=0$ then for all $t>0$, there exists $0<s\leq t$ such that $x-vs$ does not belong to $\bar{\Omega}$ and then $v \cdot n(x) \leq 0$. So $(x,v)$ belongs to $\Omega_{rebounds}$, if $v \cdot n(x)>0$, or to $\Omega_{stop}$.
\par Finally, property $(3)$ is straightforward since $x-vt_{min}(x,v)$ is in $\partial\Omega$ (because $\Omega$ is open) and since for all $0\leq t \leq t_{min}(x,v)$, $x-vt$ is in $\bar{\Omega}$. Thus $\left[(x-vt_{min}) - (x-vt)\right]\cdot n(x-vt_{min}(x,v)) \geq 0$, which yields $v \cdot n(x-vt_{min}(x,v)) \leq 0$.Then, by the definition of $A(x,v)$ and the fact that $t_{min}(x,v)$ is its maximum, we have that either $(x-vt_{min}(x,v),v)$ belongs to $\Omega_{rebounds}$ or belongs to $\Omega_{stop}$. 
\end{proof}
\bigskip

Up to now we focused solely on the case of the first possible collision with the boundary. In order to conclude the study of rebounds for any given characteristics we have to, in some sense, count the number of rebounds without rolling that can happen in finite time. This is the purpose of the next proposition.

\bigskip
\begin{prop}\label{prop:nbrebounds}
Let $\Omega$ be a $C^1$ open, bounded domain in $\R^d$ and let $(x,v)$ be in $\bar{\Omega}\times \R^d$.
\\Then for all $t \geq 0$ the trajectory finishing at $(x,v)$ after a time $t$ has at most a countable number of rebound without rolling.
\par Moreover, this number is finite almost surely with respect to the Lebesgue measure on $\bar{\Omega}\times \R^d$
\end{prop}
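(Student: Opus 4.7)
Fix $(x,v)\in\bar{\mathcal{U}}\times\R^N$ and $t\geq 0$, and let $\mathcal{S}(x,v,t)\subset[0,t]$ be the set of times at which the backward trajectory undergoes a strict rebound (a collision landing in $\Omega_{rebounds}$; rolling collisions are excluded). By Definition \ref{def:rebounds} and Proposition \ref{prop:tmin}, between two consecutive elements of $\mathcal{S}(x,v,t)$ the trajectory consists either of a segment of positive length inside $\mathcal{U}$ or of an arc of rolling on $\partial\mathcal{U}$ at constant velocity, so a strictly positive time elapses in either case. Consequently each element of $\mathcal{S}(x,v,t)$, with the possible exception of its largest, is the left endpoint of a nonempty open interval disjoint from $\mathcal{S}(x,v,t)$; choosing a rational in each of these pairwise disjoint intervals gives an injection of $\mathcal{S}(x,v,t)$ into $\Q$ and proves that it is at most countable.

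\textbf{Strategy for the almost-sure finiteness.} Let $\mathcal{F}^\infty_t\subset\bar{\mathcal{U}}\times\R^N$ denote the set of $(x,v)$ for which $\mathcal{S}(x,v,t)$ is infinite; by time-reversibility of specular reflections this set is equally well described in terms of forward trajectories. My plan is to transfer the boundary statement of Tabachnikov (\cite{Ta}, Lemma 1.7.1: the set of post-rebound data on $\partial\mathcal{U}$ producing infinitely many rebounds in a prescribed finite time has measure zero for the natural boundary measure) into a Lebesgue-measure statement on $\bar{\mathcal{U}}\times\R^N$. The transfer will be carried out by foliating the phase space with parallel free-flow trajectories issuing from $\Lambda^-=\{(y,w)\in\partial\mathcal{U}\times\R^N:w\cdot n(y)<0\}$, the natural post-rebound (or inflow) phase.

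\textbf{Foliation and pushforward.} Denoting by $\tau_{exit}(y,w)$ the first subsequent hitting time of $\partial\mathcal{U}$ for the free flow from $(y,w)\in\Lambda^-$, I would introduce
$$\Psi(y,w,s)=(y+sw,\,w),\qquad(y,w)\in\Lambda^-,\ 0<s<\tau_{exit}(y,w).$$
Away from a Lebesgue-negligible exceptional set in $\bar{\mathcal{U}}\times\R^N$, $\Psi$ is a bijection onto its image, and a direct Jacobian computation yields the foliation identity
$$dx\,dv \;=\; \abs{w\cdot n(y)}\,ds\,d\sigma(y)\,dw,$$
with $\sigma$ the surface measure on $\partial\mathcal{U}$. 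Since the trajectory through $\Psi(y,w,s)$ is only a time-$s$ shift of the trajectory through $(y,w)$, the condition $\Psi(y,w,s)\in\mathcal{F}^\infty_t$ forces $(y,w)\in\mathcal{B}^\infty_T$, where $\mathcal{B}^\infty_T\subset\Lambda^-$ is the set of post-rebound data producing infinitely many strict rebounds on $[0,T]$ and $T=t+\tau_{exit}(y,w)\leq t+\mathrm{diam}(\mathcal{U})/\abs{w}$. Tabachnikov's theorem then gives that $\mathcal{B}^\infty_T$ has $\abs{w\cdot n(y)}\,d\sigma(y)$-measure zero for every $T<\infty$ and every fixed $w\neq 0$; substituting this into the change of variables and using Fubini in $w$ yields $|\mathcal{F}^\infty_t|=0$.

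\textbf{Main obstacle.} The hard part will be the careful treatment of the exceptional set on which $\Psi$ fails to be a bijection: tangential entries $w\cdot n(y)=0$, rolling arcs on $\partial\mathcal{U}$, and $\Omega_{stop}$ data. For each fixed $y\in\partial\mathcal{U}$ the tangential velocities form a hyperplane of $\R^N$, so the $C^1$ regularity of $\partial\mathcal{U}$ makes this locus $d\sigma\,dw$-negligible, and Lebesgue-measure-preservation of the free flow transports it to a negligible subset of $\bar{\mathcal{U}}\times\R^N$. The $\Omega_{stop}$ configurations are ruled out by Proposition \ref{prop:stop}, and rolling trajectories are confined to a codimension-one subset of phase space. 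A final delicate point will be to verify that Tabachnikov's theorem is available with sufficient uniformity in $w$ to justify the Fubini slicing above.
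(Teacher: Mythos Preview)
Your proposal is correct and follows the same route as the paper: both use that consecutive strict rebounds are separated by a positive time for countability, and both push Tabachnikov's boundary null set forward along the free-flow foliation for almost-sure finiteness (the paper parametrizes by $(t,x,v)\mapsto(x+tv/\|v\|,v)$ and simply bounds the Jacobian, whereas you use the standard billiard disintegration $dx\,dv=|w\cdot n(y)|\,ds\,d\sigma(y)\,dw$). On the ``delicate point'' you flag at the end: Tabachnikov's lemma is naturally a statement about the \emph{joint} boundary measure on $\Lambda^-$ rather than about each fixed-$w$ slice, but that joint form is exactly what the change of variables consumes, so no slicing in $w$ is needed and the concern dissolves.
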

\bigskip

\bigskip
\begin{proof}[Proof of Proposition $\ref{prop:nbrebounds}$]
The fact that there is countably many rebounds without rolling comes directly from the fact that $t_{min}(x,v)>0$ except if $(x,v)$ is a starting/stopping point (and then did not move from $0$ to $t$) or if $(x,v)$ is the outcome of a rebound (and so comes from $(x,\mathcal{R}_x(v))$ which belongs to $\Omega_{line}$, implying that $t_{min}(x,\mathcal{R}_x(v)) >0$).

\bigskip
Now we shall prove that the set of points in $\bar{\Omega}\times \R^d$ which lead to an infinite number of rebounds in a finite time is of measure $0$. To do so, we first need some definitions. The measure  $\mu$ in $\bar{\Omega}\times \R^d$  is the one induced by the Lebesgue measure and we denote by $\lambda$ the measure on $\partial\Omega\times\R^d$ (see section $1.7$ of \cite{Ta}).
\\ We will also denote
\begin{eqnarray*}
\Omega &=& \left\{(x,v) \in \Omega\times \left(\R^d-\{0\}\right) \:\:\mbox{coming from an infinite number of rebounds}\right\},
\\ \Omega_\partial &=& \left\{(x,v) \in \partial\Omega\times \left(\R^d-\{0\}\right) \:\:\mbox{coming from an infinite number of rebounds}\right\}.
\end{eqnarray*}
We know (\cite{Ta} Lemma $1.7.1$) that $\lambda(\Omega_\partial) = 0$ and we are going to establish a link between the measure of $\Omega$ and the one of $\Omega_{\partial}$. Those two sets do not live in the same topology nor same dimension and so we build a function that artificially recreates them via time.
\par Because $\Omega$ is bounded we can find time $T_M>0$ such that for all $x$ in $\bar{\Omega}$ and $v$ in $\R^d-\{0\}$, $\left(x-T_M v/\norm{v}\right)$ does not belong to $\bar{\Omega}$. Furthermore, in the same way as for $t_{min}(x,v)$, we can define, for $(x,v)$ in $\bar{\Omega}\times \R^d$, 
$$T(x,v) = \left\{\begin{array}{rl} &\displaystyle{\min\{t >0 :\: x+vt \in \partial\Omega\}\:\mbox{if}\: (x,v) \in \Omega \cup \Omega_{rebounds}} \vspace{2mm} \\\vspace{2mm} & \displaystyle{0 \:\mbox{otherwise}} \end{array}\right..$$
We define the following function which is clearly $C^1$.
$$\function{F}{[0,T_M]\times\R^d\times\left(\R^d-\{0\}\right)}{\R^d\times\left(\R^d-\{0\}\right)}{(t,x,v)}{(x+\frac{v}{\norm{v}}t,v).}$$
We also define the set
$$B = \left\{(t,x,v): \: x \in \partial\Omega, \: v \in (\R^d-\{0\}), \: t \in [0,T(x,v))\right\}.$$
and claim that  $F$ is injective on the set $B$. Indeed, if $(t,x,v)$ and $(t^*,x^*,v^*)$ are in $B$ such that $F(t,x,v)=F(t^*,x^*,v^*)$ then $v=v^*$ and $x + tv/\norm{v} = x^* + t^*v/\norm{v}$.
\\Let assume that $t^*>t$, therefore we have that 
$$x = x^*+ (t^*-t)\frac{v}{\norm{v}} \in \partial\Omega$$
and thus $t^*-t \geq T(x^*,v)$. However, $t^* \leq T(x^*,v)$ so we reach a contradiction and $t^*\leq t$. By symmetry we have $t=t^*$ and then $x=x^*$. We also notice that $[0,T_M]\times\Omega_{stop}$ and $[0,T_M]\times\Omega_{rolling}$ do not intersect $B$.
\par Finally we have that $\Omega = F\left(B\cap \left([0,T_M]\times\Omega_{\partial}\right)\right)$.
Indeed, if $(t,x,v)$ belongs to $B\cap \left([0,T_M]\times\Omega_{\partial}\right)$ then $F(t,x,v) = (x+ tv/\norm{v},v)$ and $x+tv/\norm{v}$ is in $\Omega$ and its first rebound backward in time is $(x,v)$ which lead to infinitely many rebounds in finite time. Therefore
$$x+t\frac{v}{\norm{v}} \in \Omega.$$
The converse is direct, by considering the first collision with the boundary of the backward trajectory starting at $(x,v)$ in $\Omega$.

\bigskip
All those properties allow us to compute $\mu(\Omega)$ by a change of variable in $B \cap \Omega_{\partial}$.

\begin{eqnarray*}
\mu(\Omega) &=& \mu(F\left(B\cap \left([0,T_M]\times\Omega_{\partial}\right)\right))
 \\                  &=&  \int_{\bar{\Omega}\times \R^d} \mathbf{1}_{F\left(B\cap \left([0,T_M]\times\Omega_{\partial}\right)\right)}(x,v)dxdvdt
  \\                 &=&  \int_{B\cap \left([0,T_M]\times\Omega_{\partial}\right)}\abs{\mbox{Jac}(F^{-1})} d\lambda(x,v)dt
  \\                 &\leq& T_M \sup\limits_{[0,T_M]\times \bar{\Omega}}\left(\abs{\mbox{Jac}(F^{-1})} \right)\lambda(\Omega_{\partial}) =0.
\end{eqnarray*}
\end{proof}
\bigskip


\subsection{Description of characteristics}\label{subsec:characteristics}

In the previous section we derived all the relevant properties of when, where and how a trajectory can bounce against the boundary of $\Omega$. As was shown, the characteristic starting from a point $(t,x,v)$ in $\R^+\times\bar{\Omega}\times \R^d$ is the backward trajectory satisfying specular boundary reflections that leads to $(x,v)$ in time $t$. Basically, it consists in a straight line as long as it stays inside $\Omega$ or it rolls on the boundary. Then it reaches a boundary point where it does not move any more ($\Omega_{stop}$) or bounces back ($\Omega_{rebounds}$).
\par Thanks to Proposition $\ref{prop:nbrebounds}$ we can generate the countable (and almost surely finite) sequence of collisions with the boundary associated to the future point $(x,v)$. We shall construct it by induction. We consider $(x,v)$ in $\bar{\Omega}\times\R^d$.

\begin{itemize}
\item \textbf{Step $1$: initialisation: } we define
$$\left\{\begin{array}{rl} \displaystyle{x_0(x,v) }&\displaystyle{= x,}\vspace{2mm}\\\vspace{2mm} \displaystyle{v_0(x,v) }&\displaystyle{=v,} \vspace{2mm} \\\vspace{2mm} \displaystyle{t_0(x,v) }&\displaystyle{= 0.}\end{array}\right.$$
\item \textbf{Step $2$: induction: } if $(x_k(x,v),v_k(x,v)) \in \Omega_{stop}$ then we define
$$\left\{\begin{array}{rl} \displaystyle{x_{k+1}(x,v) }&\displaystyle{= x_{k}(x,v),}\vspace{2mm}\\\vspace{2mm} \displaystyle{v_{k+1}(x,v) }&\displaystyle{=v_{k}(x,v),} \vspace{2mm} \\ \vspace{2mm}\displaystyle{t_{k+1}(x,v) }&\displaystyle{= +\infty,}\end{array}\right.$$
if $(x_k(x,v),v_k(x,v)) \notin \Omega_{stop}$ then we define
$$\left\{\begin{array}{rl} \displaystyle{x_{k+1}(x,v) }&\displaystyle{= x_k(x,v)- v_k(x,v)t_{min}(x_k(x,v),v_k(x,v)),} \vspace{2mm}\\\vspace{2mm} \displaystyle{v_{k+1}(x,v) }&\displaystyle{=\mathcal{R}_{x_{k+1}(x,v)}(v_{k}(x,v)),} \vspace{2mm} \\ \vspace{2mm} \displaystyle{t_{k+1}(x,v) }&\displaystyle{= t_k (x,v) +  t_{min}(x_k(x,v),v_k(x,v)).}\end{array}\right.$$
\end{itemize}

\bigskip
\begin{remark}
Let us make a few comments on the accuracy of the sequence we just built.
\begin{enumerate}
\item Looking at Proposition $\ref{prop:tmin}$, we know that at each step (apart from $0$) we necessary have that $(x_k(x,v), v_k(x,v))$ belongs to either $\Omega_{stop}$ or $\Omega_{rebounds}$ and so the characteristic stops for ever (case $1$ in induction)  or bounces without rolling and start another straight line (case $2$). Thus the sequence of footprints defined above captures the trajectories as long as there are rebounds and then becomes constant once the trajectory reach a stopping point.
\item If $t_{min}(x_k(x,v),v_k(x,v))=0$ for some $k>0$ then, by properties $2.$ and $3.$ of Proposition $\ref{prop:tmin}$, we must have $(x_k(x,v),v_k(x,v)) \in \Omega_{stop}$ (since $v_k(x,v)$ is the specular reflection at $x_k(x,v)$ of $v_{k-1}(x,v)$ and $(x_k(x,v),v_{k-1}(x,v))$ is in $\Omega_{rebounds} \cup \Omega_{stop}$). Thus, $(t_k(x,v))_{k \in \N}$ is strictly increasing as long as it does not reach the value $+\infty$, where it remains constant.
\end{enumerate}
\end{remark}
\bigskip

Finally, it remains to connect the time variable to those quantities. In fact, the time will determine how many rebounds can lead to $(x,v)$ in a time $t$. The reader must remember that the backward trajectory can lead to a point in $\Omega_{stop}$ before time $t$.
\par Since the characteristics method helps us to find the value of the solution of the transport equation at a given point using its trajectory, the next definition links a triplet $(t,x,v)$ to the first rebound of the trajectory that leads to $(x,v)$ in a time $t$.

\bigskip
\begin{defi}\label{def:solution}
Let $\Omega$ be an open, bounded and $C^1$ domain in $\R^d$.
\\ Let $(t,x,v)$ be in $\R^+\times\bar{\Omega}\times\R^d$. Then we can define
$$ n(t,x,v) =\left\{\begin{array}{l} \displaystyle{\max\{k\in \N: \: t_k(x,v) \leq t \},} \:\:\mbox{if it exists}, \vspace{2mm}\\\vspace{2mm} \displaystyle{+\infty,} \:\:\mbox{if $(t_k(x,v))_k$ is bounded by $t$.}\end{array}\right.$$
The last rebound is then define by
\begin{itemize}
\item if $n(t,x,v) <+\infty$ and $t_{n(t,x,v)+1}=+\infty$, then 
$$\left\{\begin{array}{l} \displaystyle{x_{fin}(t,x,v) = x_{n(t,x,v)}(x,v),}\vspace{2mm}\\\vspace{2mm} \displaystyle{v_{fin}(t,x,v) = v_{n(t,x,v)}(x,v),}\vspace{2mm} \\\vspace{2mm} \displaystyle{t_{fin}(t,x,v) = t,}\end{array}\right.$$

\item if $n(t,x,v) <+\infty$ and $t_{n(t,x,v)+1}<+\infty$, then 
$$\left\{\begin{array}{l} \displaystyle{x_{fin}(t,x,v) = x_{n(t,x,v)}(x,v),}\vspace{2mm}\\\vspace{2mm} \displaystyle{v_{fin}(t,x,v) = v_{n(t,x,v)}(x,v),} \vspace{2mm} \\ \vspace{2mm} \displaystyle{t_{fin}(t,x,v) = t_{n(t,x,v)}(x,v),}\end{array}\right.$$

\item if $n(t,x,v) =+\infty$,  then
$$\left\{\begin{array}{l} \displaystyle{x_{fin}(t,x,v) = \lim\limits_{k\to +\infty} x_{k}(x,v),} \vspace{2mm}\\\vspace{2mm} \displaystyle{v_{fin}(t,x,v) = \lim\limits_{k\to +\infty}v_{k}(x,v),} \vspace{2mm} \\ \vspace{2mm} \displaystyle{t_{fin}(t,x,v) = \lim\limits_{k \to +\infty}t_{k}(x,v).}\end{array}\right.$$
\end{itemize}
\end{defi}

\bigskip
\begin{remark}\label{remark:endtrajectory}
Let us make a few comments on the definition above and the existence of limits.
\begin{enumerate}
\item After the last rebound, occuring at $t_{n(t,x,v)}$, the backward trajectory can only be a straight line during the time period $t-t_{n(t,x,v)}$ (see Fig.$\ref{fig:bouncing}$). That is why we defined $t_{fin}(t,x,v) = t_{n(t,x,v)}$ if we reached a point on $\Omega_{rebounds}$ and $t_{fin}(t,x,v) = t$ if the last rebound reaches $\Omega_{stop}$ (the trajectory can only start from there).
\item In the last case of the definition, we remind the reader that $(t_k(x,v))_{k\in \N}$ is strictly increasing and so converges if bounded by $t$. But then, because $(\norm{v_{k}(x,v)})_{k\in\N}$ is constant and $x_k(x,v)=x_{k-1}(x,v)-t_{min}(x_k(x,v),v_{k}(x,v))v_{k}(x,v)$, we have that $(x_k(x,v))_{k\in\N}$ is a Cauchy sequence.
\item The last case in Definition $\ref{def:solution}$ almost surely never happens, as proved in Proposition $\ref{prop:nbrebounds}$.
\end{enumerate}
\end{remark}
\bigskip

To conclude this study of the characteristics we just have to make one more comment. We studied the characteristics that go backward in time because it simplifies the construction of a solution to the free transport equation. However, it is easy to prove (just requires the inductive construction of $v_k$ and $x_k$) that the forward trajectory of $(x,v)$ during a period $t$ is the backward trajectory over a period $t$ of $(x,-v)$. This gives the final proposition.

\bigskip
\begin{prop}\label{prop:characteristics}
Let $\Omega$ be an open, bounded and $C^1$ domain in $\R^d$.
Then for all $(x,v)$ in $\bar{\Omega}\times \R^d$ we have existence and uniqueness of the characteristic $(X_t(x,v),V_t(x,v))$ given by, for all $t\geq 0$,

\begin{eqnarray*}
X_t(x,v) &=& x_{fin}(t,x,-v)+(t-t_{fin}(t,x,-v))v_{fin}(t,x,-v),
\\V_t(x,v) &=& -v_{fin}(t,x,-v).
\end{eqnarray*}
Moreover, we have that $V_t(x,v) = O_{t,x,v}(v)$ with $O_{t,x,v}$ an orthogonal transformation, and that for almost every $(x,v)$ in $\bar{\Omega}\times \R^d$ we have the following

\begin{equation}\label{bijectioncharacteristics}
\forall t\geq 0,\quad (x,v) = (X_t(X_t(x,-v),-V_t(x,-v)),V_t(X_t(x,-v),-V_t(x,-v))).
\end{equation}
\end{prop}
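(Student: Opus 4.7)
The plan is to build the forward trajectory explicitly from the backward construction already carried out in Section \ref{subsec:characteristics}, verify that it satisfies the defining ODE and boundary conditions, and then exploit the natural time-reversal symmetry $(v \mapsto -v)$ of the free transport equation to deduce the bijection formula.

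First I would fix $(x,v) \in \bar{\mathcal{U}} \times \R^N$ and apply the inductive construction of Section \ref{subsec:characteristics} to the pair $(x,-v)$, producing the sequence $(x_k,v_k,t_k)_{k}$ together with $(x_{fin},v_{fin},t_{fin})$ from Definition \ref{def:solution}. Then I would \emph{define} the candidate characteristic by the stated formula
$$X_t(x,v) = x_{fin}(t,x,-v) + (t - t_{fin}(t,x,-v))\,v_{fin}(t,x,-v),\quad V_t(x,v) = -v_{fin}(t,x,-v),$$
and check, piece by piece along the intervals $[t-t_{k+1},t-t_k]$, that this candidate solves $\dot X = V$, $\dot V = 0$ inside $\mathcal{U}$, matches the initial condition $(X_0,V_0)=(x,v)$ (since $t_0 = 0$, $x_0=x$, $v_0=-v$, whence $V_0 = v$ and $X_0 = x$), and flips the normal component of velocity at each boundary point $x_k(x,-v)$ via $\mathcal{R}_{x_k}$. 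Cases to treat separately: the generic rebound pieces ($\Omega_{rebounds}$), the rolling segments ($\Omega_{rolling}$) where Proposition \ref{prop:tmin}(1) guarantees the trajectory stays in $\bar{\mathcal{U}}$ while moving tangentially so $\mathcal{R}_x(v)=v$ and no genuine reflection occurs, and the terminal situation where a point of $\Omega_{stop}$ is hit (in which case $t_{n+1} = +\infty$, $t_{fin}=t$ and the trajectory is a straight line from that stopping point).

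For uniqueness I would argue that any two curves $(X,V)$ and $(\tilde X,\tilde V)$ in $\bar{\mathcal{U}}\times \R^N$ satisfying the initial data, $\dot X = V$, $\dot V = 0$ in $\mathcal{U}$, and specular boundary condition \eqref{transportBC}, must coincide: on the first open interval before hitting $\partial\mathcal{U}$ both are affine with the same initial data, so they agree; at a collision time the specular reflection $\mathcal{R}_x$ is a deterministic map, so they continue to agree; iterating over the (at most countable) sequence of collision times from Proposition \ref{prop:nbrebounds} pins down the whole trajectory. The orthogonality of $V_t(x,v)=O_{t,x,v}(v)$ follows immediately because each $\mathcal{R}_{x_k}$ is an orthogonal involution of $\R^N$ and $V_t$ is obtained from $v$ by the composition of a finite (or countable) number of such reflections — in particular $\|V_t(x,v)\| = \|v\|$.

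Finally, for the bijection relation \eqref{bijectioncharacteristics} I would use Proposition \ref{prop:nbrebounds}: outside a Lebesgue-null subset of $\bar{\mathcal{U}}\times \R^N$ the backward trajectory from $(x,v)$ has only finitely many rebounds on $[0,t]$ and never reaches $\Omega_{stop}$. On this full-measure set, the sequence of rebound points obtained from $(X_t(x,-v),-V_t(x,-v))$ is precisely the reversed sequence associated with $(x,v)$, and applying the explicit formula a second time retraces the trajectory back to $(x,v)$. The main obstacle, which is why the identity only holds almost everywhere, is exactly the pathological set excluded by Proposition \ref{prop:nbrebounds}: trajectories that accumulate infinitely many rebounds in finite time, that terminate in $\Omega_{stop}$, or that contain rolling arcs against $\partial\mathcal{U}$ — on these the inductive construction degenerates (e.g.\ $t_{n+1}=+\infty$) and the naive inversion $(x,v)\mapsto(X_t(x,-v),-V_t(x,-v))\mapsto \cdots$ can fail to be a genuine involution. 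Handling these cases is where the geometric measure-theoretic input from Section \ref{subsec:rebounds} is indispensable.
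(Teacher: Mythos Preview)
Your proposal is correct and follows essentially the same route as the paper's own proof, only in considerably more detail: the paper dispatches the orthogonality of $O_{t,x,v}$ in one line as the composition $\mathcal{R}_{x_{fin}}\circ\cdots\circ\mathcal{R}_{x_1}$, treats the explicit formula and the ODE/boundary verification as ``by construction'', and for \eqref{bijectioncharacteristics} simply invokes the time-reversal symmetry together with Proposition~\ref{prop:nbrebounds} to discard the null set where $\Omega_{stop}$ is reached. Your piecewise verification on the intervals $[t-t_{k+1},t-t_k]$ and your uniqueness argument by induction over collision times make explicit what the paper leaves implicit, but the underlying ideas are identical.
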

\bigskip

\bigskip
\begin{proof}[Proof of Proposition $\ref{prop:characteristics}$]
By construction we have that 
$$O_{t,x,v} = \mathcal{R}_{x_{fin}(t,x,v)} \circ \dots \circ \mathcal{R}_{x_{1}(t,x,v)}.$$

It only remains to show the last equation $\eqref{bijectioncharacteristics}$, but it follows directly from the fact that the backward trajectory of $(x,v)$ is the forward trajectory of $(x,-v)$. 
\par We can reach a point on $\Omega_{stop}$ after a time $t_1$ and so the forward trajectory of that point during a time $t>t_1$ does not come back to the original point (since we stayed in $\Omega_{stop}$ for a period $t-t_1$).
\par However, the set of points that reach $\Omega_{stop}$ belongs to the set of points that bounce infinitely many times in a finite time and this set is of measure zero (see Proposition $\ref{prop:nbrebounds}$).
\end{proof}
\bigskip


\subsection{Existence and uniqueness of solution to $\eqref{transporteq}-\eqref{transportBC}$}\label{subsec:transport}

\subsubsection{Proof of uniqueness}
The uniqueness of a solution with $u_0$ in $C^1_x \cap L^2_{x,v}$ comes directly from the fact that we have a preserved quantity through time, thanks to the specular reflection property. Indeed, let us assume that $u$ is a solution to our free transport equation satisfying specular boundary condition and the initial value problem $u_0$. Then, a mere integration by part gives us
$$\forall t\geq 0, \quad\norm{u(t,\cdot,\cdot)}^2_{L^2_{x,v}}=\norm{u_0}^2_{L^2_{x,v}},$$
which directly implies the uniqueness of a solution, since the transport equation $\eqref{transporteq}$ is linear.


\subsubsection{Construction of the solution}

It remains to construct a function $u$ that will be constant on the characteristic trajectories and check that we indeed obtain a function that is differentiable in $t$ and $x$ which satisfies the transport equation. The first point of Remark $\ref{remark:endtrajectory}$ gives us the answer as we expect the following behaviour 
$$u(t,x,v) = u(t-t_1(x,v),x_1(x,v),v_1(x,v)) =\dots= u(t-t_k(x,v),x_k(x,v),v_k(x,v)),$$
up to the point where there are no more rebound in the time interval $[0,t]$. From there we continue in a straight line.
\par Thus, we define: $\forall (t,x,v) \in \R^+\times\bar{\Omega}\times\R^d,$
$$u(t,x,v) = u_0\left(x_{fin}(t,x,v)-(t-t_{fin}(t,x,v))v_{fin}(t,x,v),v_{fin}(t,x,v)\right).$$


\subsubsection{Boundary and initial conditions}

First of all, $u$ satisfies the initial condition $\eqref{transportCI}$ as $n(0,x,v)=0$ (since $t_{min}(x,v)\geq 0$).
\par $u$ also satifies the specular boundary condition $\eqref{transportBC}$. Indeed, if $(x,v)$ is in  $\Lambda^+$, then either $v\cdot n(x)=0$ and the result is obvious since $\mathcal{R}_x(v)=v$, or $v\cdot n(x) > 0$ and thus $(x,\mathcal{R}_x(v))$ belongs to $\Omega_{rebounds}$ so $t_{min}(x,\mathcal{R}_x(v))=0$ (Proposition $\ref{prop:tmin}$). An easy induction shows 
$$x_{k}(x,v)=x_{k+1}(x,\mathcal{R}_x(v)),\:v_{k}(x,v)=v_{k+1}(x,\mathcal{R}_x(v)),\: t_{k}(x,v)=t_{k+1}(x,\mathcal{R}_x(v)),$$
for all $k$ in $\N$.
\\The last equality gives us that $n(t,x,v)=n(t,x,\mathcal{R}_x(v))-1$ and therefore, combined with the two other equalities,
$$x_{fin}(t,x,v)=x_{fin}(t,x,\mathcal{R}_x(v)), \: v_{fin}(t,x,v)=v_{fin}(t,x,\mathcal{R}_x(v)),$$ $$t_{fin}(t,x,v)=t_{fin}(t,x,\mathcal{R}_x(v)),$$
which leads to the specular reflection boundary condition.


\subsubsection{Time differentiability}

Here we prove that $u$ is differentiable in time on $\R^+$. Let us fix $(x,v)$ in $\Omega\times\R^d$.
\\ By construction, we know that $n(t,x,v)$ is piecewise constant. Since $(t_k(x,v))_{k\in \N}$ is strictly increasing up to the step where it takes the value $+\infty$, for $t_{k}(x,v) <t<t_{k+1}(x,v)$ we have that for all $s \in \R$ such that $t_{k}(x,v) <t+s<t_{k+1}(x,v),$
$$x_{fin}(t,x,v)=x_{fin}(t+s,x,v), \: v_{fin}(t,x,v)=v_{fin}(t+s,x,v),$$ $$t_{fin}(t,x,v)=t_{fin}(t+s,x,v).$$
Therefore, we have that

\begin{eqnarray*}
&&\frac{u(t+s,x,v)-u(t,x,v)}{s} 
\\&&= \frac{u_0(x_{fin}-(t+s-t_{fin})v_{fin},v_{fin})-u_0(x_{fin}-(t-t_{fin})v_{fin},v_{fin})}{s}
\\                                       &&\underset{s \to 0}{\to} -v_{fin}\cdot \left(\nabla_x u_0\right)(x_{fin}-(t-t_{fin})v_{fin},v_{fin}),
\end{eqnarray*}
because $u_0$ is $C^1$ in $x$. So $u$ is differentiable at $t$ if $t$ in strictly between two times $t_k(x,v)$. We thus find that $u$ is differentiable at $t$ and that its derivative is continuous (since $x_{fin}$, $v_{fin}$ and $t_{fin}$ are continuous when $x$ and $v$ are fixed).
\par In the case $t=t_k(x,v)$ we can use what we just proved to show that we have the existence of right (except for $t=0$) and left limits of $\partial_t u (t,x,v)$ as $t$ tends to $t_k(x,v)$. We use the specular reflection boundary condition of $u_0$ together with the fact that it is $C^1$ in $x$ and that $t_k(x,v)=t_{k+1}(x,\mathcal{R}_x(v))$ to obtain the equality of the two limits.


\subsubsection{Space differentiability and solvability of the transport equation}

Here we prove that $u$ is differentiable in $x$ in $\Omega$, which follows directly from the time differentiability. Let us fix $t$ in $\R^+$ and $v$ in $\R^d$, we shall study the differentiability of $u(t,\cdot,v)$ in the direction of $v$.
\par $\Omega$ is open and so
$$\forall x \in \Omega, \:\exists \delta>0,\:\forall s \in [-\delta,\delta],\quad x+sv \in \Omega.$$
Thanks to the inductive construction, one find easily that
$$u(t,x+sv,v) = u(t-s,x,v).$$
Therefore, since $u$ is time differentiable, we have that $u(t,\cdot,v)$ admits a directional derivative in the direction of $v$ and that
$$\mbox{D}_{x}(v)(u)(t,x,v) = - \partial_t u(t,x,v).$$

\bibliographystyle{acm}
\bibliography{bibliography}


\signmb

\end{document}